\newcommand{\Occ}{\mathit{Occ}}
\newlength\savedwidth
\newcommand{\LZ}{\mathsf{LZ}}
\newcommand{\SE}[2]{\mathit{SE}_{#1}^{#2}}
\newcommand{\encblock}{\mathit{LC}}
\newcommand{\encpow}[1]{\mathit{RLE}(#1)}
\newcommand{\deltaLR}[1]{\Delta_{#1}}
\newcommand{\assign}{\mathit{Assgn}}
\newcommand{\leave}{\mathit{leave}_{\mathcal{X}}}
\newcommand{\locus}{\mathit{locus}_{\mathcal{X}}}
\newcommand{\pathfunc}{\mathit{path}_{\mathcal{X}}}
\newcommand{\child}{\mathit{child}_{\mathcal{X}}}
\newcommand{\suffixlink}{\mathit{slink}_{\mathcal{X}}}
\newcommand{\cOcc}{\mathit{cOcc}}
\newcommand{\occ}{\mathit{occ}}
\newcommand{\Insert}{\mathit{insert}}
\newcommand{\Delete}{\mathit{delete}}
\newcommand{\dep}{\mathit{dep}}
\newcommand{\code}{\mathit{C}}
\newtheorem{theorem}{Theorem}
\newtheorem{lemma}[theorem]{Lemma}
\newtheorem{observation}[theorem]{Observation}
\newtheorem{example}[theorem]{Example}
\title{A compressed dynamic self-index for highly repetitive text collections}
 \author{
   Takaaki Nishimoto$^1$\quad
   Yoshimasa Takabatake$^2$ and
   Yasuo Tabei$^1$\\
  {$^1$ RIKEN Center for Advanced Intelligence Project, Chuo-ku, Tokyo, Japan}\\
  {\texttt{\{takaaki.nishimoto, yasuo.tabei\}@riken.jp}}\\
  {$^2$ Kyushu Institute of Technology, Fukuoka, Japan}\\
  {\texttt{takabatake@ai.kyutech.ac.jp}}\\
 }
\date{}
\begin{document}
\maketitle

\begin{abstract}
We present a novel compressed dynamic self-index for highly repetitive text collections. Signature encoding, an existing self-index of this type, has a large disadvantage of slow pattern search for short patterns. We obtain faster pattern search by leveraging the idea behind a truncated suffix tree (TST) to develop the first compressed dynamic self-index, called the {\em TST-index}, that supports not only fast pattern search but also dynamic update operations for highly repetitive texts. Experiments with a benchmark dataset show that the pattern search performance of the TST-index is significantly improved.

\end{abstract}

\section{Introduction}~\label{sec:intro}
A {\em highly repetitive text collection} is a set of texts such that any two texts can be converted into each other with a few modifications. Such collections have become common in various fields of research and industry. Examples include genome sequences for the same species, version-controlled documents, and source code repositories. For human genome sequences, it is said that the difference between individual human genomes is around 0.1\%, and there is a huge collection of human genomes, such as the 1000 Genome Project~\cite{genome1000}. As another example, Wikipedia belongs to the category of version-controlled documents. There is clearly a strong, growing need for powerful methods to process huge collections of repetitive texts.

A {\em self-index} is a data representation for a text with support for random access and pattern search operations. Quite a few self-indexes for highly repetitive texts have been proposed thus far. Early methods include the SLP-index~\cite{DBLP:conf/spire/ClaudeN12a} and LZ-end~\cite{DBLP:journals/tcs/KreftN13}. The block tree index (BT-index)~\cite{DBLP:conf/spire/Navarro17} is a recently proposed self-index that reaches compression close to that of Lempel-Ziv. The run-length FM-index~(RLFM-index)~\cite{DBLP:journals/corr/GagieNP17} is a recent breakthrough built on the notion of a run-length-encoded Burrows-Wheeler transform~(BWT). The RLFM-index can be constructed in a space-efficient, online manner while supporting fast operations. Although existing self-indexes for highly repetitive texts are constructible in a compressed space and also support fast operations, there are no proposed methods supporting dynamic updating of self-indexes for editing a set of texts with a highly repetitive structure. Thus, an important open challenge is to develop a self-index for highly repetitive texts that supports not only fast operations for random access and pattern search but also dynamic updating of the index. 

The {\em ESP-index}~\cite{DBLP:conf/wea/TakabatakeTS14} and {\em signature encoding}~\cite{DBLP:conf/stringology/NishimotoIIBT16} are two self-indexes using the notion of a {\em locally consistent parsing~(LCPR)}~\cite{DBLP:journals/algorithmica/MehlhornSU97} for highly repetitive texts. While they have the advantages of being constructible in an online manner and supporting dynamic updates of data structures, they have a large disadvantage of slow pattern search for short patterns. From a given string, these methods build a parse tree that guarantees upper bounds for parsing discrepancies between different appearances of the same substring. For pattern search, the ESP-index performs top-down search of the parse tree to find candidate pattern appearances, and then it checks whether the pattern does occur around each candidate. In contrast, signature encoding uses a 2D range of reporting queries for pattern search. Traversing the search space, especially for short patterns, takes a long time for both methods, which limits their applicability in practice.

\smallskip

In this paper, we present a novel, compressed dynamic index for highly repetitive text collections, called a {\em truncated suffix tree- (TST-) based index (TST-index)}. The TST-index improves on existing self-indexes on an LCPR to support faster pattern search by leveraging the idea behind a {\em $q$-truncated suffix tree~(TST)}~\cite{DBLP:journals/tcs/NaAIP03}. A $q$-TST is built from an input text, and a self-index is created for the transformed text by using the $q$-TST, which greatly improves the search performance. In addition, the TST-index supports dynamic updates, which is a useful property when input texts can be edited, as noted above. Experiments with a benchmark dataset of highly repetitive texts show that pattern search with the TST-index is much faster than that with signature encoding. In addition, the TST-index has pattern search performance competitive with that of the RLFM-index, yet the size of the TST-index can be smaller than that of the RLFM-index.

\section{Preliminaries} \label{sec:preliminary}
Let $\Sigma$ be an ordered alphabet and $\sigma = |\Sigma|$. Let string $T$ be an an element in $\Sigma^*$, with its length denoted by $|T|$. In general, a string $P \in \Sigma^*$ is called a pattern. Let $N=|T|$ and $m=|P|$ throughout this paper. For string $T = xyz$, $x$, $y$, and $z$ are called a prefix, substring, and suffix of $T$, respectively. For two strings $x$ and $y$, let $x \cdot y = xy$. 

The empty string $\varepsilon$ is a string of length $0$. Let $\Sigma^+ = \Sigma^* - \{\varepsilon\}$. For any $1 \leq i \leq |T|$, $T[i]$ denotes the $i$-th character of $T$. For any $1 \leq i \leq j \leq |T|$, $T[i..j]$ denotes the substring of $T$ that begins at position $i$ and ends at position $j$. Let $T[i..] = T[i..|T|]$ and $T[..i] = T[1..i]$ for any $1 \leq i \leq |T|$. For strings $T$ and $K$ and indexes $i$ and $k$, we define insertion and deletion operations as follows: $insert(T,i,K) = T[..i-1]\cdot K \cdot T[i..]$ and $delete(T, i, k)=T[..i-1]\cdot T[i+k-1..]$. For any strings $P$ and $T$, let $\Occ(P,T)$ denote all occurrence positions of $P$ in $T$; that is, $\Occ(P, T) = \{i \mid P = T[i..i+|P|-1], 1 \leq i \leq |T|-|P|+1\}$. Let $occ=|Occ(P,T)|$. Similarly, $\cOcc(c, T) = \Occ(c, T)$ for $c \in \Sigma$.

For a string $P$ and integer $q$, we say that $P$ is a $q$-\emph{gram} if $|P| = q$. We then say that $P$ is a $q$-\emph{short pattern} if $|P| \leq q$, or a $q$-\emph{long pattern} if $|P| > q$. Similarly, we say that a suffix of length at most $q$ of $P$ is a $q$-\emph{short suffix} of $P$. Let $\Sigma_{T}^{q}$ be the set of all substrings of length $q$ and all suffixes of length at most $q$ in $T$, i.e., $\Sigma_{T}^{q} = \{T[i..\min \{i+q-1, |T| \}] \mid i \in 1 \leq i \leq |T| \}$. For example, if $T = {\rm babababbabab}\$$ and $q = 4$, then $\Sigma_{T}^{q} = \{{\rm \$, ab\$, abab, abba, b\$, bab\$, baba, babb, bbab} \}$. For a string $T \in \Sigma^{+}$, we say that $T$ is a \emph{$|\Sigma|$-colored sequence} if $T[i] \neq T[i+1]$ holds for any $1 \leq i < |T|$.

For a string $T$ and an integer $1 < i \leq |T|$, the longest prefix $f$ of $T[i..]$ such that $f$ occurs in $T[..i]$ is called the \emph{longest previous factor without self-reference} at position $i$ of $T$. Furthermore, $f_1, \ldots, f_d$ is called a \emph{factorization} of $T$ if $f_1 \cdots f_d = T$ holds, where factor $f_i$ for each $i\in \{1,2,...,d\}$ is a substring of $T$ and $d$ is the size of the factorization. Then, the run-length encoding $RLE(T)$ is a factorization of string $T$ such that each factor is a maximal run of the same character in $T$. Such a run is denoted as $a^k$ for length $k$ and character $a\in \Sigma$. For example, for $T=aabbbbbabb$, $RLE(T)=a^2b^5a^1b^2$. $RLE(T)$ is a $|RLE(T)|$-colored sequence when we treat each factor as a character.

Next, the \emph{Lempel-Ziv77 (LZ77) factorization without self-reference}~\cite{LZ77} for a string $T$ is a factorization $\LZ(T) = f_1, \ldots, f_z$ satisfying the following conditions: (1) $f_1 \cdots f_z = T$; (2) $f_1 = T[1]$; (3) $f_{i}$ is the longest previous factor without self-reference at position $\ell = |f_{1} \cdots f_{i-1}|+1$ of $T$, if it exists for $1 < i \leq z$; and (4) otherwise, $f_{i} = T[\ell]$. Hence, $z$ is the number of factors in LZ77, i.e., $z = |\LZ(T)|$. For example, if $T = abababcabababcabababcd$, then $\LZ(T) = a,b,ab,ab,c,abababc,abababc,d$.

A self-index is a data structure built on $T$ and supporting the following operations:
\begin{itemize}
\setlength{\itemsep}{-0.2mm} 
\setlength{\parskip}{-0.7mm} 
\item {\bf Count:} Given a pattern $P$, count the number of appearances of $P$ in $T$. 
\item {\bf Locate:} Given a pattern $P$, return all positions at which $P$ appears in $T$. 
\item {\bf Extract:} Given a range $[i..i+ \ell - 1]$, return $T[i..i + \ell - 1 ]$.
\end{itemize}
Such a self-index is considered static. In contrast, a dynamic self-index supports not only the above three operations but also an update operation on the self-index for insertion and deletion of a (sub)string of length $k$ on string $T$ of maximal length $M \geq N$. A compressed dynamic self-index is a compressed representation of a dynamic self-index. We assume that $M = N$ for a static setting and $M \geq N$ for a dynamic setting. 

Our model of computation is a unit-cost word RAM with a machine word size of $W = \Omega(\log_2 M)$ bits. We evaluate the space complexity here in terms of the number of machine words. A bitwise evaluation of space complexity can be obtained with a $\log_2M$ multiplicative factor. 

Finally, let $\log^{(1)} n = \log_{2} n$, $\log^{(i+1)} n = \log \log^{(i)} n$, and $\log^{*} n = \min \{i \mid \log^{i} n \leq 1, i \geq 1 \}$ for a real positive number $n$. 

\section{Literature Review}
Self-indexes for highly repetitive text collections constitute an active research area, and many methods have been proposed thus far. In this section, we review the representative methods summarized in the upper part of Table~\ref{table:resultcomparison1}. See \cite{DBLP:journals/corr/GagieNP17} for a complete survey.

Self-indexes for highly repetitive text collections are broadly classified into three categories. The first category is a grammar-based self-index. Claude and Navarro~\cite{DBLP:conf/spire/ClaudeN12a} presented the {\em SLP-index}, which is built on a {\em straight-line program (SLP)}, a context-free grammar (CFG) in Chomsky normal form for deriving a single text. For the size $n$ of a CFG built from text $T$, the SLP-index takes $O(n)$ space in memory while supporting locate queries in $O(\frac{m^2}{\epsilon} \log (\frac{\log N}{\log n}) + (m + \occ) \log n)$ time, where $\epsilon \in (0,1]$ is a parameter~\cite{DBLP:conf/spire/ClaudeN12a}.

Two other grammar-based self-indexes, the ESP-index~\cite{DBLP:conf/wea/TakabatakeTS14} and signature encoding~\cite{DBLP:conf/stringology/NishimotoIIBT16}, use  the notion of LCPR. Each takes $w = O(z \log N \log^* M)$ space in memory while supporting locate queries in $O(m f_{\mathcal{A}} + \occ \log N + \log w \log m \log^* M (\log N + \log m \log^* M)$ time, where $f_{\mathcal{A}} = f(w, M)$ and $f(a, b) = O(\min \{\frac{\log\log b \log\log a}{\log\log\log b}, \sqrt{\frac{\log a}{\log\log a}} \})$. Signature encoding also supports dynamic updates in $O((k + \log N \\ \log^* M)\log w \log N \log^* M)$ time for highly repetitive texts. Although the ESP-index and signature encoding have an advantage in that they can be built in an online manner, and although signature encoding also supports dynamic updates, these self-indexes have a large disadvantage in that locate queries are slow for short patterns. 

The second category includes self-indexes based on LZ77 factorization. Recently, Bille et al.~\cite{DBLP:conf/cpm/BilleEGV17} presented a self-index with a time-space trade-off on LZ77. Their self-index takes $O(\hat{z} \log (N/\hat{z}))$ space while supporting locate queries in $O(m(1+ \frac{\log^{\epsilon'} \hat{z}}{\log (N/\hat{z})}) + occ(\log \log N + \log^{\epsilon'} \hat{z}))$ time, where $\hat{z}$ is the number of factors in LZ77 with self-reference on $T$ and $\epsilon' \in (0, 1)$ is an arbitrary constant. Navarro~\cite{DBLP:conf/spire/Navarro17} presented a self-index based on a block tree (BT), called the {\em BT-index}. The BT-index uses $O(z\log{(n/z)})$ space and locates a pattern in $O(m^2\log{(N/z)} + m\log^{\epsilon''}{z} +occ(\log\log{N} + \log^{\epsilon''}{z}))$ time for any constant $\epsilon'' > 0$.

The third category includes the RLFM-index built on the notion of a run-length-encoded BWT. The RLFM-index was originally proposed in \cite{Makinen10} but has recently been extended to support locate queries \cite{DBLP:journals/corr/GagieNP17}. It uses $O(r)$ space for the number $r$ of BWT runs and takes $O(m \log \log_{W} (\sigma + N/r) + \occ \log \log_{W} (N/r))$ time for locate queries. Although the RLFM-index supports fast locate queries in a compressed space, it has a serious issue with its inability to dynamically update indexes. 

Despite the importance of compressed dynamic self-indexes for highly repetitive text collections, no previous method has both supported fast queries and dynamic updates of indexes while achieving a high compression ratio for highly repetitive texts. We thus present a compressed dynamic self-index, called the TST-index, that meets both demands and is applicable to highly repetitive text collections in practice. The TST-index has the following theoretical property, which will be proven over the remaining sections. 
\begin{theorem}\label{theo:dynamic_tstesp0}	
	For a string $T$ and an integer $q$, the TST-index takes space $w' = O(z(q^2 + \log N \log^* M))$ while supporting the following four operations: (i) count queries in $O(m (\log\log \sigma)^{2})$ time for a pattern of length $m \leq q$; (ii) locate queries in $O(m (\log \log \sigma)^{2} + occ \log N)$ time for a pattern of length $m \leq q$;(iii) extract queries in $O(\ell + \log N)$ time; and (iv) update operations in
	$O(f_{\mathcal{B}}(k + q + \log N \log^{*} M) + (k+q)q (\log\log
	\sigma)^{2})$ time, where $f_{\mathcal{B}} = f(w', M)$.
\end{theorem}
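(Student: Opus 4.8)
The plan is to reduce short-pattern search in $T$ to a symbol-range search in a transformed text, answer that search with the signature encoding, and resolve the range with the $q$-TST. Concretely, I would first use the $q$-TST to assign every element of $\Sigma_{T}^{q}$ a distinct identifier, and define the transformed text $T'$ of length $N$ by $T'[i]=\id(T[i..\min\{i+q-1,N\}])$, so that $T'[i]$ names the $q$-gram (or short suffix) starting at position $i$. The signature encoding is then built on $T'$, while the $q$-TST is retained as an auxiliary trie over $\Sigma$ whose leaves are exactly these identifiers in lexicographic order, each node storing the number of positions of $T$ covered by the leaves below it.

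The first key step is the observation that for a pattern $P$ with $m=|P|\le q$, a position $i$ lies in $\Occ(P,T)$ precisely when $P$ is a prefix of the string named by $T'[i]$; since the leaves of the $q$-TST are lexicographically sorted, the identifiers whose strings have $P$ as a prefix form a contiguous range $[\ell_P,r_P]$, namely the leaves under the locus of $P$. Thus count and locate for short $P$ reduce to counting, respectively reporting, the positions $i$ with $T'[i]\in[\ell_P,r_P]$. Navigating the $q$-TST from the root along $P$ produces this locus, and hence the range, in $O(m(\log\log\sigma)^2)$ time with suitable child-indexing structures on the trie; the stored subtree count then gives the count query in the same time, establishing (i), while locate additionally resolves the range into the $occ$ positions through the signature encoding of $T'$ at $O(\log N)$ per occurrence, giving (ii). Extract is immediate: extracting $T'[i..i+\ell-1]$ from the signature encoding in $O(\ell+\log N)$ and reading the first character of each named $q$-gram reconstructs $T[i..i+\ell-1]$, yielding (iii).

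For the space bound I would charge distinct $q$-grams to LZ factor boundaries: every first occurrence of a length-$q$ substring of $T$ must overlap one of the $z$ boundaries of $\LZ(T)$, so $|\Sigma_{T}^{q}|=O(zq)$, and since each identifier stores a string of length at most $q$, the $q$-TST occupies $O(zq^{2})$ space. The signature encoding of $T'$ must then be shown to fit in $O(z\log N\log^{*}M)$; here I would relate $\LZ(T')$ to $\LZ(T)$ through the fact that a block $T'[i..j]$ recurs earlier if and only if $T[i..j+q-1]$ recurs earlier (consecutive $q$-grams overlap in $q-1$ characters and so reconstruct the underlying length-$((j-i)+q)$ substring), so that the copyable runs of $T'$ mirror the factors of $T$ and the $q$-dependence stays confined to the $q$-TST. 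Summing the two contributions gives $w'=O(z(q^{2}+\log N\log^{*}M))$.

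For updates, an insertion or deletion of length $k$ in $T$ alters only the $q$-grams overlapping the edited window, i.e.\ $O(k+q)$ positions of $T'$; I would apply the signature-encoding update at these positions in $O(f_{\mathcal{B}}(k+q+\log N\log^{*}M))$ time, insert each newly created $q$-gram (of length at most $q$) into the $q$-TST, and refresh the affected subtree counts, in $O((k+q)q(\log\log\sigma)^{2})$ time, which yields (iv). The step I expect to be the main obstacle is the space analysis combined with the dynamic maintenance: one must argue that passing to the $q$-gram text does not inflate the signature-encoding factor count beyond the separate $O(z\log N\log^{*}M)$ term, and simultaneously keep the contiguous-range property, the stored counts, and the identifier ordering all consistent as $O(k+q)$ symbols of $T'$ are created and destroyed per edit — it is precisely in this bookkeeping that the $(k+q)q(\log\log\sigma)^{2}$ term and the correctness of the reduction are most delicate.
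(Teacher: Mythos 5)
Your overall architecture (a $q$-TST over $\Sigma_{T}^{q}$ plus a signature encoding of the transformed text $T'=T_q$) matches the paper's, and your update outline is essentially the paper's four-step algorithm. However, there are two genuine gaps. The first is the space bound. Your claim that ``a block $T'[i..j]$ recurs earlier if and only if $T[i..j+q-1]$ recurs earlier'' is true, but it implies the opposite of what you want: because each previous factor loses $q-1$ characters under the transformation, each LZ77 factor of $T$ splits into up to $q$ factors of $T'$, so $|\LZ(T')|=O(zq)$, not $O(z)$. Plugging this into the generic signature-encoding bound (Lemma~\ref{lem:upperbound_signature}) gives only $w'=O(zq\log N\log^* M)$, which is \emph{not} $O(z(q^{2}+\log N\log^{*}M))$ in general (take $\log N\log^* M\approx q^2$: then $zq\log N\log^* M\approx zq^{3}$ versus the claimed $O(zq^{2})$). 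The paper needs, and proves, a genuinely different argument (Theorem~\ref{thorem:tight_upper_bound}, Section~\ref{sec:space}): each symbol of $T_q$ is regarded as depending on the window $T[i..i+q-1]$ of the \emph{original} text, the dependency sets $K(x,t)$ are redefined accordingly with $|K(x,0)|\le q$, and variables are charged to the $z$ factor boundaries of $\LZ(T)$ (not of $T_q$), yielding $O(q+\log N\log^* M)$ variables per boundary and hence $w'=O(z(q+\log N\log^* M))$. Your proposal has no substitute for this step.

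The second gap is your reduction of short-pattern queries to a lexicographic identifier range $[\ell_P,r_P]$. In the dynamic setting this is untenable: when an edit creates a $q$-gram whose lexicographic rank falls strictly between two existing identifiers, keeping identifiers order-consistent forces relabeling, and every relabeled identifier is a character of $T'$ stored inside the signature encoding, so an edit of length $k$ could force rewriting far more than $O(k+q)$ positions of $T'$, destroying the update bound in (iv). Moreover, the dynamic signature encoding (Lemma~\ref{lem:dse}) supports $\cOcc(c,\cdot)$ for a single symbol, not range-restricted reporting. The paper sidesteps both problems: identifiers carry no order at all (in the dynamic case they are simply the memory addresses of the leaves, hence stable forever), and a short-pattern locate enumerates the leaves in $\leave(P)$ under the locus and issues one $\cOcc$ query per leaf; since every leaf of the $q$-TST occurs at least once in $T$, the number of such leaves is at most $\occ$, giving $O(m(\log\log\sigma)^{2}+\occ\log N)$ without any contiguity requirement. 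You flagged the identifier bookkeeping as ``delicate,'' but the fix is not more careful bookkeeping --- it is abandoning the ordered-range formulation altogether.
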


\begin{table}[t]
	\caption{Summary of self-indexes for highly repetitive text collections. Here, $\occ_{c}
		\geq \occ$ is the number of candidate occurrences of a given pattern as obtained by the
		ESP-index~\cite{DBLP:conf/wea/TakabatakeTS14}. }
	\label{table:resultcomparison1}    
	\tabcolsep = 1mm
	\footnotesize
	\center{	
		\begin{tabular}{r||l|l|l}
			{\bf Method} & {\bf Space} & {\bf Locate Time} & {\bf Update time} \\ \hline
			RLFM-index~\cite{DBLP:journals/corr/GagieNP17} & $O(r)$ & $O(m \log \log_{W} (\sigma + N/r) + \occ \log \log_{W} (N/r))$ & Unsupported \\ \hline
			Bille et al.~\cite{DBLP:conf/cpm/BilleEGV17} & $O(\hat{z} \log (N/\hat{z}))$ & $O(m(1+ \frac{\log^{\epsilon'} \hat{z}}{\log (N/\hat{z})}) + occ(\log \log N + \log^{\epsilon'} \hat{z}))$ & Unsupported \\ \hline	
			BT-index~\cite{DBLP:conf/spire/Navarro17} & $O(z\log{(N/z)})$ & $O(m^2\log{(N/z)} + m\log^{\epsilon''}{z} $ & Unsupported \\
			&  & $+occ(\log\log{N} + \log^{\epsilon''}{z}))$ &   \\ \hline 
			SLP-index~\cite{DBLP:conf/spire/ClaudeN12a} & $O(n)$ & $O(\frac{m^2}{\epsilon} \log (\frac{\log N}{\log n}) + (m + \occ) \log n)$ & Unsupported \\ \hline
			
			Signature & $O(z \log N \log^* M)$ & $ O(m f_{\mathcal{A}} + \mathit{occ} \log{N} + \log{z}$ & $ O((k + \log N \log^* M)\log{z}$ \\ 
			encoding~\cite{DBLP:conf/stringology/NishimotoIIBT16} &  & $\log m \log^* M (\log N + \log m \log^* M)$ &  $\log N \log^* M)$ on average \\ \hline\hline
			TST-index-s & $O(z(q + \log N \log^* N))$ & $ O(m + occ )$~($m \leq q$) & Unsupported \\  
			(this study)                       &                               &   $O( m + occ_{c} \log m \log^{*} N \log N)  (m > q)$    & \\  \hline
			TST-index-d & $O(z(q^2 + \log N \log^* M))$ & $O(m (\log\log \sigma)^{2} + occ \log N)$~($m \leq q$) & $O(f_{\mathcal{B}}(k + q + \log{N}\log^{*} M)$ \\ 
			(this study)                       &                           &                 & $ + (k+q)q (\log\log \sigma)^{2} )$ \\
		\end{tabular}
	}
\end{table}

\section{Fast queries with truncated suffix trees}
In this section we present a novel text transformation, called a {\em $q$-TST transformation}, to improve the search performance of a self-index. We first introduce the TST in the next subsection and then present the $q$-TST transformation in the following subsection. 

\subsection{Tries, compact tries and truncated suffix trees}\label{sec:trie}
A \emph{trie} $\mathcal{X}$ for a set of strings $F$ is a rooted tree whose nodes represent all prefixes of the strings in $F$ (see the left side of Figure~\ref{fig:trie}). Let $U$ be the set of nodes in $\mathcal{X}$, and let $U_{L}$ be the set of leaves in $\mathcal{X}$. We then define the following five operations for $\mathcal{X}$, $u, v
\in U$, $c \in \Sigma$, and $P \in \Sigma^{*}$:
\begin{itemize} \setlength{\itemsep}{-0.0mm} \setlength{\parskip}{-0.0mm}
\item $\pathfunc(u)$: Returns the string $P$ starting at the root and ending at node $u$.
\item $\locus(P)$: Returns the node $u$ such that $\pathfunc(u) = P$.
\item $\leave(P)$: Returns the set of all leaves whose prefixes contain $P$.
\item $\child(u, c)$: Returns the node $v$ such that $\pathfunc(u) \cdot c = \pathfunc(v)$ holds if it exists.
\item $\suffixlink(u) = v$: Returns the node $v$ such that $\pathfunc(v) = \pathfunc(u)[2..]$ holds if it exists.
\end{itemize}

All nodes in $\mathcal{X}$ are categorized into two types. If node $v$ is an internal node and has only one child, then $v$ is called an \emph{implicit node}; otherwise, $v$ is called an \emph{explicit node}. Let $U_{imp}$ and $U_{exp}$ be the respective sets of implicit and explicit nodes in $\mathcal{X}$. 

For $u \in U_{imp}$, the function $\mathit{expl}_{\mathcal{X}}(u)$ returns the lowest explicit node $v \in U_{exp}$ such that $u$ is an ancestor of $v$. The computation times for $\mathit{expl}_{\mathcal{X}}(u)$, $\locus(P)$, $\pathfunc(u) = P$, and $\leave(P)$ are constant, $O(|P|g)$, $O(|P|)$, and $O(|P|g + |\leave(P)|)$, respectively, where $g$ is the computation time for $\child(u, c)$. Here, $g = O(1)$ when we use \emph{perfect hashing}~\cite{DBLP:journals/jacm/FredmanKS84} in $O(|U_{exp}|)$ space. Moreover, $\suffixlink(u)=v$ can be computed in constant time if node $u$ stores a pointer to $v$ in constant space. Hence, the data structure requires $O(|U_{exp}|)$ space. 

A \emph{compact trie} is a space-efficient representation of a trie $\mathcal{X}$ such that all chains of implicit nodes in $\mathcal{X}$ are collapsed into single edges (see the right side of Figure~\ref{fig:trie}). We use two representations for node labels in a compact trie. The first representation uses a string $Y\in \Sigma^*$ such that each edge label in $\mathcal{X}$ is represented as a pair of start and end positions in $Y$, resulting in a space-efficient representation of the edge labels in $\mathcal{X}$. This representation is called a \emph{compact trie with a reference string} and takes $O(|U_{exp}| + |Y|)$ space. The second approach represents each node label explicitly and is called a \emph{compact trie without a reference string}. This representation takes $O(|U|)$ space. The compact trie with a reference string is more space efficient and is used in the static case, while the compact trie without a reference string is used in the dynamic case.

We can insert or delete a string $K$ into or from a compact trie without a reference string in $O(|K|\hat{g})$ time~\cite{DBLP:journals/jacm/Morrison68}, where $\hat{g}$ is the computation time for updating the data structure for $\child(u, c)$ when a child is inserted into or removed from $u$~(assume $g \leq \hat{g}$). Here, $g, \hat{g} = f(u', \sigma)= O((\log \log \sigma)^{2})$ when we use the predecessor/successor approach of Beame and Fich~\cite{DBLP:journals/jcss/BeameF02} in $O(|U_{exp}|)$ space, where $u'$ is the number of children of $u$.
\begin{example}\label{ex:trie}
 The trie on the left side of Figure~\ref{fig:trie} is built on $F = \{{\rm \$, ab\$, abab, abba, b\$, bab\$, baba, babb} , \\{\rm bbab} \}$. In this trie, $U = \{1, \ldots 10, A, \ldots, I\}$, $U_{exp} = \{1, 3, 6, 8, A, \ldots, I \}$, $U_{imp} = {2, 4, 5, 7, 9, 10}$, $\pathfunc(3) = {\rm ab}$, $\locus({\rm baba}) = G$, $\leave(b) = \{E, F, G, H, I \}$, $\child(5, a)
 = D$, $\suffixlink(4) = 7$, and $\mathit{expl}(4) = C$.
\end{example}
\begin{figure}[t]
	\includegraphics[scale=0.4]{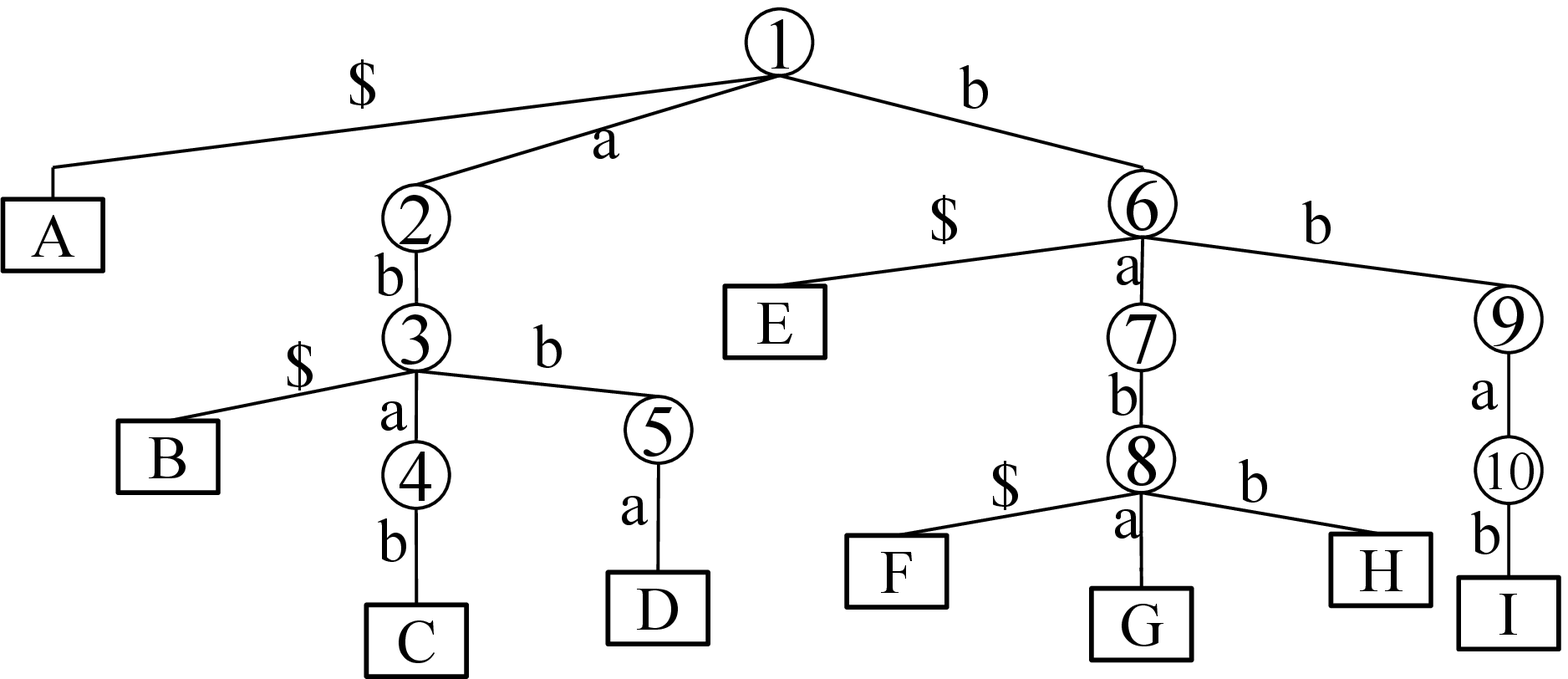}
	\includegraphics[scale=0.4]{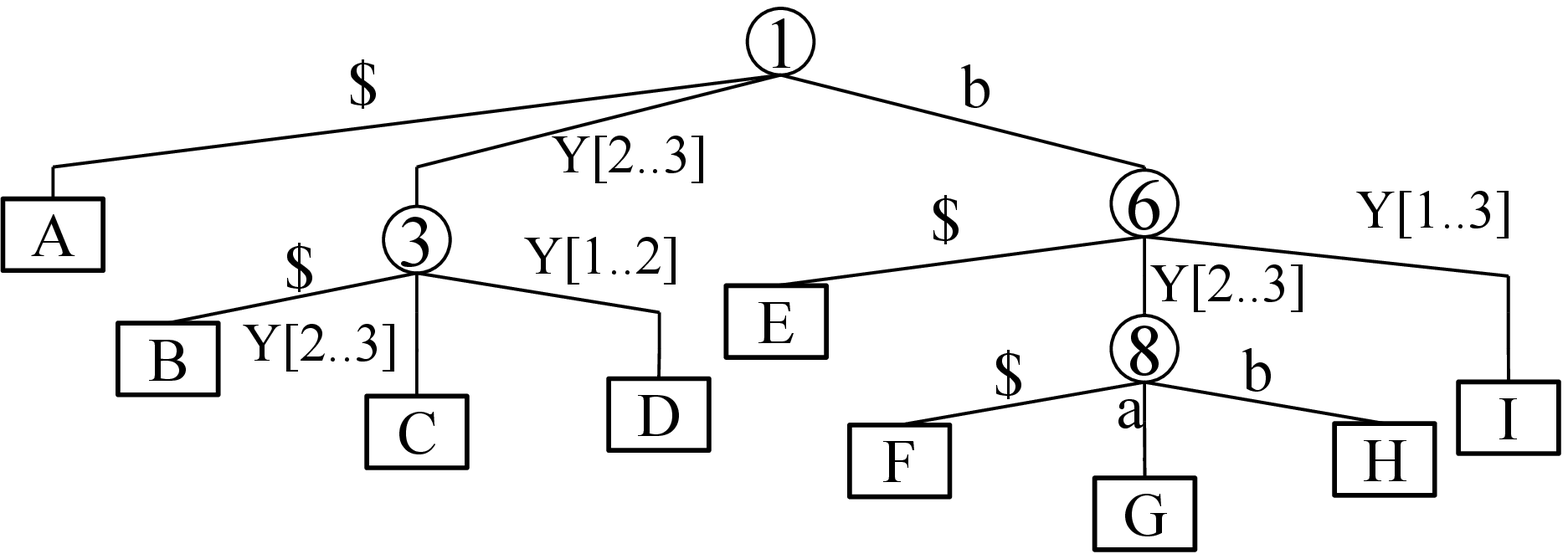}
	\caption{ 
          Trie for $F = \{ {\rm \$,  ab\$, abab, abba, b\$, bab\$, baba, babb, bbab } \}$~(left) and its corresponding compact trie with reference string $Y=bab\$$~(right). 
	}
	\label{fig:trie}
\end{figure}

The $q$-TST~$\mathcal{X}$~\cite{DBLP:journals/tcs/NaAIP03} of a string $T$ is a trie for $\Sigma_{T}^{q}$. We assume that $T$ ends with a special character $\$$ not included in $\Sigma$. Since $\Sigma^q_T=F$ for Example~\ref{ex:trie}, the right side of Figure~\ref{fig:trie} shows a $q$-TST built on $T=babababbabab\$$. Trie $\mathcal{X}$ is a $4$-TST built on $T = {\rm babababbabab}\$$, because $\Sigma_{T}^{q} = F = \{{\rm \$, ab\$, abab, abba, b\$, bab\$, baba, babb, bbab} \}$. 

An important fact is that $\suffixlink(u)$ always exists for every leaf node $u$ in $q$-TST. We thus explicitly store $\suffixlink(u)$ for every leaf $u$. 

Vitale et al. showed that the reference string $Y$ of a $q$-TST can be represented as a string of length $O(|\Sigma_{T}^{q}|)$, and that a $q$-TST for a string $T$ can be constructed in $O(|T| \hat{g})$ time in an online manner while using $O(|\Sigma_{T}^{q}|)$ working space~\cite{VitaleMS15}.

\subsection{$q$-TST transformation}\label{sec:qtrans}
We now present a string transformation using a $q$-TST, which we call a {\em $q$-TST transformation}. A string $T$ is transformed into a new string $T_q$ by a $q$-TST transformation built on $T$. A self-index with improved search performance can then be built on $T_q$.

A $q$-TST transformation using a $q$-TST ${\mathcal X}$ on $T$ is a transformation of $T$ into a new string $T_q$ by replacing every $q$-gram $P$ in $T$ by $\locus(P)$~(i.e., by its node id). Formally, $T_{q} = \code(T, 1) \cdots \code(T, |T|)$, where $\code(T, i) = \locus(T[i..\min \{i+q-1, |T| \}])$. Similarly, a pattern $P$ is transformed into $P_q$ by a $q$-TST transformation using the same $q$-TST ${\mathcal X}$, i.e., $P_q=C(P,1) \cdots C(P,|P|-q+1)$. Given these definitions, the following lemma holds.
\begin{lemma}\label{lem:occ}
	For two strings $T, P$ and $q$-TST of $T$, the following equation holds. 
\begin{eqnarray}\label{Eq:occ}
	\Occ(P, T) &=& 
	\begin{cases}
		\bigcup_{c \in \leave(P)} \cOcc(c, T_{q}) & \mbox{ for } |P| \leq q　　　\\ 
		\Occ(P_{q}, T_{q}) & \mbox{ for } |P| > q　 \\
	\end{cases} 
\end{eqnarray}
For $|P| > q$, $\Occ(P, T) = \phi$ when $P_{q}$ cannot be computed, i.e, when
$P$ contains a new $q$-gram not occurring in $T$.
\end{lemma}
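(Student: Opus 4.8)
The plan is to prove the two cases of the identity separately, in each case reducing the equality of occurrence sets to a position-by-position correspondence between $T$ and its transform $T_q$. The single fact that drives both directions is that $\locus$ is injective (it maps each string to the unique trie node spelling it out), so two $q$-grams are equal exactly when their $\code$-images coincide. Before treating either case I would record one structural observation: \emph{every element of $\Sigma_{T}^{q}$ labels a leaf of the $q$-TST}. Indeed, since $T$ ends with the terminal symbol $\$\notin\Sigma$, any element of $\Sigma_{T}^{q}$ that contains $\$$ must end with it and so cannot be a proper prefix of another element; and a full-length element (length $q$) cannot be a proper prefix of any element of $\Sigma_{T}^{q}$ because all elements have length at most $q$. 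Hence no element is a proper prefix of another, so all are leaves. This guarantees that every symbol $\code(T,i)=\locus(s_i)$ of $T_q$, where $s_i = T[i..\min\{i+q-1,|T|\}]$, is a leaf, which is what makes the short-pattern case well posed.

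For the short-pattern case $|P|\le q$, I would show that for every position $i$,
\[
i\in\Occ(P,T)\iff P\text{ is a prefix of }s_i\iff \code(T,i)\in\leave(P).
\]
The first equivalence is a boundary bookkeeping argument: $P$ occurs at $i$ iff $T[i..i+|P|-1]=P$, which (because $|P|\le q$) is equivalent to $|P|\le|s_i|$ together with $T[i..i+|P|-1]=P$, i.e.\ precisely the condition that $P$ be a prefix of $s_i$. The second equivalence is the definition of $\leave(P)$ together with the leaf observation above: $\locus(s_i)$ is a leaf whose path-label $s_i$ has $P$ as a prefix exactly when it lies in $\leave(P)$. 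Since each position $i$ contributes the single symbol $\code(T,i)$, the sets $\cOcc(c,T_q)$ over distinct leaves $c$ partition the positions of $T_q$, so $\bigcup_{c\in\leave(P)}\cOcc(c,T_q)=\{i\mid \code(T,i)\in\leave(P)\}=\Occ(P,T)$.

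For the long-pattern case $|P|>q$, I would prove both inclusions of $\Occ(P,T)=\Occ(P_q,T_q)$ using that $P_q$ is the concatenation of the $\code$-images of the $|P|-q+1$ consecutive $q$-grams of $P$, each of which is genuinely of length $q$ (since $|P|>q$, no short suffix of $P$ enters $P_q$). For the forward inclusion, an occurrence $T[i..i+|P|-1]=P$ makes each window $T[i+j..i+j+q-1]$ equal to the $(j+1)$-th $q$-gram of $P$, so by injectivity of $\locus$ we get $\code(T,i+j)=P_q[j+1]$ for all $j$, i.e.\ $P_q$ occurs at $i$. For the backward inclusion, a match $T_q[i..i+|P_q|-1]=P_q$ yields, again by injectivity, $T[i+j..\min\{i+j+q-1,|T|\}]=P[j+1..j+q]$ for every $j$; since each right-hand side has length exactly $q$, the $\min$ never truncates, forcing $i+|P|-1\le|T|$, and the windows overlap by $q-1$ characters, so gluing them reconstructs $T[i..i+|P|-1]=P$ and hence $i\in\Occ(P,T)$. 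Finally, if some $q$-gram of $P$ does not occur in $T$ then its $\locus$, and hence $P_q$, is undefined; but any occurrence of $P$ would force all of $P$'s $q$-grams to occur in $T$, so $\Occ(P,T)=\emptyset$ in that case, matching the claim.

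The main obstacle I anticipate is not the algebra of the correspondence but the careful treatment of the boundary near the end of $T$: making the leaf observation precise and using it consistently (so that the $\code$-symbols really are leaves in the short case), and ensuring in the long case that every $q$-gram window has full length $q$ so the $\min$ in the definition of $\code$ never shortens a window into a suffix. Both points hinge on the terminal symbol $\$$, so I would state that assumption explicitly and invoke it at each boundary step.
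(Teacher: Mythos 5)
Your proposal is correct and follows essentially the same route as the paper's proof: both rest on the fact that the $q$-TST transformation is position-preserving, so a $q$-short pattern occurring at position $i$ corresponds exactly to $\code(T,i)$ being a leaf in $\leave(P)$, and a $q$-long occurrence corresponds exactly to an occurrence of $P_q$ in $T_q$. The only difference is one of rigor, in your favor: you make explicit what the paper leaves implicit, namely that every element of $\Sigma_T^q$ is a leaf (via the terminal symbol $\$$), that $\locus$ is injective, and that both inclusions in the $q$-long case (including the non-truncation of the $\min$ at the text boundary) must be checked.
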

\begin{proof}
Recall that a $q$-gram beginning at position $i$ in $T$ is transformed to the corresponding $q$-TST leaf beginning at position $i$ in $T_{q}$. This means that a substring $P$ of length at most $q$ and beginning at position $i$ in $T$ is transformed to a $q$-TST leaf containing $P$ as a prefix beginning at position $i$ in $T_{q}$. Let $L$ be the set of all leaves containing $P$ as a prefix. Then all occurrence positions of all leaves of $L$ in $T_{q}$ are given by $\Occ(P, T)$. Since $L = \leave(P)$, Equation~\ref{Eq:occ} holds for $q$-short patterns. 

Similarly for a $q$-long pattern $P$, a substring $P$ beginning at position $i$ in $T$ is transformed to $P_{q}$ beginning at position $i$ in $T_{q}$. $\Occ(P, T) = \phi$ clearly holds when $P_{q}$ cannot be computed. Therefore, Equation~\ref{Eq:occ} also holds for $q$-long patterns. 
\end{proof}

\begin{example}\label{ex:tstindex}
	Let $T = {\rm babababbabab}\$$ and $q=4$, and let the trie in Figure~\ref{fig:trie} be a $q$-TST of $T$. Then $T_{q} = {\rm GCGCHDIGCFBEA}$. Let $P = {\rm ab}$ and $P' = {\rm babab}$. Then $\Occ(P, T) = \cOcc(B, T_{q}) \cup \cOcc(C, T_{q}) \cup \cOcc(D, T_{q})$ holds, because $\leave(P) = \{B, C, D \}$, $\cOcc(B, T_{q}) = \{11 \}$, $\cOcc(C, T_{q}) = \{2, 4, 9 \}$, and $\cOcc(D, T_{q}) = \{6 \}$. Similarly, $\Occ(P', T) = \Occ(P'_{q}, T_{q}) = \Occ(GC, T_{q}) = \{1, 3, 8\}$, because $|P'| > q$.
\end{example}
We can compute $\leave(P)$ in $O(|P|g + |\leave(P)|)$ time. $P_{q}$ also can be computed in $O(|P|g)$ time by using $\suffixlink$ and $\child$. This is because for $u = \code(P, i+1)$, $v = \code(P, i)$, and $q \leq i < m$, $u = \child(\suffixlink(v), P[i+q])$ holds if $u$ and $v$ exist. 


Lemma~\ref{lem:occ} tells us that we can improve search queries on a general self-index for $q$-short patterns on a $q$-TST in $O(|P|g + c_1 \times |\leave(P)|)$ time if the computation time for pattern search on the general self-index is greater than $O(|P|g + c_1 \times |\leave(P)|)$, where $c_{1}$ is the computation time for $\cOcc$ with the general self-index. If the length of pattern $P$ is at most $q$, i.e., $|P| \leq q$, then we perform search queries on the $q$-TST. Otherwise, we perform search queries of $P_{q}$ on a self-index for $T_{q}$. 

We obtain the following theorem by using Lemma~\ref{lem:occ}.
\begin{theorem} \label{theo:TST_index}
Let $\mathcal{I}(T)$ be an index supporting $\cOcc(P, T)$ in $O(c_{1})$ time and $\Occ(P, T)$ in $O(c_{2})$ time. Then there exists an index of $O(|\Sigma_{T}^{q}| + |\mathcal{I}(T_{q})| )$ space that supports $\Occ(P, T)$ in $O(m + c_{1} \times |\leave(P)|)$ time for a $q$-short pattern $P$, and that supports $\Occ(P, T)$ in $O(m + c_{2})$ time for a $q$-long pattern $P$, where $|\mathcal{I}(T)|$ is the size of $\mathcal{I}(T)$.
\end{theorem}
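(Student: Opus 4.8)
The plan is to build a single composite index consisting of the $q$-TST $\mathcal{X}$ on $T$, realized with perfect hashing so that $g = O(1)$ and $\mathcal{X}$ occupies $O(|U_{exp}|) = O(|\Sigma_{T}^{q}|)$ space, together with the given index $\mathcal{I}(T_{q})$ built on the transformed text $T_{q}$. Since the two structures are stored side by side, the total space is $O(|\Sigma_{T}^{q}| + |\mathcal{I}(T_{q})|)$, matching the claim. Correctness of all queries will follow directly from Lemma~\ref{lem:occ}, which reduces $\Occ(P, T)$ either to a union of $\cOcc$ queries on $T_{q}$ (short case) or to a single $\Occ$ query on $T_{q}$ (long case); thus the only real work is the running-time analysis.

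For a $q$-short pattern $P$ (i.e., $m \leq q$), I would first compute $\leave(P)$ by descending from the root of $\mathcal{X}$ along the edges spelling $P$ and collecting all leaves below $\locus(P)$; as noted in Section~\ref{sec:trie} this takes $O(mg + |\leave(P)|) = O(m + |\leave(P)|)$ time with $g = O(1)$. Then, for each leaf $c \in \leave(P)$, I would issue a $\cOcc(c, T_{q})$ query costing $O(c_{1})$ time, and report the union of the results. By Lemma~\ref{lem:occ} this union equals $\Occ(P, T)$, giving correctness. Moreover, because each position of $T_{q}$ carries exactly one leaf label, the sets $\cOcc(c, T_{q})$ are pairwise disjoint, so their union is a simple concatenation requiring no duplicate elimination. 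The total time is therefore $O(m + |\leave(P)| + c_{1} \cdot |\leave(P)|) = O(m + c_{1} \cdot |\leave(P)|)$, since $c_{1} \geq 1$.

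For a $q$-long pattern $P$ (i.e., $m > q$), I would first compute the transformed pattern $P_{q}$. Using the identity $\code(P, i+1) = \child(\suffixlink(\code(P,i)), P[i+q])$ recorded just after Lemma~\ref{lem:occ}, the whole of $P_{q}$ can be produced in $O(mg) = O(m)$ time by a single left-to-right sweep that advances the current $q$-TST node with one suffix-link step and one child step per position. If any required child is absent, then $P$ contains a $q$-gram not occurring in $T$, so $P_{q}$ does not exist and I would immediately return $\emptyset$, which is correct by the final sentence of Lemma~\ref{lem:occ}. Otherwise I would issue a single $\Occ(P_{q}, T_{q})$ query in $O(c_{2})$ time and return its result, which by Lemma~\ref{lem:occ} equals $\Occ(P, T)$. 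The total time is $O(m + c_{2})$.

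The main obstacle is largely bookkeeping rather than a deep difficulty: one must verify that both the collection of $\leave(P)$ and the construction of $P_{q}$ truly run in $O(m)$ time once $g = O(1)$, and, in the short case, that forming the union of the per-leaf occurrence sets adds nothing beyond the individual $\cOcc$ queries, which hinges on the disjointness observation above. A minor point to get right is the boundary behavior of $T_{q}$ near the end of $T$, where the last positions encode short suffixes rather than full $q$-grams; but Lemma~\ref{lem:occ} already accounts for these through the definition of $\Sigma_{T}^{q}$, so no separate argument is needed.
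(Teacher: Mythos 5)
Your proposal is correct and follows essentially the same route as the paper: the paper's own justification is exactly the discussion preceding the theorem (store the $q$-TST with perfect hashing alongside $\mathcal{I}(T_{q})$, compute $\leave(P)$ in $O(mg + |\leave(P)|)$ time for short patterns, compute $P_{q}$ in $O(mg)$ time via $\suffixlink$ and $\child$ for long patterns, and invoke Lemma~\ref{lem:occ} for correctness). Your additional observations --- the pairwise disjointness of the $\cOcc(c, T_{q})$ sets and the explicit handling of the case where $P_{q}$ does not exist --- are details the paper leaves implicit, but they do not change the argument.
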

We apply Theorem~\ref{theo:TST_index} to signature encoding and present a novel self-index named TST-index in the next section.

\section{TST-index}\label{sec:se}
We obtain the TST-index by combining Theorem~\ref{theo:TST_index} with signature encoding. First, we introduce LCPR and signature encoding, and then we develop the TST-index.

\subsection{Locally consistent parsing~(LCPR)}
LCPR~\cite{DBLP:journals/algorithmica/MehlhornSU97} is a factorization of a string
$T$ by using a bit string $\tau(T)$ computed for $T$. Let $p_i$ be the position of the
$i$-th $1$ in $\tau(T)$. Then $\tau(T)$ and $p_i$ satisfy the conditions given in
Lemma~\ref{lem:lc}.
\begin{lemma}[\cite{DBLP:journals/algorithmica/MehlhornSU97}]\label{lem:lc}
  For a $c$-colored sequence $T$ of length at least $2$, there exists a function $\tau(T)$ that returns a bit sequence of length $|T|$ satisfying the following properties: (1) $2 \leq p_{i+1} - p_{i} \leq 4$ and $p_{1}=1$ hold for $1 \leq i \leq d$, where $d = |\cOcc(1, \tau(T))|$ and $p_{d+1} = |T| + 1$. (2) If $T[i-\Delta_{L}..i+\Delta_{R}] = T'[j-\Delta_{L}..j+\Delta_{R}]$ holds for integers $i$ and $j$, then $\tau(T)[i] = \tau(T')[j]$ holds, where $\Delta_{L} = \log^* c + 6, \Delta_{R} = 4$, and $T$ and $T'$ are $c$-colored sequences.
\end{lemma}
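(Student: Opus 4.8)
The plan is to build $\tau$ in two stages — an iterated \emph{alphabet reduction} (deterministic coin tossing) followed by a \emph{landmark selection} — and then to track how the windows of the two stages compose in order to obtain the local-consistency property~(2). For the first stage, define one round of reduction $\rho$ on a colored sequence: for a position $i \ge 2$, view $T[i-1]$ and $T[i]$ as binary integers in $\{0,\dots,c-1\}$; since $T$ is colored these differ, so let $\ell$ be the index of the least significant bit in which they differ and let $b$ be the value of that bit in $T[i]$, and set $\rho(T)[i] = 2\ell + b$ (fixing $\rho(T)[1]$ by a boundary convention). The two classical facts to verify are: (a)~$\rho(T)$ is again a colored sequence, i.e.\ $\rho(T)[i] \ne \rho(T)[i+1]$; and (b)~the new alphabet has size at most $2\lceil \log c\rceil$, since $0 \le \ell \le \lceil\log c\rceil - 1$ and $b \in \{0,1\}$. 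Fact~(b) is immediate; fact~(a) is the heart of this stage and follows from a short case analysis comparing the least-significant differing-bit indices of the adjacent pairs $(T[i-1],T[i])$ and $(T[i],T[i+1])$.

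Iterating $\rho$ drives the alphabet size through $c \to 2\lceil\log c\rceil \to \cdots$, reaching a constant size (at most $6$) after $O(\log^{*} c)$ rounds; the number of rounds is exactly what contributes the $\log^{*} c$ term to $\Delta_{L}$. The crucial bookkeeping point is that $\rho(T)[i]$ depends only on $T[i-1..i]$, so after $k$ rounds the reduced label at position $i$ is a function of $T[i-k..i]$ alone, giving a look-back of $O(\log^{*} c)$ positions with no additional look-ahead from this stage.

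For the second stage, let $S$ be the resulting constant-alphabet colored sequence and select the marked positions (the $1$'s of $\tau$) as follows: mark position $1$ (forcing $p_{1}=1$); mark every \emph{local maximum} of $S$ (a position strictly larger than both neighbors); and, to close any remaining long gaps, mark every \emph{local minimum} that is not adjacent to a local maximum. Because $S$ is colored over a constant alphabet, deciding whether $i$ is marked inspects only a constant-size window of $S$ around $i$. Two adjacent positions cannot both be local maxima, which forces consecutive marks to lie at least $2$ apart, while the added local-minimum marks cap the gap from above; this yields property~(1), $2 \le p_{i+1}-p_{i} \le 4$. Establishing the exact upper bound $4$ is a finite case check over the constant alphabet.

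Finally, property~(2) follows by composing the two windows: whether $i$ is marked is a function of the $S$-labels in a constant window about $i$, and each such $S$-label is a function of $T$ over a $\log^{*} c + O(1)$ look-back together with an $O(1)$ look-ahead coming from the landmark window reaching positions $i+1, i+2$. Hence the mark at $i$ is determined by $T[i-\Delta_{L}..i+\Delta_{R}]$, so two colored sequences that agree on this window receive the same bit, and boundary positions are handled by the fixed conventions above. The main obstacle I expect is not any single deep idea but the constant-chasing: proving fact~(a) of the reduction cleanly and then pinning down the precise constants $\Delta_{L} = \log^{*} c + 6$ and $\Delta_{R} = 4$ (and the gap bound $\le 4$) requires careful accounting of exactly how the reduction and landmark windows overlap and compose.
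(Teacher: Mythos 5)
The paper does not prove this lemma at all: it is imported verbatim from the cited work of Mehlhorn, Sundar and Uhrig, so your attempt can only be judged against the classical construction, and in outline you reproduce it correctly (iterated deterministic coin tossing to shrink the alphabet, then a landmark/local-extrema selection, then composition of windows). Stage 1 of your sketch is sound: the colored-ness of $\rho(T)$ and the bound $2\lceil\log c\rceil$ on the new alphabet are exactly the classical facts, and the look-back accounting ($k$ rounds depend on $T[i-k..i]$) is right.

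The genuine gap is in stage 2. Your landmark rule --- mark local maxima, plus local minima not adjacent to a local maximum --- is applied to the sequence $S$ obtained after the reduction stabilizes, which you correctly note has alphabet size at most $6$ (the fixed point of $x \mapsto 2\lceil\log x\rceil$). But over an alphabet of size $6$ this rule does \emph{not} give the gap bound $p_{i+1}-p_i \le 4$: for the colored sequence $5,4,3,2,1,0,1,2,3,4,5$ the local maxima are at positions $1$ and $11$, the only added local minimum is at position $6$, and the resulting gaps are $5$. No finite case check will repair this; the rule is simply insufficient at alphabet size $6$. The missing ingredient in the classical proof is a further recoloring from $\{0,\dots,5\}$ down to $\{0,1,2\}$: three additional passes, one for each value $j \in \{3,4,5\}$, replacing every occurrence of $j$ by the least value in $\{0,1,2\}$ differing from both neighbors (each pass preserves colored-ness and widens the dependency window by $1$ on each side). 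Only over a $3$-letter colored alphabet does the descent--ascent length between consecutive local maxima cap the gap at $4$. This omission also breaks your constant accounting: the three recoloring passes plus the width-$2$ landmark window are precisely what produce the ``$+6$'' in $\Delta_L = \log^* c + 6$ and the value $\Delta_R = 4$, so without them your final window composition cannot land on the stated constants. A smaller issue worth a sentence in a full write-up: with position $1$ force-marked, a local maximum at position $2$ would violate the lower bound $p_2 - p_1 \ge 2$, so the boundary convention must suppress marks adjacent to the forced one.
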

An LCPR for a $c$-colored string $T$ using $p_i$ for $1 \leq i \leq d+1$ is defined as
$\encblock_{c}(T) = T[p_{1}..p_{2}-1], \ldots, T[p_{d}..p_{d+1}-1]$.

\begin{example} \label{ex:Encblock}
  Let $\deltaLR{L} = 2, \deltaLR{R} = 1$, and $T = {\rm abcabcdabcab}$, and assume that $\tau(T) = 100100010010$. Then $\encblock_{|\Sigma|}(T) = \rm{abc, abcd, abc, ab}$, $p_{1} = 1$, $p_{2} = 4$, $p_{3} = 8$, $p_{4} = 11$, $p_{5} = 13$, and $2 \leq p_{i+1}-p_{i} \leq 4$ holds for any $1 \leq i \leq 4$. Since $T[3- \deltaLR{L}..3+\deltaLR{R}] = T[10-\deltaLR{L}..10+\deltaLR{R}] = {\rm abca}$ holds, $\tau(T)[3] = \tau(T)[10] = 0$. Similarly, since $T[4- \deltaLR{L}..4+\deltaLR{R}] = T[11-\deltaLR{L}..11+\deltaLR{R}] = {\rm bcab}$ holds, $\tau(T)[3] = \tau(T)[10] = 1$.
\end{example}

\subsection{Signature encoding}\label{subsec:se}
A signature encoding~\cite{DBLP:journals/algorithmica/MehlhornSU97} of a string $T$ is a context-free grammar $\mathcal{G} = (\Sigma, \mathcal{V}, \mathcal{D}, S)$ generating the single text $T$ and built using $\mathit{RLE}(T)$ and $\mathit{LC}_{4M}$. Here, $\mathcal{V} = \{e_1, \ldots, e_{w} \}$ is a set of positive integers called \emph{variables}. $\mathcal{D} = \{e_i \rightarrow f_i\}_{i = 1}^{w}$ is a set of \emph{deterministic production rules} (a.k.a. \emph{assignments}), with each $f_i$ being either of a sequence of variables in $V$ or a single character in $\Sigma$. $S$ is the start symbol deriving string $T$.

A signature encoding corresponds to a balanced derivation tree of height $O(\log{N})$ built on the given string $T$, where each internal node (respectively, leaf node) has a variable in $V$ (respectively, a character in $\Sigma$), as illustrated in Figure~\ref{fig:se}. Since this derivation tree is balanced, the node labels at each level can be considered as a sequence from the leftmost node to the rightmost node at a level. We define $\assign^{+}(f_{1}, f_{2}, \ldots , f_{d})$ as a function returning a variable sequence $e_{1}, e_{2}, \ldots, e_{d}$, where $f_i$ is a single character in $\Sigma$ or a sequence of variables in $V$, and $e_i$ is a single character in $\Sigma$. In addition, $(e_{j} \rightarrow f_{j}) \in \mathcal{D}$ holds for $1 \leq j \leq d$. Then, $\SE{t}{T}$ is a sequence of node labels at the $t$-th level of the derivation tree built
from string $T$ and defined using $\assign^{+}(f_{1}, f_{2}, \ldots , f_{d})$ as follows.

\begin{eqnarray*}
  \SE{t}{T} &=&
  \begin{cases}
    \assign^{+}(T[1], \ldots, T[|T|]) & \mbox{ for } t = 0, \\
    \assign^{+}(\encblock_{4M}(\SE{t-1}{T}))& \mbox{ for }  t = 2, 4, \ldots, \\
    \assign^{+}(\encpow{\SE{t-1}{T}}) & \mbox{ for } t = 1, 3, \ldots. \\
    \end{cases} \\
\end{eqnarray*}
Here, $S= \SE{h}{T}[1]$ holds for the minimum positive integer $h$ satisfying $|\SE{h}{T}| = 1$. Hence, $h+1$ is the height of the derivation tree of $S$. Let $w = |\mathcal{V}|$ be the size of $\mathcal{G}$, whose bound is given by the following lemma.
\begin{lemma}[\cite{18045}]\label{lem:upperbound_signature}
	The size $w$ of $G$ is bounded by $O(\min(z \log N \log^* M, N))$.
\end{lemma}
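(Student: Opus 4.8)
The plan is to establish the two bounds $w = O(N)$ and $w = O(z \log N \log^* M)$ separately; since $\mathcal{V}$ is a single global set and the claim is the minimum of these two quantities, proving both suffices.

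First I would dispatch the easy bound $w = O(N)$. The number of variables is at most the number of distinct right-hand sides of the rules in $\mathcal{D}$, which is in turn at most the total number of internal nodes of the derivation tree, so it suffices to show the tree has height $O(\log N)$ and total size $O(N)$. The key observation is that every two consecutive levels shrink the sequence length by a factor of at least $2$: at an even level the parsing $\encblock_{4M}$ satisfies $p_{i+1}-p_i \ge 2$ by Lemma~\ref{lem:lc}(1), so $|\SE{t+1}{T}| \le |\SE{t}{T}|/2$, while the intervening odd level applies run-length encoding $\encpow{\cdot}$, which can only shorten the sequence and, crucially, turns it into a colored sequence so that the following $\encblock_{4M}$ step is well defined. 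Hence $|\SE{t}{T}|$ decays geometrically, the height is $O(\log N)$, the level lengths telescope to $O(N)$, and $w = O(N)$.

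The substantive part is the bound $w = O(z \log N \log^* M)$, which I would prove by counting, for each fixed level $t$, the number of distinct blocks (equivalently, distinct right-hand sides created at that level) and summing over the $O(\log N)$ levels. The mechanism is to charge each distinct block to a boundary of the LZ77 factorization $\LZ(T) = f_1, \ldots, f_z$. Through the derivation tree each symbol of $\SE{t}{T}$ derives an interval of $T$, so the $z$ factor-start positions induce at most $z$ marked positions in $\SE{t}{T}$. The central claim, established by induction on $t$ from the local-consistency property of Lemma~\ref{lem:lc}(2), is that equal substrings of $T$ produce identical level-$t$ symbol sequences except within the first and last $O(\Delta_L + \Delta_R)$ symbols; here $\Delta_L = \log^*(4M) + 6 = O(\log^* M)$ because the alphabet size handed to $\encblock$ is bounded by $4M$, which is exactly the source of the $\log^* M$ factor. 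Granting the claim, a factor $f_s$ with $s>1$ is a copy of an earlier occurrence, so every block whose interval lies in the interior of $f_s$ (more than $O(\log^* M)$ symbols from either end) coincides with a block appearing earlier and therefore cannot be the leftmost occurrence of a distinct block value. Consequently the leftmost occurrence of every distinct block lies within $O(\log^* M)$ symbols of one of the $z$ marked positions, giving $O(z \log^* M)$ distinct blocks at level $t$ and $O(z \log N \log^* M)$ in total.

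The main obstacle I anticipate is making the inductive local-consistency claim precise, because it must track how the $O(\log^* M)$-wide contaminated margin near each boundary propagates up the levels. At level $0$ equal substrings are literally equal; the inductive step must argue that, given agreement away from a margin at level $t-1$, the parse preserves agreement away from a margin that grows by only an additive $O(\Delta_L + \Delta_R)$ and thus stays $O(\log^* M)$ per level. For the $\encblock_{4M}$ step this is precisely Lemma~\ref{lem:lc}(2), which says a block boundary is determined by a window of $\Delta_L + \Delta_R + 1$ surrounding symbols; for the $\encpow{\cdot}$ step one must separately check that run boundaries are locally determined and that a run straddling the margin enlarges it by only $O(1)$. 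Carefully bounding this margin, together with confirming that the per-level distinct-block counts can only overcount variables (so the union bound over levels remains valid despite a right-hand side possibly recurring across even levels or across odd levels), constitutes the technical heart of the argument.
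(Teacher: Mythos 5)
Your proposal is correct and follows essentially the same route as the paper's own proof: the paper formalizes your ``contaminated margin'' as the sets $K(x,t)$ of level-$t$ symbols whose dependency interval covers a fixed LZ77 boundary position $x$, bounds each by $O(\log^* M)$ via the same halving-plus-$(\Delta_{L}+\Delta_{R})$ recursion (its Lemma~\ref{lem:depend3}), and sums over the $z$ boundaries and $O(\log N)$ levels exactly as you do, with your ``interior blocks recur elsewhere'' step corresponding to its Observations~\ref{ob:depend} and~\ref{ob:depend2}. One caution: your phrase ``grows by only an additive $O(\Delta_{L}+\Delta_{R})$'' must be read jointly with the length-halving of each even level --- it is the fixed point of $d \mapsto (d+\Delta_{L}+\Delta_{R})/2$, not mere additive growth, that keeps the margin at $O(\log^* M)$ uniformly in $t$; otherwise the sum over levels would degrade to $O(z\log^2 N\log^* M)$.
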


$\mathcal{G}$ satisfies the following properties by the definition. (1) $h = O(\log N)$ holds because $|\SE{t}{T}| \leq \frac{1}{2}|\SE{t-1}{T}|$ for a positive even integer $t$. (2) Every variable is at most $4M$ because $\SE{t}{T}$ must be a $4M$- colored sequence to apply an LCPR to $\SE{t}{T}$. (3) $\mathcal{G}$ can be stored in $O(w)$ space, because every variable $e$ in $\mathcal{D}$ is in one of three cases: (i) $e \rightarrow c \in \Sigma$; (ii) $e \rightarrow \hat{e}^k$, where $\hat{e} \in \mathcal{V}$ and $1 \leq k \leq N$; or (iii) $e$ derives a variable sequence of length $2 \leq d \leq 4$.

\begin{figure}[t]
  \centerline{
    \includegraphics[scale=0.7]{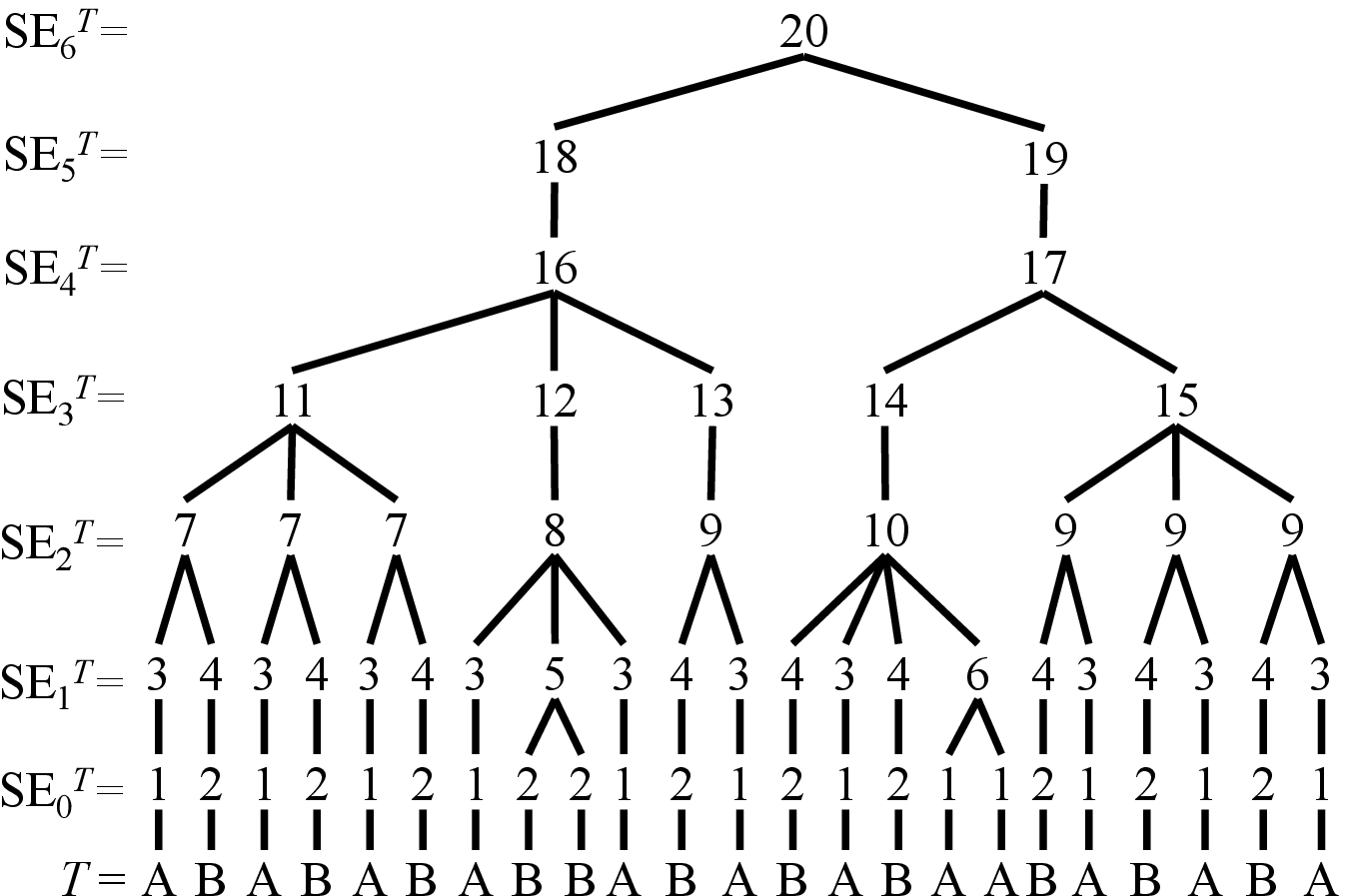}
    }
  \caption{ The derivation tree of $\mathcal{G}$ of Example~\ref{ex:se}.  }
  \label{fig:se}
\end{figure}
\begin{example}\label{ex:se}
  Let $\mathcal{G} = (\Sigma, \mathcal{V}, \mathcal{D}, S)$ be a context-free
  grammar, where $\Sigma = \{A, B\}$, $\mathcal{V} = \{1, \ldots , 20 \}$,
  $\mathcal{D} = \{1 \rightarrow A, 2 \rightarrow B, 3 \rightarrow 1^1, 4 \rightarrow
  	2^1, 5 \rightarrow 2^2, 6 \rightarrow 1^2, 7 \rightarrow (3,4), 8 \rightarrow (3,5, 3), 9
  	\rightarrow (4,3), 10 \rightarrow (4,3,4,6), 11 \rightarrow 7^3, 12 \rightarrow 8^1, 13
  	\rightarrow 9^1, 14 \rightarrow 10^1, 15 \rightarrow 9^3, 16 \rightarrow (11,12,13), 17
  	\rightarrow (14,15), 18 \rightarrow 16^1, 19 \rightarrow 17^1, 20 \rightarrow (18,19)
  	\}$, and $S = 20$. Assume that $\encblock_{4N}(\SE{1}{T}) =
  (3,4)^3,(3,5,3),(4,3),(4,3,4,6), \\ (4,3)^3$, $\encblock_{4N}(\SE{3}{T}) = (11,12,13),
  (14,15)$, and $\encblock_{4N}(\SE{5}{T}) = (18,19)$ hold. Then $\mathcal{G}$ is
  the signature encoding of $T = {\rm ABABABABBAABABABAABABABA}$.
\end{example}

\subsection{Static TST-indexes}
We can extend signature encoding as a self-index for pattern search. A key idea for pattern search is that there exists a common variable sequence of length $O(\log |P| \log^* M)$ representing every substring $P$ in the parse tree of $\mathcal{G}$. This sequence is called the \emph{core} of $P$ and is computable from $P$, as described elsewhere \cite{DBLP:conf/wea/TakabatakeTS14, DBLP:conf/stringology/NishimotoIIBT16} in detail.

Since we can perform top-down searches of the parse tree for the core of $P$ instead of
$P$ itself, we obtain the following lemma.~\footnote{Takabatake et al.~\cite{DBLP:conf/wea/TakabatakeTS14} construct the grammar representing an input text by using a technique called
	\emph{alphabet reduction}, which is essentially equal to LCPR. Although the definition
	of their grammar is slightly different from ours, we can use their search algorithm with
	our signature encoding.} 
\begin{lemma}[\cite{DBLP:conf/wea/TakabatakeTS14}]\label{lem:iesp}
  For a signature encoding, there exists an index of $O(w)$ space supporting count and locate queries in $O(m + occ_{c} \log m \log^{*} M \log N)$ time, $\cOcc(c, T)$ in $O(\occ)$ time, and extract queries in $O(\ell + \log N)$ time, where $\occ_{c} \geq
  \occ$ is the number of candidate occurrences of $P$ by this index.
\end{lemma}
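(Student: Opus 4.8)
The plan is to augment the signature-encoding grammar $\mathcal{G}$, which has $O(w)$ variables and derivation-tree height $h = O(\log N)$, with a few satellite structures that keep the total space $O(w)$, and then to realize each of the three query types by a different kind of traversal of this structure. At every variable I would store the length of the string it derives; for each rule I would store a perfect-hash map from a child label to the child (together with its left offset) so that a downward step costs $O(1)$, and reverse pointers from a variable to the rules that use it so that upward steps also cost $O(1)$. Because $\mathcal{G}$ falls into the three rule shapes listed after Lemma~\ref{lem:upperbound_signature} (a terminal, a run $\hat e^k$, or a sequence of length at most $4$), all of these tables have total size $O(w)$.

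Extract queries are the easiest. Given a range $[i..i+\ell-1]$ I would descend from the start symbol $S = \SE{h}{T}[1]$ toward the leaf at position $i$, at each node using the stored derived lengths to pick the correct child; a run rule $e \to \hat e^k$ is handled in $O(1)$ by dividing the local offset by the length of $\hat e$. This reaches position $i$ in $O(h) = O(\log N)$ steps, after which a left-to-right walk of the parse frontier emits the $\ell$ requested characters in $O(\ell)$ additional time, for the claimed $O(\ell + \log N)$ bound.

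For a general pattern I would exploit the \emph{core} of $P$ discussed above: by the local-consistency property of Lemma~\ref{lem:lc}, any two occurrences of $P$ are parsed identically except within the $O(\log^{*} M)$-wide boundary zones at each of the $O(\log m)$ levels, so there is a canonical sequence of $O(\log m \log^{*} M)$ variables common to every occurrence. I would first run the signature encoding on $P$ itself to extract this core, which takes $O(m)$ time and accounts for the additive $m$ term. Taking the parse-tree occurrences of a representative variable of the core as the set of \emph{candidate} occurrences (there are $occ_{c}$ of them, located via the reverse pointers), I would, for each candidate, verify that the surrounding parse spells out $P$ and compute its absolute text position by ascending to the root while accumulating left offsets. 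Each candidate thus costs $O(\log m \log^{*} M \cdot \log N)$, the core length times the tree height, which yields the stated $O(m + occ_{c} \log m \log^{*} M \log N)$ bound for both count and locate; since some candidates may fail verification, $occ_{c} \geq \occ$.

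The single-character query $\cOcc(c, T)$ is the special case in which $P = c$ is itself a level-$0$ variable $e_c$, so no core matching is needed and the occurrences are exactly the parse-tree nodes labelled $e_c$. I would enumerate them output-sensitively by a depth-first traversal of the ancestor structure induced by the reverse pointers, accumulating positions along the way; since the grammar has arity at most four, charging the traversal cost to the emitted positions gives $O(\occ)$ total time. The main obstacle I anticipate is precisely this enumeration: obtaining constant amortized time per occurrence requires that run rules $e \to \hat e^k$ contribute their $k$ evenly spaced positions without incurring the $O(\log N)$ climb, and that shared ancestors are not re-explored, so the amortization argument over the bounded-arity DAG is the delicate point. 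Establishing the correctness of the candidate set, namely that the core genuinely captures every occurrence of $P$, rests entirely on Lemma~\ref{lem:lc} and is the other place where the real content of the proof lies, the remaining steps being routine tree navigation.
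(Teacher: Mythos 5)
Your overall architecture matches the paper's: perfect hashing for constant-time child access, the DAG of $\mathcal{D}$ augmented with offsets, core-based candidate generation with per-candidate verification at cost $O(\log m \log^{*} M \log N)$, and extraction by a root-to-position descent. But at the one place where this lemma actually improves on the cited result of Takabatake et al.\ --- computing $\cOcc(c,T)$ in $O(\occ)$ rather than $O(\occ \log N)$ time --- you name the difficulty and stop. Your claim that ``charging the traversal cost to the emitted positions gives $O(\occ)$ total time'' because the grammar has bounded arity does not hold: bounded \emph{out}-degree is irrelevant to an upward traversal. The obstruction is chains of DAG nodes of \emph{in}-degree $1$; every root-to-$c$ path must traverse such a chain in full, yet the chain creates no branching and hence no new occurrences, so a naive DFS over reverse pointers costs $\Theta(\occ \log N)$ in the worst case. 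The paper closes exactly this gap with a path-compression device that is missing from your argument: for each DAG node $v$ it precomputes the lowest ancestor $u$ of $v$ whose in-degree is at least $2$ (or which is the root), together with the sum of relevant offsets along the unary chain from $u$ to $v$. With these shortcuts every step of the upward enumeration lands on a branching node, so the recursion tree of the enumeration has internal degree at least $2$ and therefore size $O(\occ)$; run rules $e \rightarrow \hat{e}^{k}$ contribute their $k$ evenly spaced offsets as an arithmetic progression stored in $O(1)$ space. All of this stays within $O(w)$ space, and it is precisely the resolution of what you correctly identified as ``the delicate point.''

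A second, smaller gap: you assert that extracting the core of $P$ takes $O(m)$ time, but the standard computation of $\tau(P)$ by alphabet reduction costs $O(m \log^{*} M)$, which would weaken your final bound to $O(m \log^{*} M + occ_{c} \log m \log^{*} M \log N)$. The paper obtains the additive $O(m)$ term only by invoking the data structure of Alstrup et al.\ for this step (it likewise removes a $\log\log w$ factor present in the original bound via perfect hashing, which your hash tables do supply). Without citing or reconstructing that result, the stated query time is not established.
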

\begin{proof}
	In the original paper of Takabatake et al.~\cite{DBLP:conf/wea/TakabatakeTS14}, their index performs count and locate queries in $O(\log \log w(m\log^{*} M + occ_{c} \log m \log^{*} M \log N))$ time and $\cOcc(c, T)$ in $O(occ \log N)$ time. We can remove the $\log \log w$ term, however, by using perfect hashing~\cite{DBLP:journals/jacm/FredmanKS84}. Since the $m \log^{*} M$ term is the computation time for $\tau(P)$, it can also be reduced to $m$ by using the data structure proposed by Alstrup et al.~\cite{LongAlstrup}. Both of these data structures use $O(w)$ space.
	
	Next, we show that we can compute $\cOcc(c, T)$ in $O(\occ)$ time by using $O(w)$ space. The directed acyclic graph (DAG) of $\mathcal{D}$ represents the derivation tree $\mathcal{T}$ of $\mathcal{G}$, and each node in the DAG represents a distinct variable in $\mathcal{V}$. For an edge $p$ in the DAG, let the \emph{relevant offset} between the parent $u$ and child $v$ be the length of the string generated by $e_{1} \cdots e_{i-1}$, where $u$ represents $e \rightarrow e_{1} \cdots e_{k} \in \mathcal{D}$ and $p$ represents the $i$-th edge in $u$. Then the occurrence position of a node in $\mathcal{T}$ is the sum of relevant offsets in the path starting at the root and ending at the node. Hence, we can compute $\cOcc(c, T)$ in $O(occ \log N)$ time by storing all relevant offsets, because the height of $\mathcal{T}$ is $O(\log N)$. Furthermore, the offsets can be stored in $O(w)$ space, because we can represent the relevant offsets between $e$ and $\hat{e}$ for $e \rightarrow \hat{e}^{k} \in \mathcal{D}$ by using $O(1)$ space.
	
	Next, for a node $v$ representing $e \in \mathcal{V}$, we store the lowest ancestor $u$ of $v$ such that the in-degree of $u$ is at least 2 or $u$ is the root in the DAG, i.e, the node representing $S$. We also store the sum of relevant offsets on the path starting at $u$ and ending at $v$. By summing up these values, we can reduce $O(occ \log N)$ to $O(occ)$ time, and the data structures use $O(w)$ space.
\end{proof}
We call the signature encoding built on $T_{q}$ a $q$-\emph{signature encoding}.
Our static self-index consists of the $q$-TST $\mathcal{X}$ of $T$ and the self-index of Lemma~\ref{lem:iesp} for $\mathcal{G}$ representing $T_{q}$. In turn, we construct $T_{q}$ from $\mathcal{X}$, $\mathcal{G}$ from $T_{q}$, and the self- index of Lemma~\ref{lem:iesp} from $\mathcal{G}$. The search algorithm is based on Theorem~\ref{theo:TST_index}, meaning that, for locate queries, we compute
$\locus(P)$ by using $\mathcal{X}$ and output all occurrences of $\locus(P)$ on
$T_{q}$ by using locate queries on $T_{q}$ if $|P| \leq q$. Otherwise, we compute
$P_{q}$ by using $\mathcal{X}$ and output all occurrences of $P_{q}$ on $T_{q}$ by
using locate queries on $T_{q}$. Similarly, we can perform count queries. Hence, we
obtain the following performance for our self-index from
Theorem~\ref{theo:TST_index} and Lemma~\ref{lem:iesp}.
\begin{theorem}\label{theo:tstesp}
For a string $T$ and an integer $q$, there exists an index using $O(|\Sigma_{T}^{q}| + w')$ space and supporting (i) locate queries in $O(m + \occ)$ time for $q$-short patterns, (ii) locate and count queries in $O( m + occ_{c} \log m \log^{*} M \log N)$ time for $q$-long patterns, and (iii) extract queries in $O(\ell + \log N)$ time, where $w'$ is the size of the $q$-signature encoding of $T$.
\end{theorem}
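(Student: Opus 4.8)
The plan is to instantiate the generic reduction of Theorem~\ref{theo:TST_index} with the concrete signature-encoding index of Lemma~\ref{lem:iesp}. I would take $\mathcal{I}(T_{q})$ to be the index of Lemma~\ref{lem:iesp} built on the transformed string $T_{q}$ (the $q$-signature encoding, of size $w'$), and pair it with the $q$-TST $\mathcal{X}$ of $T$. The space bound is then immediate: $\mathcal{X}$ occupies $O(|\Sigma_{T}^{q}|)$ space, the signature index occupies $O(w')$ space, and Theorem~\ref{theo:TST_index} packages the two into $O(|\Sigma_{T}^{q}| + w')$ total space. Extract queries (part (iii)) follow almost directly: I would extract the node-id string $T_{q}[i..i+\ell-1]$ from the $q$-signature encoding in $O(\ell + \log N)$ time via Lemma~\ref{lem:iesp}, and then recover each $T[j]$ as the first character of $\pathfunc(\code(T,j))$, at $O(1)$ per position provided the first character of every leaf is stored alongside $\mathcal{X}$ (an $O(|\Sigma_{T}^{q}|)$ overhead that is absorbed by the space bound).

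For the $q$-long case (part (ii)), the reduction is essentially a restatement. By Lemma~\ref{lem:occ} we have $\Occ(P,T) = \Occ(P_{q}, T_{q})$, so I would first compute $P_{q}$ in $O(m)$ time using $\suffixlink$ and $\child$ on $\mathcal{X}$ (as noted after Lemma~\ref{lem:occ}), and then issue the locate or count query of Lemma~\ref{lem:iesp} for $P_{q}$ on the $q$-signature encoding. Since $|P_{q}| = m - q + 1 \leq m$, substituting the $\Occ$ cost $c_{2} = O(m + occ_{c} \log m \log^{*} M \log N)$ into the $q$-long bound $O(m + c_{2})$ of Theorem~\ref{theo:TST_index} yields the claimed running time.

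The step I would treat most carefully, and the main obstacle, is the $q$-short locate bound (part (i)). Read literally, Theorem~\ref{theo:TST_index} charges $O(m + c_{1} \times |\leave(P)|)$ time with $c_{1}$ the cost of a single $\cOcc$ query; but the $\cOcc$ query supplied by Lemma~\ref{lem:iesp} is output-sensitive, costing $O(|\cOcc(c,T_{q})|)$ rather than a fixed constant, so one cannot simply plug in a uniform $c_{1}$. Instead I would redo the accounting directly. Computing $\leave(P)$ costs $O(m + |\leave(P)|)$ using perfect hashing ($g = O(1)$), and enumerating $\bigcup_{c \in \leave(P)} \cOcc(c,T_{q})$ costs $O(\sum_{c \in \leave(P)} |\cOcc(c,T_{q})|)$. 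By Lemma~\ref{lem:occ} this union is exactly $\Occ(P,T)$, and it is a disjoint union because each position of $T_{q}$ carries a single symbol; hence $\sum_{c \in \leave(P)} |\cOcc(c,T_{q})| = \occ$. Moreover every leaf in $\leave(P)$ corresponds to a $q$-gram or short suffix that actually occurs in $T$, so $|\cOcc(c,T_{q})| \geq 1$ and therefore $|\leave(P)| \leq \occ$. Combining these, the total collapses to $O(m + \occ)$. This output-sensitive telescoping is precisely where the speedup for short patterns originates, so I would present it as the crux rather than as a routine corollary of Theorem~\ref{theo:TST_index}.
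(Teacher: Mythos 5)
Your proposal is correct and follows essentially the same route as the paper: the paper's proof of Theorem~\ref{theo:tstesp} is precisely to pair the $q$-TST $\mathcal{X}$ with the Lemma~\ref{lem:iesp} index built on $T_{q}$ and invoke the reduction of Theorem~\ref{theo:TST_index}, exactly as you do for the space bound, the $q$-long case, and extract. The output-sensitive accounting you flag as the crux of the $q$-short case --- that $\bigcup_{c \in \leave(P)} \cOcc(c, T_{q})$ is a disjoint union equal to $\Occ(P,T)$, so the costs telescope to $O(m+\occ)$ and $|\leave(P)| \leq \occ$ --- is left implicit in the paper (it is the reason Lemma~\ref{lem:iesp} strengthens $\cOcc$ to $O(\occ)$ time), so making it explicit is a faithful filling-in of the paper's argument rather than a different proof.
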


We give the following upper bound for $w^\prime$.
\begin{theorem}\label{thorem:tight_upper_bound}
  For a $q$-signature encoding of $T$, $w' = O(z(q + \log N \log^* M))$ holds.
\end{theorem}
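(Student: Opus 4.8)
The plan is to bound $w'$, the number of distinct variables in the signature encoding of $T_q$, by refining the counting argument behind Lemma~\ref{lem:upperbound_signature} rather than invoking it as a black box. Applying that lemma directly to $T_q$ only yields $O(z_q \log N \log^{*} M)$, where $z_q$ is the number of LZ77 factors of $T_q$; since the $q$-TST transformation can split each factor of $T$ into roughly $q$ pieces near its boundary, one only gets $z_q = O(zq)$ and hence the weaker bound $O(zq \log N \log^{*} M)$. To obtain the \emph{additive} bound I will instead transfer the LZ77 factorization of $T$ directly to $T_q$ and count new variables level by level.

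First I would record that $|T_q| = N$, since $\code(T,i)$ assigns exactly one symbol to each position $i$ of $T$. Next I would build an induced factorization of $T_q$ from $\LZ(T) = f_1, \ldots, f_z$. Writing $f_i = T[b_{i-1}..b_i - 1]$ with a self-reference-free earlier occurrence starting at some $s < b_{i-1}$, the key observation---essentially the $|P|>q$ case of Lemma~\ref{lem:occ} applied locally---is that every $q$-gram lying entirely inside $f_i$ is translated identically to the corresponding $q$-gram inside the source. Hence $T_q[b_{i-1}..b_i - q] = T_q[s..s+(b_i-b_{i-1})-q]$ is a copy of an earlier substring of $T_q$, while only the last $q-1$ positions of $f_i$ (whose $q$-grams spill across the boundary into $f_{i+1}$) need not be copied. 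This exhibits $T_q$ as an interleaving of $z$ copy blocks with $z$ ``spillover'' blocks, so there are $O(z)$ block boundaries and the total length of the non-copied spillover content is at most $z(q-1) = O(zq)$ (with an extra $O(q)$ for the short suffixes at the end of $T$, and with the copy block empty for factors shorter than $q$).

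I would then count the distinct variables by their level of first appearance in the derivation tree $\SE{0}{T_q}, \SE{1}{T_q}, \ldots$, of height $O(\log N)$. The key tool is the local-consistency property (Lemma~\ref{lem:lc}): at every level the blocking is determined by a window of only $\deltaLR{L} + \deltaLR{R} + 1 = O(\log^{*} M)$ surrounding symbols, using $\log^{*}(4M) = O(\log^{*} M)$ for the colour bound. Consequently, if a variable's leftmost occurrence at some level lies inside a copy block and more than $O(\log^{*} M)$ away from both endpoints of that block, then the same variable already occurs at the matching position in the source, contradicting leftmostness. Thus each newly introduced variable is charged either to one of the $O(z)$ block boundaries---$O(\log^{*} M)$ variables per boundary per level over $O(\log N)$ levels, giving $O(z \log N \log^{*} M)$ in total---or to a position lying inside the spillover content. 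For the latter, summing the number of symbol occurrences whose derived range sits inside the spillover blocks over all levels is a geometric series bounded by a constant times the total spillover length, namely $O(zq)$. Adding the two contributions gives $w' = O(z \log N \log^{*} M + zq) = O(z(q + \log N \log^{*} M))$.

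The part I expect to be the most delicate is making the ``copy blocks reuse variables'' step fully rigorous: one must verify that the local-consistency guarantee of Lemma~\ref{lem:lc} propagates up \emph{every} level, so that the parse inside a copy block agrees with the parse of its source except within an $O(\log^{*} M)$ margin of the two endpoints, and one must check that variables straddling a copy/spillover boundary are absorbed into the $O(\log^{*} M)$-per-boundary count rather than being double-counted or missed. The end-of-string effects, where $q$-grams degenerate into short suffixes, and the factors shorter than $q$ must also be handled, but these contribute only $O(q)$ and are subsumed in the $O(zq)$ term.
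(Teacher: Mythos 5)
Your argument is sound and reaches the right bound, but it takes a genuinely different route from the paper's. The paper's proof (Section~\ref{sec:space}) keeps the charging scheme of Lemma~\ref{lem:upperbound_signature} verbatim---every distinct variable is charged to one of the $z$ LZ77 boundaries of $T$---and makes exactly one change: the level-$0$ dependency of the $i$-th symbol of $T_{q}$ is redefined to be the window $[i,i+q-1]$ of $T$, so that all dependency intervals live on $T$ rather than on $T_{q}$, and $|K(x,0)| \leq q$. The recurrence of Lemma~\ref{lem:depend3} then shrinks $|K(x,t)|$ geometrically from $q$ down to its fixed point $O(\log^{*} M)$, so the per-boundary sum $\sum_{t}|K(x,t)|$ telescopes to $O(q+\log N\log^{*} M)$ and the theorem follows without ever decomposing $T_{q}$. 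You instead keep the standard dependencies on $T_{q}$, build an explicit copy/spillover factorization of $T_{q}$ induced by $\LZ(T)$, and split the count in two: first occurrences whose dependency interval meets a block boundary cost $O(\log^{*} M)$ per boundary per level (this is Lemma~\ref{lem:depend3} with $|K(x,0)|=1$, and ruling out first occurrences wholly inside a copy block is exactly the role of Observations~\ref{ob:depend} and~\ref{ob:depend2}), while those wholly inside spillover blocks are dominated by a geometric series totalling $O(zq)$. The paper's route buys brevity: given the machinery already set up for Lemma~\ref{lem:upperbound_signature}, it is a two-line modification. Your route buys a sharper structural picture of $T_{q}$: the copy/spillover decomposition subsumes the paper's separate lemma $z'=O(zq)$, and it in effect proves the more general statement that any factorization into previously-occurring factors plus extra material of total length $\ell$ forces only $O(z\log N\log^{*} M+\ell)$ distinct variables. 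Two details in your write-up need tightening, both repairable with the paper's own tools: (i) the reuse condition inside a copy block must be ``the whole dependency interval $\dep_{t}^{T_{q}}(i)$ lies inside the block,'' not ``more than $O(\log^{*} M)$ positions from its endpoints''---measured in text positions the dependency margin grows with the level, and it is only the count $|K(x,t)|$ of variables over a fixed position that stays $O(\log^{*} M)$; and (ii) the geometric series needs the halving property that level-$t$ symbols derive disjoint ranges of length at least $2^{\lfloor t/2\rfloor}$, which holds because every LCPR block has length at least $2$.
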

\begin{proof}
  See Section~\ref{sec:space}.
\end{proof}
We also give an upper bound $|\Sigma_{T}^{q}| = O(zq)$~\cite{DBLP:journals/corr/TanimuraNBIT17}. Thus, the size of our index is $O(z(q + \log N \log^* M))$. The results of an empirical evaluation of the TST-index are given in Section~\ref{sec:exp}.

\subsection{Dynamic TST-indexes}
In this section, we consider maintaining our self-index for $q$-short patterns with a dynamic text $T$. Recall that our index consists of a $q$-TST $\mathcal{X}$ of $T$ and an index of $T_{q}$. In the dynamic setting, the index is still based on Theorem~\ref{theo:TST_index}, but we use the following data structure for $T_{q}$ instead of that given by Lemma~\ref{lem:iesp}.
\begin{lemma}[Dynamic signature encoding~\cite{DBLP:conf/mfcs/NishimotoIIBT16}]\label{lem:dse}
  With the addition of data structures taking $O(w)$ space, $\mathcal{G}$ can support update operations in $O(f_{\mathcal{A}}(k + \log N \log^{*} M))$ time. The modified data structure also supports computing $\cOcc(c, T)$ in $O(\occ \log N)$ time for $c \in \Sigma$ and extract queries in $O(\ell + \log N)$ time.
\end{lemma}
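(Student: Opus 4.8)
To prove Lemma~\ref{lem:dse}, the plan is to augment the grammar $\mathcal{G}$ with three auxiliary structures, each of $O(w)$ space, and then handle the three claimed operations in turn. First I would store a \emph{reverse dictionary} mapping each right-hand side $f_i$ (either a run $\hat{e}^k$ or a variable sequence of length $2 \le d \le 4$) to its variable $e_i$; implemented with the predecessor/successor structure of Beame and Fich~\cite{DBLP:journals/jcss/BeameF02}, it occupies $O(w)$ space and answers a lookup-or-insert in $O(f_{\mathcal{A}})$ time, where $f_{\mathcal{A}} = f(w, M)$. Second, I would attach the relevant-offset annotations on the DAG of $\mathcal{D}$ exactly as in the proof of Lemma~\ref{lem:iesp}, in $O(w)$ space. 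Third, I would keep the run lengths $k$ of each rule $e \rightarrow \hat{e}^k$ stored as integers, so that position queries can be answered by descending from $S$. All three additions stay within $O(w)$ space.

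For the update operation I would re-parse only the region of each level that the edit can possibly change and propagate the change upward through the $O(\log N)$ levels of the derivation tree. The crux is the local consistency guarantee of Lemma~\ref{lem:lc}(2): the parsing bit $\tau$ at a position is determined by the window extending $\deltaLR{L} = \log^* c + 6$ positions to its left and $\deltaLR{R} = 4$ positions to its right, so beyond these margins the old and new parses of $\SE{t}{T}$ coincide and no variable changes. Hence the affected window at level $0$ has length $O(k + \deltaLR{L} + \deltaLR{R}) = O(k + \log^* M)$, and at each higher level the window shrinks by the $\encblock_{4M}$ or run-length blocking factor while growing only additively by $\deltaLR{L} + \deltaLR{R} = O(\log^* M)$ from the re-examined margins. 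Summing the additive contributions over the $O(\log N)$ levels gives a total of $O(k + \log N \log^* M)$ affected positions, each of which costs one $O(f_{\mathcal{A}})$ reverse-dictionary operation to rename; the update time is therefore $O(f_{\mathcal{A}}(k + \log N \log^* M))$.

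For $\cOcc(c, T)$ I would reuse the DAG-and-offset machinery of Lemma~\ref{lem:iesp} but drop the static lowest-branching-ancestor shortcut, which is too expensive to maintain under edits. Each occurrence of $c$ corresponds to a distinct root-to-node path in the derivation tree, whose length is $O(\log N)$; summing the stored relevant offsets along the path yields the occurrence position, for a total of $O(\occ \log N)$ time. For extract I would descend from $S$ to position $i$ in $O(\log N)$ time using the offsets and then walk the derivation tree left to right, expanding each run $\hat{e}^k$ in $O(1)$ amortized time per output character, which gives $O(\ell + \log N)$ time overall.

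The main obstacle is the propagation bound in the update step: one must show rigorously that the edit's influence at each of the $O(\log N)$ levels is confined to a window whose size exceeds that of the previous level by only the additive margin $\deltaLR{L} + \deltaLR{R} = O(\log^* M)$, and that this survives the alternation between the $\encblock_{4M}$ levels and the run-length levels. This is precisely where Lemma~\ref{lem:lc}(2) is indispensable. The delicate part is the interaction with run-length encoding: a single edit can split or merge a long run $\hat{e}^k$, and the argument must exploit the fact that such a run is stored compactly by its integer length, so that splitting or merging it is $O(1)$ work rather than $O(k)$, thereby preventing any unbounded recomputation.
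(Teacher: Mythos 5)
The paper itself gives no proof of this lemma: it is imported as a black-box result from the cited work of Nishimoto et al.~\cite{DBLP:conf/mfcs/NishimotoIIBT16}, so there is no internal argument to compare yours against. On its own merits, your reconstruction follows the same route as that cited work: the local consistency guarantee of Lemma~\ref{lem:lc}(2) confines an edit's effect to a window of $O(\log^* M)$ variables per level across the $O(\log N)$ levels of the derivation tree, giving $O(k + \log N \log^* M)$ affected variables; a dynamic reverse dictionary (rather than static perfect hashing) accounts for the $f_{\mathcal{A}}$ factor per renamed variable; runs $\hat{e}^k$ are split and merged in $O(1)$ time thanks to their compact integer representation; and dropping the static lowest-branching-ancestor shortcut used in the proof of Lemma~\ref{lem:iesp} is precisely why $\cOcc(c,T)$ degrades from $O(\occ)$ to $O(\occ \log N)$ in the dynamic setting. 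Your sketch is consistent with the lemma's stated bounds and with how the cited paper establishes them, with only routine bookkeeping (e.g., maintaining the relevant-offset annotations during re-parsing) left implicit.
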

Hence, our dynamic index supports locate queries in $O(mg + occ \log N)$ time for $q$-short patterns. For count queries, we append $|\Occ(\pathfunc(v), T)|$ to $v \in U_{exp}$ as extra information. We also append $\suffixlink(v)$ to $v \in U_{L}$.

The remaining problem is how to update these data structures when $T$ is changed. The main problem is how $T_{q}$ changes when $T$ is changed. If we can understand that process, then we can update the data structures through each update operation.

Let $\code(T, i, j) = \code(T, i) \cdots \code(T, j)$ for a string $T$ and two integers $i$ and $j$, i.e., $\code(T, i, j)=T_{q}[i..j]$. For strings $x, y, z \in \Sigma^{*}$, the following equations hold, where $s = x[|x|-q+1]z[1..q]$ and $s' = x[|x|-q+1]\cdot y \cdot z[1..q]$:

\begin{eqnarray}\label{Eq:concat}
  \code(xz, 1, |xz|) &=& \code(x, 1, |x| - q + 1) \code(s, 1, |s| - q + 1) \code(z, 1, |z|) \nonumber　 \\
  \code(xyz, 1, |xyz|) &=&  \code(x, 1,  |x| - q + 1) \code(s', 1, |s'| - q + 1) \code(z, 1, |z|) 　
\end{eqnarray}

We can explain the changes in $T_{q}$ due to update operations by using Equation~\ref{Eq:concat}. First, $T_{q} = \code(xz, 1, |xz|)$ changes to $\code(xyz, 1, |xyz|)$ when we update $T$ by $\Insert(T, i, K)$, where $x = T[1..i-1]$, $y = K$, and $z = T[i..|T|]$. Similarly, $T_{q} = \code(xyz, 1, |xyz|)$ changes to $\code(xz, 1, |xz|)$ when we update $T$ by $\Delete(T, i, k)$, where $x = T[1..i-1]$, $y = T[i..i+k-1]$, and $z = T[i+k..|T|]$. A substring of length at most $|y| + 2q$ only changes in the $q$-TST leaf sequence of $xz$ when we insert $y$ between $x$ and $z$. This means that we can update $T_{q}$ with two insertions and deletions when $T$ is updated by an insertion or deletion.

Next, we consider the update algorithm for our self-index. For $\Insert(T, i, K)$, we update $\mathcal{X}$ and $\mathcal{G}$ by the following four steps:
\begin{description}
  \item[(i)] Insert all $q$-grams in $s'$ into $\mathcal{X}$.
  \item[(ii)] Compute $\code(s', 1, |s'| - q + 1)$ by using $\mathcal{X}$.
  \item[(iii)] Insert $\code(s', 1, |s'| - q + 1)$ into $T_{q}$, and remove $\code(s, 1, |s| - q + 1)$ from $T_{q}$.
  \item[(iv)] Remove old~(unused) $q$-grams from $\mathcal{X}$.
\end{description}
Similarly, we update the $q$-TST and $\mathcal{G}$ for $\Delete(T, i, k)$. Step (iii) can run in $O(f_{\mathcal{B}}(k + q + \log N \log^{*} M))$ time by Lemma~\ref{lem:dse}. Note that we can detect old $q$-grams by checking $V$, because $\mathcal{G}$ removes old variables during updating.

Next, we consider how to update the $q$-TST $\mathcal{X}$ of $T$. If $\mathcal{X}$ does not have extra information, then it can be updated in $O((k+q)q\hat{g})$ time. Hence, we need to show that $\mathcal{X}$ can maintain extra information without increasing the update time. When a new $q$-gram $P$ is created in $T$, we insert $P$ into $\mathcal{X}$ and increment the value representing $|\Occ(\pathfunc(v), T)|$ for each explicit node $v$ on the path from the root to $u = \locus(P)$. This runs in $O(|P| \hat{g})$ time. If $u$ is a node created by this insertion, then we need to compute $\suffixlink(u)$. We can also do this in $O(|P| \hat{g})$ time by computing $\locus(P[2..])$. Note that $\locus(P[2..])$ always exists after computing Step 1 in the update algorithm. Similarly, we can update the extra information in the same time when a $q$-gram is removed from $T$. Hence Steps (i), (ii), and (iv) can run in $O((k+q)q\hat{g}) $ time.

The only remaining point of discussion is maintenance of leaf IDs in the $q$-TST. When a new leaf $v$ is created in $\mathcal{X}$, we need to assign an unused integer to $v$. In the dynamic setting, we use the address representing leaf $v$ as this integer, so every character in $T_{q}$ uses $W$ bits.

Since $|U| = O(q |\Sigma_{T}^{q}| ) = O(zq^2)$, we obtain Theorem~\ref{theo:dynamic_tstesp0} from Lemma~\ref{lem:dse} and Theorem~\ref{thorem:tight_upper_bound}. Note that our data structure can support extract queries in $O(\ell + \log N)$ time by using Lemma~\ref{lem:dse}.

\section{Tight upper bound for signature encodings of $T_{q}$ }\label{sec:space}
In this section, we obtain the proof of Theorem~\ref{thorem:tight_upper_bound} by using the proof of Lemma~\ref{lem:upperbound_signature}. Note that we can simply show that $w' = O(zq \log N \log^* M)$ by using the following lemma and Lemma~\ref{lem:upperbound_signature}, but this order is larger than that in Theorem~\ref{thorem:tight_upper_bound}.

\begin{lemma}
	$z' = O(zq)$ holds for a string $T$ and integer $q$, where $z' = \LZ(T_{q})$.
\end{lemma}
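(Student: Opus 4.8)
The plan is to exhibit an explicit factorization of $T_{q}$ into at most $zq$ phrases, each of which is either a single character or a copy of an earlier non-self-referential occurrence, and then to invoke the optimality of the greedy LZ77 parse: since $\LZ(T_{q})$ uses the fewest phrases among all such factorizations, producing any valid factorization of size at most $zq$ immediately gives $z' \le zq = O(zq)$.

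The observation that drives the construction is that the $q$-TST transformation is $q$-local and therefore contracts matches by exactly $q-1$. Concretely, if $T[a..a+L-1] = T[a'..a'+L-1]$ is a match of length $L \ge q$, then for every offset $t$ with $0 \le t \le L-q$ the $q$-grams $T[a+t..a+t+q-1]$ and $T[a'+t..a'+t+q-1]$ are equal and hence map to the same $q$-TST leaf, so $T_{q}[a..a+L-q] = T_{q}[a'..a'+L-q]$. Thus a length-$L$ repeat in $T$ induces a length-$(L-q+1)$ repeat in $T_{q}$, and the only positions lost are the last $q-1$ positions of the repeat.

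Using this, I would process the LZ77 factors of $T$ one at a time. Write $\LZ(T)=f_{1},\ldots,f_{z}$ with boundaries $1=b_{1}<\cdots<b_{z+1}=|T|+1$, so that $f_{i}=T[b_{i}..b_{i+1}-1]$, and when $f_{i}$ is a previous factor let $p_{i}<b_{i}$ be a source with $p_{i}+|f_{i}|-1<b_{i}$ (no self-reference). For the block of positions $[b_{i},\,b_{i+1}-q]$ whose $q$-grams lie wholly inside $f_{i}$, the locality observation gives $T_{q}[b_{i}..b_{i+1}-q]=T_{q}[p_{i}..p_{i}+|f_{i}|-q]$, and the source block ends at $p_{i}+|f_{i}|-q \le b_{i}-q<b_{i}$, i.e.\ strictly before the copy; this makes the whole block a single legal non-self-referential phrase of $T_{q}$. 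The remaining at most $q-1$ positions $[b_{i+1}-q+1,\,b_{i+1}-1]$ (and the whole of $f_{i}$ when $|f_{i}|<q$, as well as the short-suffix positions of the last factor) I would emit as single-character phrases, which are always legal. Each factor of $T$ therefore contributes at most $1+(q-1)=q$ phrases, for a total of at most $zq$.

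The step I expect to require the most care is the transfer of this bound on an arbitrary valid factorization to $z'=|\LZ(T_{q})|$ itself, i.e.\ the optimality of the greedy longest-match parse in the \emph{without}-self-reference setting. The clean route is an exchange argument: if $G_{0}\le G_{1}\le\cdots$ and $A_{0}\le A_{1}\le\cdots$ denote the cumulative phrase lengths of greedy and of any competing valid parse, one proves $G_{j}\ge A_{j}$ by induction, the crucial point being that a suffix of a non-self-referential phrase is itself non-self-referential when read from a later starting position, so greedy can always match the competitor's next boundary and never falls behind. The only other delicate bookkeeping is the boundary handling, namely verifying that the copied block uses genuine $q$-grams (short suffixes arise only among the $q-1$ trailing single-character positions) and that $p_{i}+|f_{i}|-q<b_{i}$; both are routine once the locality observation is established.
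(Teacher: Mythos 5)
Your proof is correct, but it reaches the bound by a different mechanism than the paper. The paper argues directly about the greedy parse of $T_{q}$: it introduces the longest-previous-factor functions $LPF(i)$ and $LPF_{q}(i)$ for $T$ and $T_{q}$, shows via the same $q$-locality observation you use that $|LPF_{q}(i)| \geq \max \{|LPF(i)| - (q-1), 0\}$, and concludes that each LZ77 factor of $T$ is split into at most $q$ greedy factors of $T_{q}$ (one long step reaching within $q-1$ positions of the factor's right boundary, then at most $q-1$ short steps). You instead exhibit an explicit competing factorization of $T_{q}$ with at most $q$ phrases per factor of $T$, and transfer the bound to $z'$ via the optimality of the greedy non-self-referential parse, proved by an exchange argument. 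The two proofs share their combinatorial core: the key step of your exchange argument --- that a suffix of a non-self-referential phrase, read from a later start, is again non-self-referential --- is precisely the fact the paper uses implicitly when it asserts that greedy ``divides'' each factor into at most $q$ pieces (inside a factor, $LPF(j)$ is at least the remaining factor length for the same reason). What your route buys is modularity and rigor: the optimality principle is stated and proved rather than absorbed into the paper's terse ``this means that'' step, and the boundary bookkeeping (truncated $q$-grams near the end of $T$, the non-overlap condition $p_{i}+|f_{i}|-q < b_{i}$) is made explicit. What it costs is the extra optimality lemma, which the paper's direct pointwise $LPF$ comparison avoids; both arguments yield the same explicit bound $z' \leq zq$.
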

\begin{proof}
	Let $LPF(i)$ and $LPF_{q}(i)$ be the longest previous factors without self-reference at position $i$ in $T$ and $T_{q}$, respectively. Then $|LPF_{q}(i)| \geq \max \{|LPF(i)| - (q-1), 0 \}$ holds for $1 \leq i \leq N$ by a $q$-TST transformation. This means that every LZ77 factor of $\LZ77(T)$ is divided into at most $q$ LZ77 factors on $T_{q}$. Hence, $z' = O(zq)$.	
	\end{proof}
\subsection{The proof of Lemma~\ref{lem:upperbound_signature}}
We begin by explaining our approach for the proof of Lemma~\ref{lem:upperbound_signature}. Since $|V|$ is equal to $X$, the number of distinct variables in the derivation tree of $\mathcal{G}$, we try to bound $X$ through the following idea: (1) Since each variable is determined by a local context (substring) of $T$ obtained by signature encoding, the variables for common local contexts are the same. (2) Let $\LZ(T) = f_{1}, \ldots f_{z}$. Since each factor is a longest previous factor in $T$, a variable whose local context is contained in an LZ77 factor also occurs at another position. Hence, the upper bound of $X$ is $Y$, the number of variables whose local context is on two or more LZ77 factors of $T$. (3) Finally, since there exist $O(\log N \log^* M)$ such variables in a factor, we can obtain Lemma~\ref{lem:upperbound_signature}.

We can now work through our approach more formally. Every variable in $\mathcal{G}$ is determined by a strictly local context of $T$. To be precise, every variable for $t=0$ is determined only by the represented character. For a positive even integer $t$, every variable is determined by the $\SE{t}{T}$ used by the bit sequence representing the form of the variable. Recall Example~\ref{ex:Encblock}. The factor representing $abcd$ does not change as long as $g(T)[4..8]$ does not change. We can compute $g(T)[4..8]$ from $T[4-\Delta_{L}..8+\Delta_{R}]$, so the factor depends locally on $T[4-\Delta_{L}..8+\Delta_{R}]$. For a positive odd integer $t$, every variable is determined by a run representing $e$ and the left and right characters at either end of the run. Recall the example of $\encpow{T}$. The factor $b^5$ changes into $b^6$ if the left character is $b$, but the factor always remains $b^5$ as long as the characters at both ends do not change. Hence, the factor depends locally on $T[2..8]$. In short, a variable on $\SE{t}{T}$ depends on an interval on $\SE{t-1}{T}$. By recursively applying this rule, a variable ultimately depends on an interval on $T$. For a variable on $\SE{t}{T}[i]$, let $\dep_{t}^{T}(i)$ be the corresponding interval on $T$ on which it depends. Formally, let $\SE{t}{T}[i]$ represent $\SE{t- 1}{T}[L..R]$ when $t$ is a positive even integer, and $\SE{t-1}{T}[L'..R']$ when $t$ is a positive odd integer. Then, we define $\dep_{t}^{T}(i)$ as follows.
\begin{eqnarray*}
	\dep_{t}^{T}(i) &=&
	\begin{cases}
		\{ i \} & \mbox{ for } t = 0, \\
		\bigcup_{x = L - \Delta_{L}}^{R+1+\Delta_{R}} \dep_{t-1}^{T}(x) & \mbox{ for }  t = 2, 4, \ldots, \\
		\bigcup_{x = L' - 1}^{R' + 1} \dep_{t-1}^{T}(x) & \mbox{ for } t = 1, 3, \ldots, \\
	\end{cases} \\
\end{eqnarray*}
\begin{example}
	Recall Example~\ref{ex:se}. Let $\Delta_{L}$ = 2 and $\Delta_{R} = 1$. Then $\dep_{0}^{T}[7] = \{7 \}$, $\dep_{1}^{T}[8] = \{7, \ldots, 10 \}$, and $\dep_{2}^{T}[5] = \dep_{1}^{T}[10-\Delta_{L}] \cup \cdots \cup \dep_{1}^{T}[12+\Delta_{R}] = \{7, \ldots, 15 \}$.
\end{example}
The local context of $\SE{t}{T}[i]$ is a substring on $\dep_{t}^{T}(i)$ in $T$. Note
that each $\dep_{t}^{T}(i)$ represents an interval on $[1, N]$. The following
observation holds with respect to signature encodings and $\dep$.
\begin{observation}\label{ob:depend}
	For two integers $i < j$ and an integer $t$, let $\dep_{t}^{T}(i) = [\ell, r]$ and
	$\dep_{t}^{T}(j) =[\ell', r']$. Then $\ell < \ell'$ and $r < r'$ hold.
\end{observation}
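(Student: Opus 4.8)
The plan is to argue by induction on the level $t$, proving simultaneously three statements about the intervals $\dep_{t}^{T}(i)=[\ell_{i},r_{i}]$: (A) each $\dep_{t}^{T}(i)$ is a single contiguous interval (this is also the fact asserted just before the observation); (B) the claim itself, namely $\ell_{i}<\ell_{j}$ and $r_{i}<r_{j}$ whenever $i<j$; and (C) consecutive intervals overlap, $\ell_{i+1}\le r_{i}$. Property (C) is the extra bookkeeping that makes the unions in the definition of $\dep$ come out as genuine intervals, so I carry it through the induction even though only (B) is asserted. The base case $t=0$ is immediate: $\dep_{0}^{T}(i)=\{i\}=[i,i]$, so $\ell_{i}=r_{i}=i$, and (A), (B), (C) all hold trivially.

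For the inductive step, first take $t$ even, so $\SE{t}{T}$ is obtained from $\SE{t-1}{T}$ by the parsing $\encblock_{4M}$. Because an LCPR partitions $\SE{t-1}{T}$ into consecutive blocks, if $\SE{t}{T}[i]$ covers $\SE{t-1}{T}[L_{i}..R_{i}]$ then $L_{i+1}=R_{i}+1$, and hence $L_{i}<L_{j}$ and $R_{i}<R_{j}$ for every $i<j$. By the definition, $\dep_{t}^{T}(i)$ is the union of the level-$(t-1)$ intervals $\dep_{t-1}^{T}(x)$ over $x\in[L_{i}-\Delta_{L},\,R_{i}+1+\Delta_{R}]$. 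Using the inductive hypotheses (B) and (C) at level $t-1$, these contributing intervals are strictly increasing and pairwise overlapping, so their union is again an interval whose left endpoint is that of the smallest term and whose right endpoint is that of the largest term; concretely $\ell_{i}$ is the left endpoint of $\dep_{t-1}^{T}(L_{i}-\Delta_{L})$ and $r_{i}$ is the right endpoint of $\dep_{t-1}^{T}(R_{i}+1+\Delta_{R})$. This proves (A) at level $t$. For (B), since $i<j$ gives $L_{i}-\Delta_{L}<L_{j}-\Delta_{L}$ and $R_{i}+1+\Delta_{R}<R_{j}+1+\Delta_{R}$, applying (B) at level $t-1$ to these index pairs yields $\ell_{i}<\ell_{j}$ and $r_{i}<r_{j}$. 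For (C), the source index $L_{i+1}-\Delta_{L}=R_{i}+1-\Delta_{L}$ of $\ell_{i+1}$ is at most the source index $R_{i}+1+\Delta_{R}$ of $r_{i}$ (here the padding $\Delta_{L},\Delta_{R}\ge 0$ is used), so monotonicity at level $t-1$ gives $\ell_{i+1}\le r_{i}$.

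The odd case $t=1,3,\dots$ is identical in form: $\encpow{\,\cdot\,}$ (run-length encoding) again partitions $\SE{t-1}{T}$ into consecutive runs, so if $\SE{t}{T}[i]$ covers $\SE{t-1}{T}[L'_{i}..R'_{i}]$ then $L'_{i+1}=R'_{i}+1$ and the range of $x$ in the definition is $[L'_{i}-1,\,R'_{i}+1]$; the same three computations go through verbatim with $\Delta_{L},\Delta_{R}$ replaced by the padding $1$. The main obstacle, and the only place where the argument uses anything beyond monotonicity, is establishing property (A)/(C): one must know that the union defining $\dep_{t}^{T}(i)$ has no internal gap, so that its two endpoints are exactly the extreme endpoints of the contributing level-$(t-1)$ intervals. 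This is precisely what the one-sided padding ($\Delta_{L},\Delta_{R}$ in the even step and the $\pm1$ in the odd step) guarantees, via the overlap hypothesis (C); once the endpoints are identified, the strict inequalities of (B) reduce to the strict ordering $L_{i}<L_{j}$, $R_{i}<R_{j}$ of the covered index ranges, which is immediate from the fact that LCPR and RLE both produce consecutive, non-overlapping blocks. (Near the two ends of $T$ the index ranges are clipped to valid positions; this only shrinks the unions and does not affect any of the inequalities.)
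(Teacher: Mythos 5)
First, a point of reference: the paper offers no proof of this observation at all---it is stated as self-evident from the definition of $\dep$---so your induction supplies an argument the paper leaves implicit, and its overall structure (strengthening the hypothesis to include contiguity of each $\dep_{t}^{T}(i)$ and overlap of consecutive intervals, then identifying $\ell_i$ and $r_i$ with the endpoints of the first and last contributing sub-intervals) is the right way to make the claim rigorous. However, as written there is one concrete false step: your property (C), $\ell_{i+1}\le r_{i}$, does \emph{not} hold at the base level $t=0$, where $\dep_{0}^{T}(i)=[i,i]$ and $\dep_{0}^{T}(i+1)=[i+1,i+1]$ are adjacent but disjoint, i.e.\ $\ell_{i+1}=r_{i}+1$. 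You assert that (A), (B), (C) ``all hold trivially'' at $t=0$, and you then invoke (C) at level $t-1$ to argue that the union defining $\dep_{t}^{T}(i)$ has no internal gap; at $t=1$ that invocation rests on a false premise. The repair is immediate and worth stating explicitly: weaken (C) to adjacency-or-overlap, $\ell_{i+1}\le r_{i}+1$, which does hold at $t=0$, still forces every union of consecutive contributing intervals to be contiguous, and whose inductive step is exactly the computation you already give (indeed your level-$t$ argument establishes the stronger overlap $\ell_{i+1}\le r_{i}$ for all $t\ge 1$ using only monotonicity (B) at level $t-1$).

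A second, smaller issue is your closing parenthetical that clipping at the ends of $T$ ``does not affect any of the inequalities.'' It can: for $t=2$ the contributing range starts at $L_{i}-\Delta_{L}$ with $\Delta_{L}=\log^{*}c+6\ge 7$, so for $i=1$ and $j=2$ both ranges are clipped to begin at position $1$, giving $\ell_{1}=\ell_{2}$ and violating the \emph{strict} inequality. This boundary defect is really in the paper's statement itself, and it is harmless for the only use made of the observation (the proof of Lemma~\ref{lem:depend3} needs only non-strict monotonicity to conclude that $K(x,t)$ is a contiguous range of positions), but your proof should either restrict attention to indices where no clipping occurs or weaken the conclusion to $\ell\le\ell'$, $r\le r'$ near the boundaries, rather than assert that clipping changes nothing.
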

\begin{observation}\label{ob:depend2}
	Let $[\ell, r]$ and $[\ell', r' ]$ be two intervals on $[1, N]$ such that $T[\ell..r] =
	T[\ell'..r']$ holds. If there exists two integers $i$ and $t$ such that $\dep_{t}^{T}(i) =
	[\ell, r]$, then there exists an integer $j$ such that $\dep_{t}^{T}(j) = [\ell', r']$.
\end{observation}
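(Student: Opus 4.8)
The plan is to prove the statement by induction on the level $t$, strengthening the inductive hypothesis so that it also records that the two matching variables coincide. Writing $\delta = \ell' - \ell$, I would prove the following: \emph{if $\dep_{t}^{T}(i) = [\ell, r]$ and $T[\ell..r] = T[\ell'..r']$, then there is a position $j$ with $\dep_{t}^{T}(j) = [\ell',r'] = [\ell+\delta, r+\delta]$ and $\SE{t}{T}[j] = \SE{t}{T}[i]$.} Carrying the variable equality is essential, since it is exactly what lets the conclusion at level $t$ feed the argument at level $t+1$: a level-$(t+1)$ variable is built from a block (even levels, via $\encblock_{4M}$) or a maximal run (odd levels, via run-length encoding) of level-$t$ variables, so reproducing it requires reproducing those variables, not merely their dependency intervals. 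The base case $t=0$ is immediate: $\dep_{0}^{T}(i) = \{i\}$ forces $[\ell,r]$ to be a single position, $T[\ell..r]=T[\ell'..r']$ forces $[\ell',r']$ to be the single position $\{\ell'\}$, and $\SE{0}{T}$ is obtained by applying $\assign^{+}$ character by character, so equal characters give equal variables; take $j = \ell'$.

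For the inductive step I would treat both parities uniformly. Suppose $\dep_{t}^{T}(i)=[\ell,r]$, so $\SE{t}{T}[i]$ encodes a block (resp.\ run) $\SE{t-1}{T}[L..R]$ of level-$(t-1)$ variables, and $[\ell,r]$ is the union $\bigcup_{x}\dep_{t-1}^{T}(x)$ over the index range appearing in the definition of $\dep$ (enlarged by $\Delta_{L},\Delta_{R}$ for even $t$, by $\pm 1$ for odd $t$). Every such $\dep_{t-1}^{T}(x)$ lies in $[\ell,r]$, so $T[\dep_{t-1}^{T}(x)] = T[\dep_{t-1}^{T}(x)+\delta]$, and the inductive hypothesis supplies positions $y_{x}$ with $\dep_{t-1}^{T}(y_{x}) = \dep_{t-1}^{T}(x)+\delta$ and $\SE{t-1}{T}[y_{x}] = \SE{t-1}{T}[x]$. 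The first thing to verify is that the $y_{x}$ are \emph{consecutive}, i.e.\ $y_{x+1}=y_{x}+1$. This I would derive from the strict monotonicity of Observation~\ref{ob:depend} together with the inductive hypothesis applied in the reverse direction (shift $-\delta$): any integer strictly between $y_{x}$ and $y_{x+1}$ would have a dependency interval contained in $[\ell',r']$, which shifted back by $\delta$ yields a dependency interval strictly between the consecutive intervals $\dep_{t-1}^{T}(x)$ and $\dep_{t-1}^{T}(x+1)$ — impossible by the order-preserving bijection of Observation~\ref{ob:depend}. Consecutiveness shows the whole block (resp.\ run) of level-$(t-1)$ variables is reproduced verbatim at the shifted location.

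It remains to reproduce the parse itself. For even $t$ the block boundaries are the cut points of $\tau$, and I would invoke Lemma~\ref{lem:lc}: the bit $\tau(\SE{t-1}{T})[p]$ depends only on the window $\SE{t-1}{T}[p-\Delta_{L}..p+\Delta_{R}]$, and the $\Delta_{L},\Delta_{R}$ enlargement built into $[\ell,r]$ guarantees that every relevant window lies inside the replicated region, so the cut pattern is the same at the shifted location. For odd $t$ the block is a maximal run whose boundaries are fixed by the single neighbouring variable on each side — exactly the $\pm 1$ enlargement in the definition of $\dep$ — which is likewise reproduced. In either case the block/run boundaries coincide, so $\SE{t}{T}[j]=\SE{t}{T}[i]$ for the position $j$ of the shifted block/run, and $\dep_{t}^{T}(j)=\bigcup_{x}\dep_{t-1}^{T}(y_{x}) = \bigcup_{x}(\dep_{t-1}^{T}(x)+\delta) = [\ell,r]+\delta = [\ell',r']$, completing the induction.

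The main obstacle is the rigidity step: making precise that equal content forces the matched level-$(t-1)$ positions to form a contiguous block shifted by a fixed offset, rather than a scattered set, which is where both Observation~\ref{ob:depend} and the bidirectional use of the inductive hypothesis are needed. A secondary technical point is the boundary bookkeeping — checking that the $\Delta_{L},\Delta_{R}$ (resp.\ $\pm 1$) margins in the definition of $\dep$ are exactly large enough for Lemma~\ref{lem:lc} (resp.\ run maximality) to certify that the parse, and not only the underlying variable sequence, is replicated at the shifted interval.
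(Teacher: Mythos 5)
You should first know that the paper contains no proof of Observation~\ref{ob:depend2} to compare against: it is stated bare, as something the authors regard as immediate from the preceding discussion of local contexts (property~(2) of Lemma~\ref{lem:lc} on even levels, maximality of runs on odd levels, and determinism of $\assign^{+}$). Your induction is a correct formalization of exactly that intuition, and it supplies precisely the two ingredients a rigorous argument needs that the paper never articulates. First, the strengthened hypothesis carrying $\SE{t}{T}[j]=\SE{t}{T}[i]$ is indispensable --- interval equality alone cannot drive the next level, since the level-$t$ parse is a function of the level-$(t-1)$ variables rather than of their dependency intervals --- and it is also exactly what the paper's follow-up sentence ``variables having a common local context are the same'' asserts and what the counting argument behind Inequality~\ref{ksize} actually uses. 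Second, the rigidity step (that the matched positions $y_{x}$ form a contiguous block shifted by $\delta$) is genuinely needed, and your derivation of it from Observation~\ref{ob:depend} applied in both shift directions is sound; likewise your margin bookkeeping is right, since the $\Delta_{L},\Delta_{R}$ (resp.\ $\pm 1$) enlargements in the definition of $\dep$ are exactly what places every $\tau$-window (resp.\ both run ends) inside the replicated region.

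One caveat you should make explicit: both the observation as written and your proof tacitly assume that none of the ranges in the recursive computation of $\dep_{t}^{T}(i)$ is clipped at the ends of $T$. For a clipped variable the claim is literally false: if $T$ begins with ${\rm aab}$ and contains ${\rm aaab}$ later, then $\dep_{1}^{T}(1)=[1,3]$ and $T[1..3]$ reoccurs as the last three characters of that later ${\rm aaab}$, but no level-$1$ variable has that occurrence as its dependency interval, because the run of a's there extends one position further left. Your inductive step silently uses unclippedness of the \emph{source} variable (every position in its enlarged range must exist and be covered by the inductive hypothesis), so what you actually prove is: if $\dep_{t}^{T}(i)$ is computed without clipping --- which is guaranteed whenever $\dep_{t}^{T}(i)\subseteq[2,N-1]$ --- then the shifted interval is the dependency interval of a position holding the same variable. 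This restriction is harmless for the paper: a variable whose computation is clipped has position $1$ or $N$ in its dependency interval, and since $x_{1}=1$ and $x_{z}=N$ are LZ-factor boundaries, such variables are counted directly in Inequality~\ref{ksize} and the observation is never invoked on them; but a careful write-up of Lemma~\ref{lem:upperbound_signature} should say so.
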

Hence, variables having a common local context are the same.

Next, let $K(x, t) = \{i \mid 1 \leq i \leq |\SE{t}{T}|, x \in \dep_{t}^{T}(i) \}$, constituting the set of variables on $\SE{t}{T}$ that depend on position $x$ in $T$. We obtain the following inequality from Sentence (2) and the
local context property:

 \begin{eqnarray}\label{ksize}
 w = |V| \leq \sum_{i = 1}^{z} \sum_{t = 0}^{h} |K(x_{i}, t)|
 \end{eqnarray}
 holds, where $x_{i} = |f_{1}\cdots f_{i}|$. We can then bound $|K(x, t)|$ by using the
 following lemma.
\begin{lemma}[\cite{LongAlstrup}]\label{lem:depend3}
	(1) For a positive even integer $t$, if $|K(x, t-1)| = d$ holds, then $|K(x, t)| \leq (d +
	\Delta_{L} + \Delta_{R}) / 2$ also holds. (2) For a positive odd integer $t$, if $|K(x, t-
	1)| = d$ holds, then $|K(x, t)| \leq d$ also holds.
\end{lemma}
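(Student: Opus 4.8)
The plan is to read $|K(x,t)|$ as a count of level-$t$ grammar units whose dependency footprint reaches position $x$, and to control how this count changes as we pass from level $t-1$ to level $t$. First I would record that each set $K(\cdot,t)$ is an interval of consecutive indices: this is immediate from Observation~\ref{ob:depend}, since the dependency intervals $\dep_t^T(i)=[\ell_i,r_i]$ have strictly increasing left and right endpoints as $i$ grows, so $K(x,t)=\{i:\ell_i\le x\le r_i\}$ is contiguous. Writing $K(x,t-1)=[a,b]$ with $b-a+1=d$, the recursion defining $\dep_t^T(i)$ shows that $i\in K(x,t)$ precisely when the extended range of level-$(t-1)$ positions covered by $\SE{t}{T}[i]$ meets $[a,b]$: for even $t$ this range is $[L_i-\deltaLR{L},\,R_i+1+\deltaLR{R}]$, and for odd $t$ it is $[L'_i-1,\,R'_i+1]$. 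Thus $|K(x,t)|$ equals the number of level-$t$ units whose (extended) footprint intersects the interval $[a,b]$, and the two parts of the lemma reduce to counting such units.

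For even $t$ the units are the $\encblock_{4M}$ blocks, which partition $\SE{t-1}{T}$ and, by part~(1) of Lemma~\ref{lem:lc}, have length at least $2$ (indeed between $2$ and $4$). The blocks contributing to $K(x,t)$ all lie inside a window whose length exceeds $d$ only by the additive slack $\deltaLR{L}+\deltaLR{R}$ coming from the extension; since the interior contributing blocks are disjoint and each consumes at least two positions, their number is at most $(d+\deltaLR{L}+\deltaLR{R})/2$. This halving at every even level is exactly the mechanism that will later collapse the per-factor contribution to $O(\log^* M)$. For odd $t$ the units are the maximal runs of $\encpow{\SE{t-1}{T}}$, which again partition $\SE{t-1}{T}$ but may have length $1$; here no halving is available, and the bound $|K(x,t)|\le d$ follows because distinct runs occupy distinct positions, so at most $d$ of them can meet the $d$ positions of $[a,b]$.

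The delicate point, and the step I expect to require the most care, is the exact boundary accounting. A block or run may stick out past an endpoint of $[a,b]$, and a unit may be pulled into $K(x,t)$ only through the $\deltaLR{L},\deltaLR{R}$ (respectively $\pm1$) extension rather than by genuinely overlapping $[a,b]$; these fringe units must be absorbed so that the constants come out as precisely $(d+\deltaLR{L}+\deltaLR{R})/2$ and $d$ and not with extra additive terms. Here I would invoke the strict monotonicity of Observation~\ref{ob:depend} to pin down the first and last contributing indices, and argue that the at-most-two boundary units are already compensated by the slack built into the window length. Once the endpoints are controlled, the remainder is routine arithmetic on the partition, and the two inequalities follow. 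Note that this lemma is the recursive engine behind inequality~(\ref{ksize}): combined with the base case $|K(x,0)|=1$, it forces each $|K(x_i,t)|=O(\log^* M)$, whence the total $O(\log N\log^* M)$ bound per factor used in the proof of Lemma~\ref{lem:upperbound_signature}.
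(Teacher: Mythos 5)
Your proof is correct and takes essentially the same route as the paper's: both represent $K(x,t-1)$ as an interval via Observation~\ref{ob:depend}, identify $K(x,t)$ with the set of level-$t$ units whose extended dependency window meets that interval, and then count blocks (length at least $2$, giving the halving with slack $\deltaLR{L}+\deltaLR{R}$) for even $t$ and disjoint runs for odd $t$. You are in fact more explicit than the paper about the fringe units at the window boundaries; both arguments leave those additive constants slightly loose, which is harmless since only the $O(\log^* M)$ fixed point of the recursion is needed for Lemma~\ref{lem:upperbound_signature}.
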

\begin{proof}
	By Observation~\ref{ob:depend}, we can represent $K(x, t-1)$ by an interval $E = [ p, p+d-1 ]$. Let $X$ be the set of positions $\SE{t}{T}$ whose variable depends on a position in $E$ on $\SE{t}{T}$. Then $X = K(x, t)$ holds by $\mathit{dep}$. Let $\SE{t}{T}[i]$ represent $\SE{t-1}{T}[L_{i}..R_{i}]$. Then $|X| \leq (d + \Delta_{L} + \Delta_{R}) / 2$ holds, because $\SE{t}{T}[i]$ depends on $\SE{t-1}{T}[L_{i}- \Delta_{L}..R_{i}+\Delta_{R}+1]$ and every factor $\encblock_{4M}(\SE{t- 1}{T})$ has a length of at least $2$. (2) Similarly, for a positive odd integer $t$, if $|K(x, t-1)| = d$ holds, then $|K(x, t)| \leq d$ also holds. 
\end{proof}
Hence, $|K(x, t)| = O(\log^* M)$ holds for any integers $x$ and $t$ by Lemma~\ref{lem:depend3} and $|K(x, 1)| \leq 2$. Therefore, Lemma~\ref{lem:upperbound_signature} holds by $h = O(\log N)$, Lemma~\ref{lem:depend3}, and Inequality~\ref{ksize} , where $h + 1$ is the height of the derivation tree of $\mathcal{G}$.

\subsection{The proof of Theorem~\ref{thorem:tight_upper_bound}}
We can now easily prove Theorem~\ref{thorem:tight_upper_bound} by using the same approach for the proof of Lemma~\ref{lem:upperbound_signature}. The point is that we can regard a character at $i$ in $T_{q}$ as depending on $T[i..i+q-1]$. The means that we can regard a variable on $\SE{t}{T_{q}}$ as depending on a local context in $T$, not $T_{q}$. Formally, we modify $\dep_{t}^{T_{q}}(i)$ for a $q$-signature encoding as follows.

\begin{eqnarray*}
	\dep_{t}^{T_{q}}(i) &=&
	\begin{cases}
		[ i, i+q-1 ] & \mbox{ for } t = 0, \\
		\bigcup_{x = L}^{R} \dep_{t-1}^{T_{q}}(x) & \mbox{ for }  t = 2, 4, \ldots, \\
		\bigcup_{x = L' - 1}^{R' + 1} \dep_{t-1}^{T_{q}}(x) & \mbox{ for } t = 1, 3, \ldots, \\
	\end{cases} \\
\end{eqnarray*}
Similarly, $K(x, t)$ is redefined by the modified version of $\dep_{t}^{T_{q}}(i)$. Note that Inequality~\ref{ksize} and Lemma~\ref{lem:depend3} still hold. Hence, $\sum_{t = 0}^{h'} |K(i, t)| = O(q + h' \log^{*} M)$ immediately holds by $|K(x, 0)| \leq q$, where $h' + 1$ is the height of the derivation tree of the signature encoding representing $T_{q}$. Therefore, Theorem~\ref{thorem:tight_upper_bound} holds by $h' = O(\log N)$, where $z$ is the number of LZ77 factors of $T$, not $T_{q}$.

\section{Experiments}\label{sec:exp}
In this section, we demonstrate the effectiveness of the TST-index in a static setting with a benchmark dataset of highly repetitive texts.

For a benchmark dataset we used nine highly repetitive texts consisting of the files DNA, english.200MB, einstein.en.txt, einstein.de.txt, Escherichia\_Coli, cere, influenza, para, and world\_leaders from the Pizza \& Chili corpus (\url{http://pizzachili.dcc.uchile.cl}). We sampled $1000$ substrings of each length $m=\{4, 8, 16, \dots, 2048\}$ from each benchmark text and used those substrings as queries. We used the memory consumption and search time for count and locate queries as evaluation measures. We performed all the experiments on one core of a quad-core Intel(R) Xeon(R) E5-2680 v2 (2.80 GHz) CPU with 256 GB of memory. 

We compared our TST-index with the ESP-index~\cite{DBLP:conf/wea/TakabatakeTS14} and the RLFM-index~\cite{DBLP:journals/corr/GagieNP17}. The ESP-index provides a baseline for evaluating the effectiveness of the TST-index, while the RLFM-index is a state-of-the-art self-index for highly repetitive text collections. We used the C++ language to implement the TST-index~\footnote{\url{https://github.com/TNishimoto/TSTESP}} as a combination of a $q$-TST and the ESP-index, the self-index on an LCPR given in Lemma~\ref{lem:iesp}. We varied the parameter $q$ by testing $q$-gram lengths from \{4,8,16,32\}. We used existing implementations of the ESP-index (\url{https://github.com/tkbtkysms/esp-index-I}) and the RLFM-index (\url{https://github.com/nicolaprezza/r-index}).

\subsection{Results}

\begin{table}[t]
	\caption{Index size in megabytes for each text. The TST-index is denoted as $q$-TST
		for each value of parameter $q$ in $\{4, 8, 16, 32\}$.}
	\label{table:exp1}    
	\tabcolsep = 1mm
	\center{
		\begin{tabular}{r|r|r|r|r|r|r|r|r|r}
			 & DNA & english & einstein & einstein & Escherichia & cere & influenza & para & world \\
			 &         & 200MB & en.txt & de.txt & Coli &  &  & & leaders \\ \hline 
			 Text & 385 & 200 & 445 & 88 & 107 & 439 & 147 & 409 & 44 \\ \hline \hline
			 ESP & 438 & 248 & 2 & 1 & 42 & 47 & 23 & 60 & 5 \\ 
			 RLFM & 2,429 & 760 & 3 & 1 & 146 & 124 & 31 & 164 & 6 \\ \hline \hline
			 $4$-TST & 501 & 388 & 8 & 3 & 48 & 54 & 24 & 71 & 13 \\ 
			 $8$-TST & 826 & 1,987 & 27 & 10 & 87 & 94 & 37 & 124 & 47 \\ 
			 $16$-TST & 23,927 & 10,615 & 48 & 17 & 1,794 & 1,417 & 277 & 1,960 & 88 \\ 
			 $32$-TST & 32,287 & 13,199 & 69 & 24 & 2,292 & 1,876 & 762 & 2,835 & 155 \\ 
		\end{tabular}

			}
		\end{table}
\begin{figure}[h]
	\centerline{
		\includegraphics[scale=0.4]{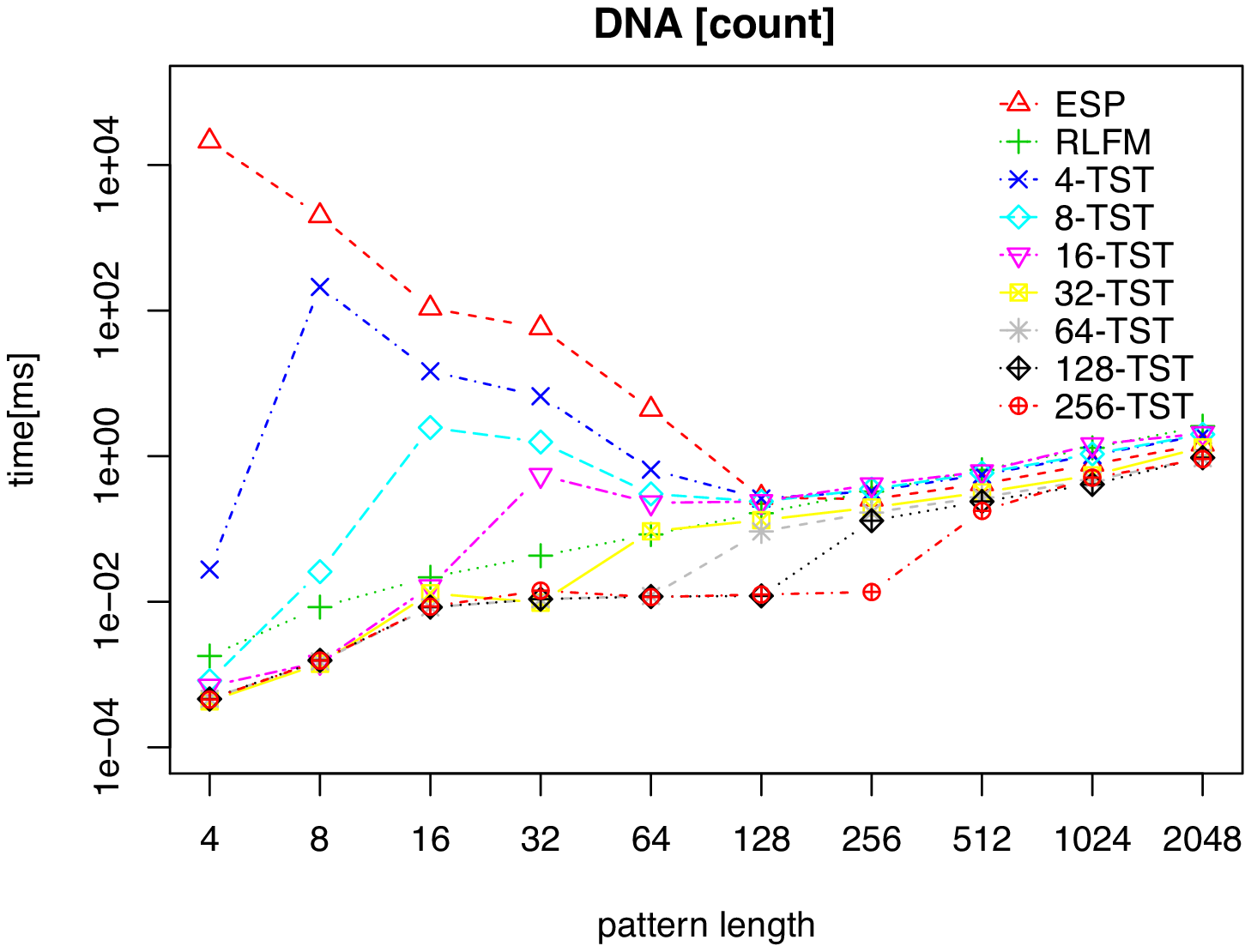}
		\includegraphics[scale=0.4]{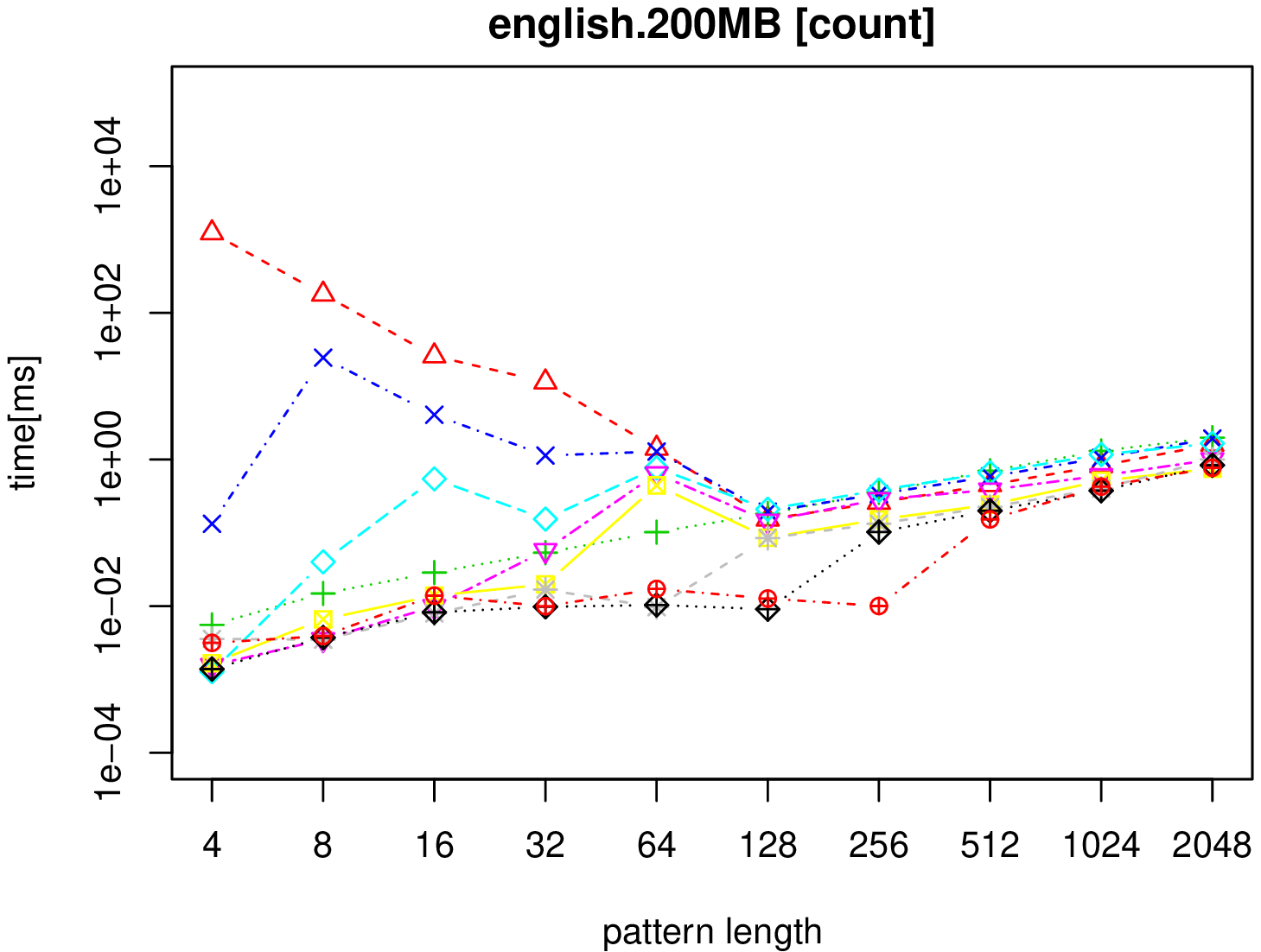}
	}
\centerline{
	\includegraphics[scale=0.4]{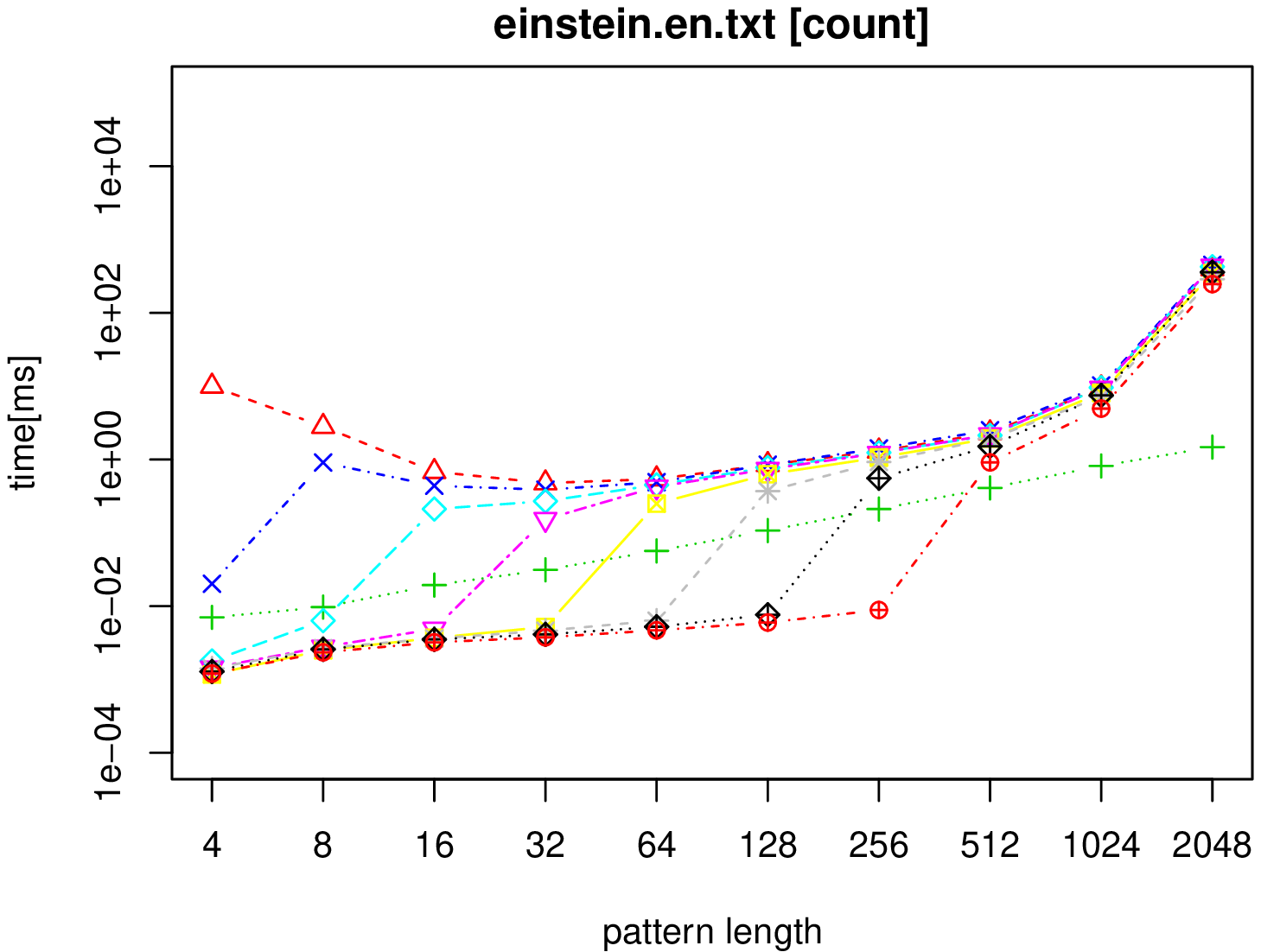}	
	\includegraphics[scale=0.4]{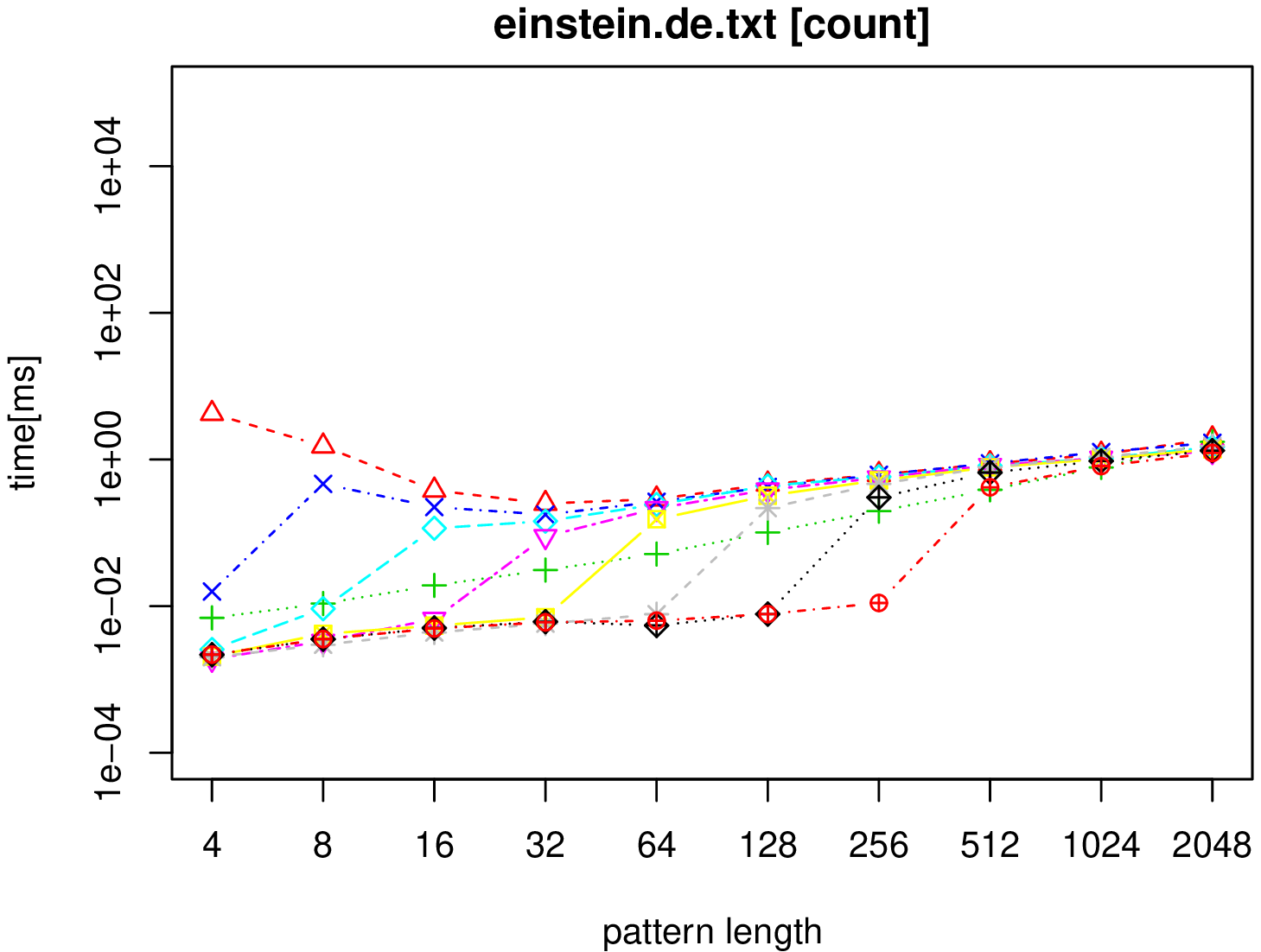}
}
\centerline{
	\includegraphics[scale=0.4]{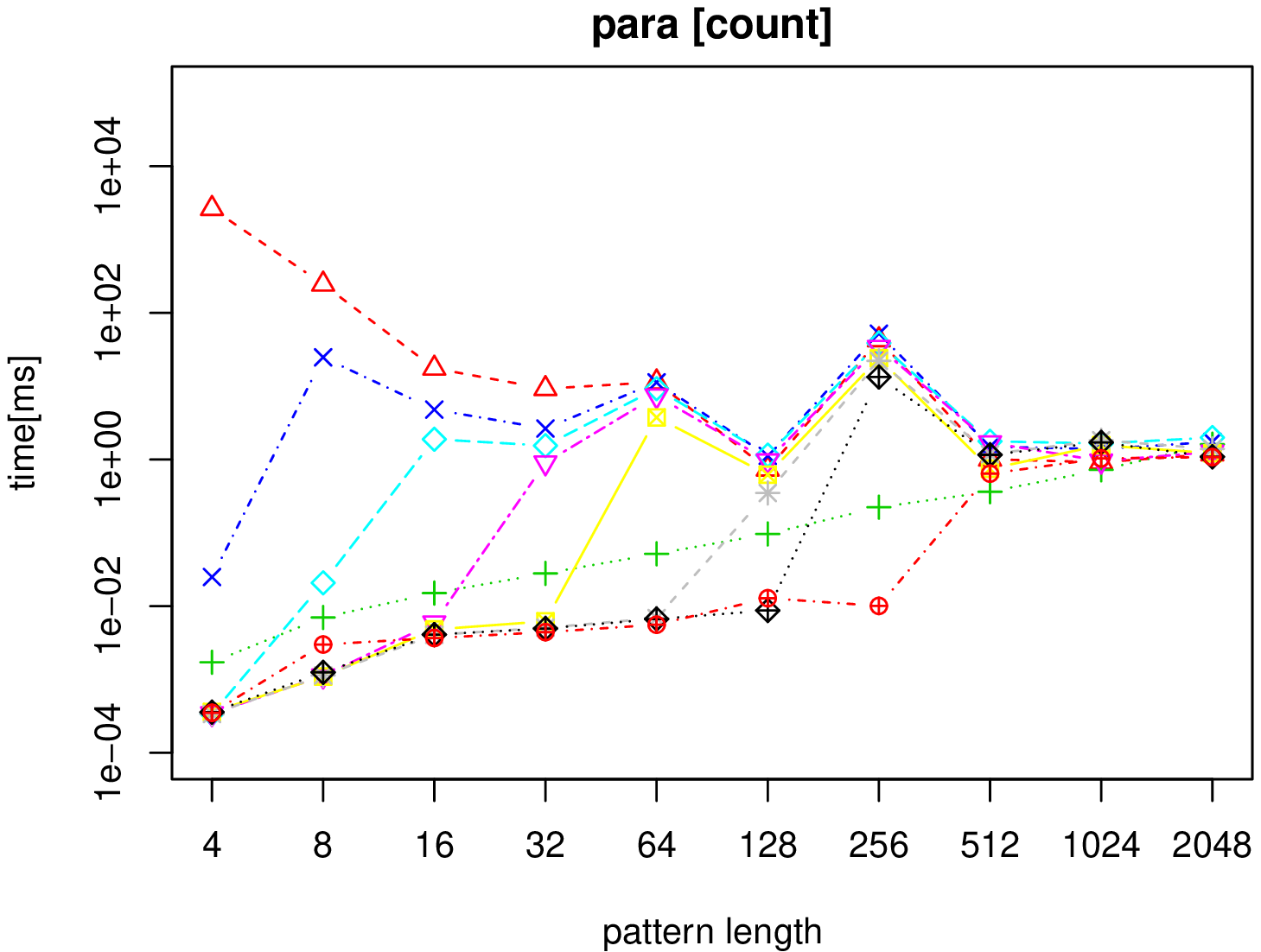}	
	\includegraphics[scale=0.4]{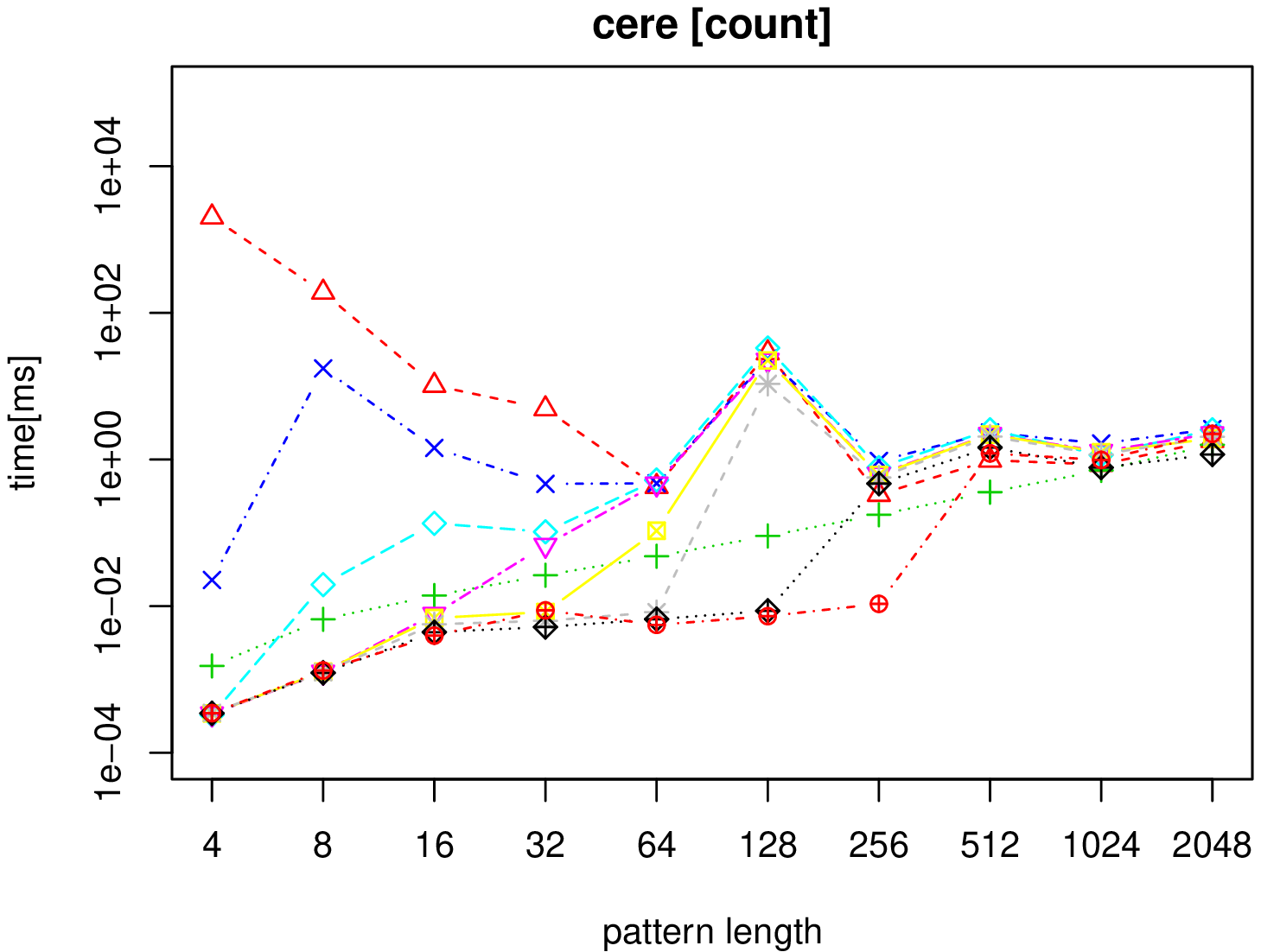}
}
\centerline{
	\includegraphics[scale=0.4]{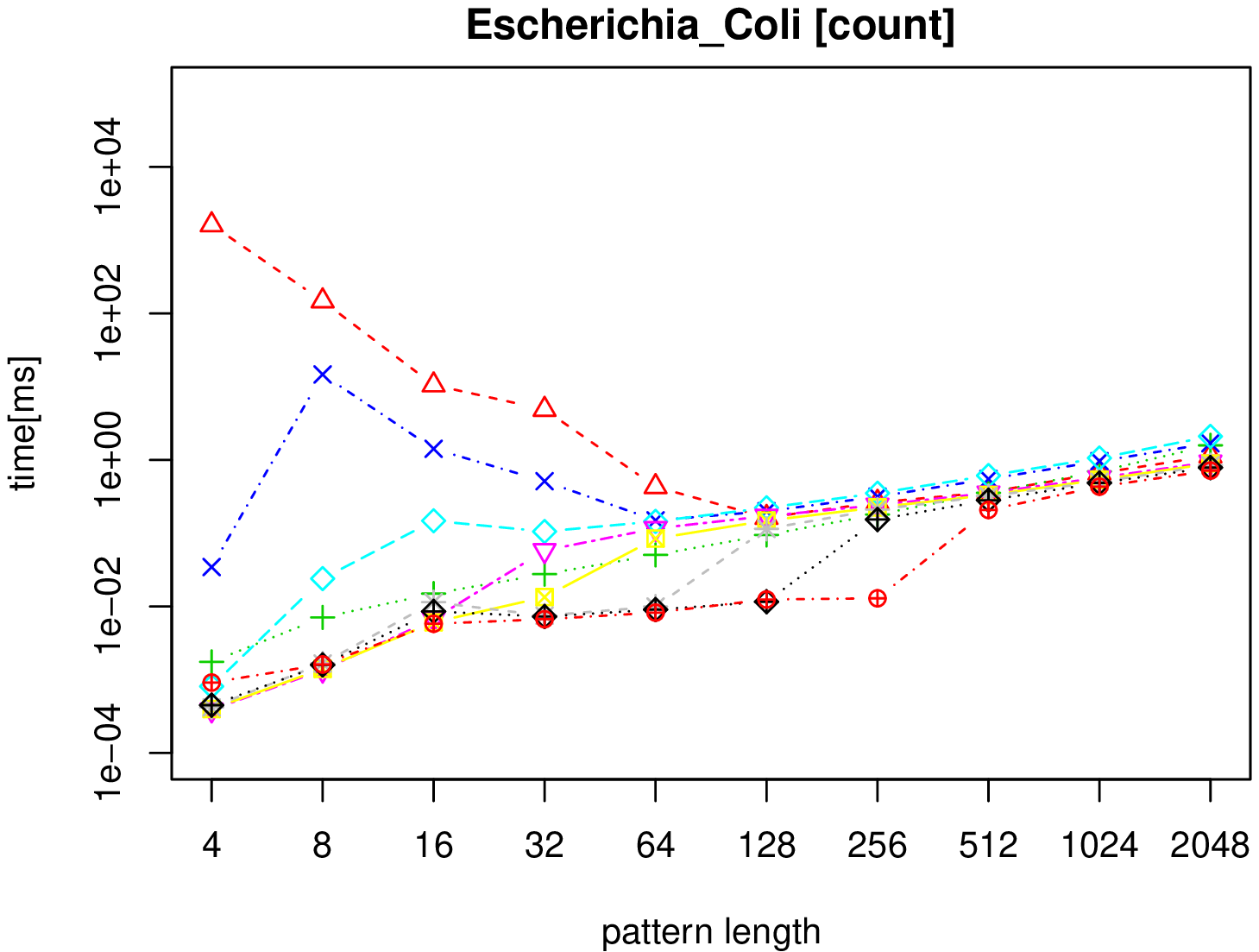}
	\includegraphics[scale=0.4]{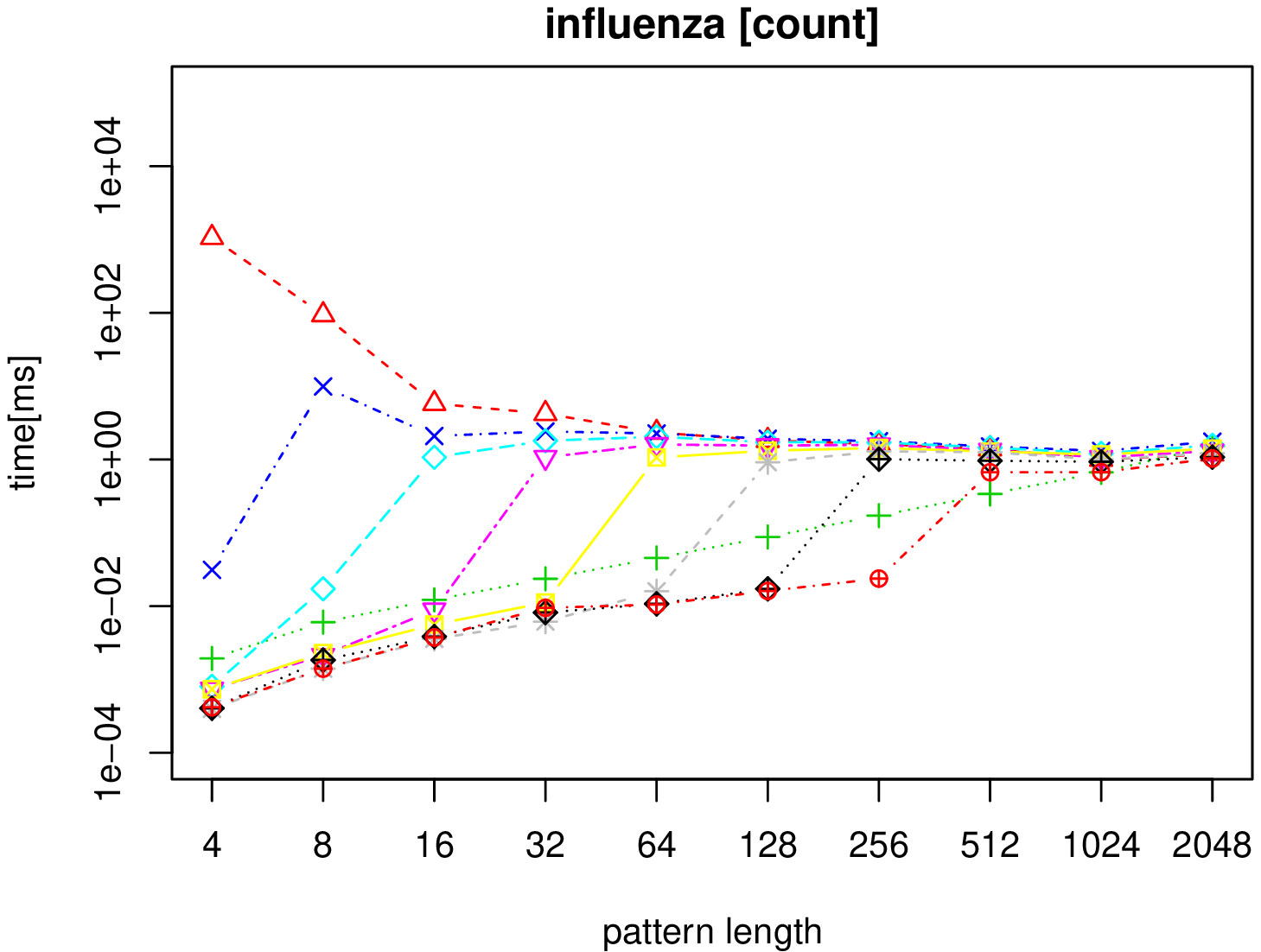}
}
\centerline{
	\includegraphics[scale=0.4]{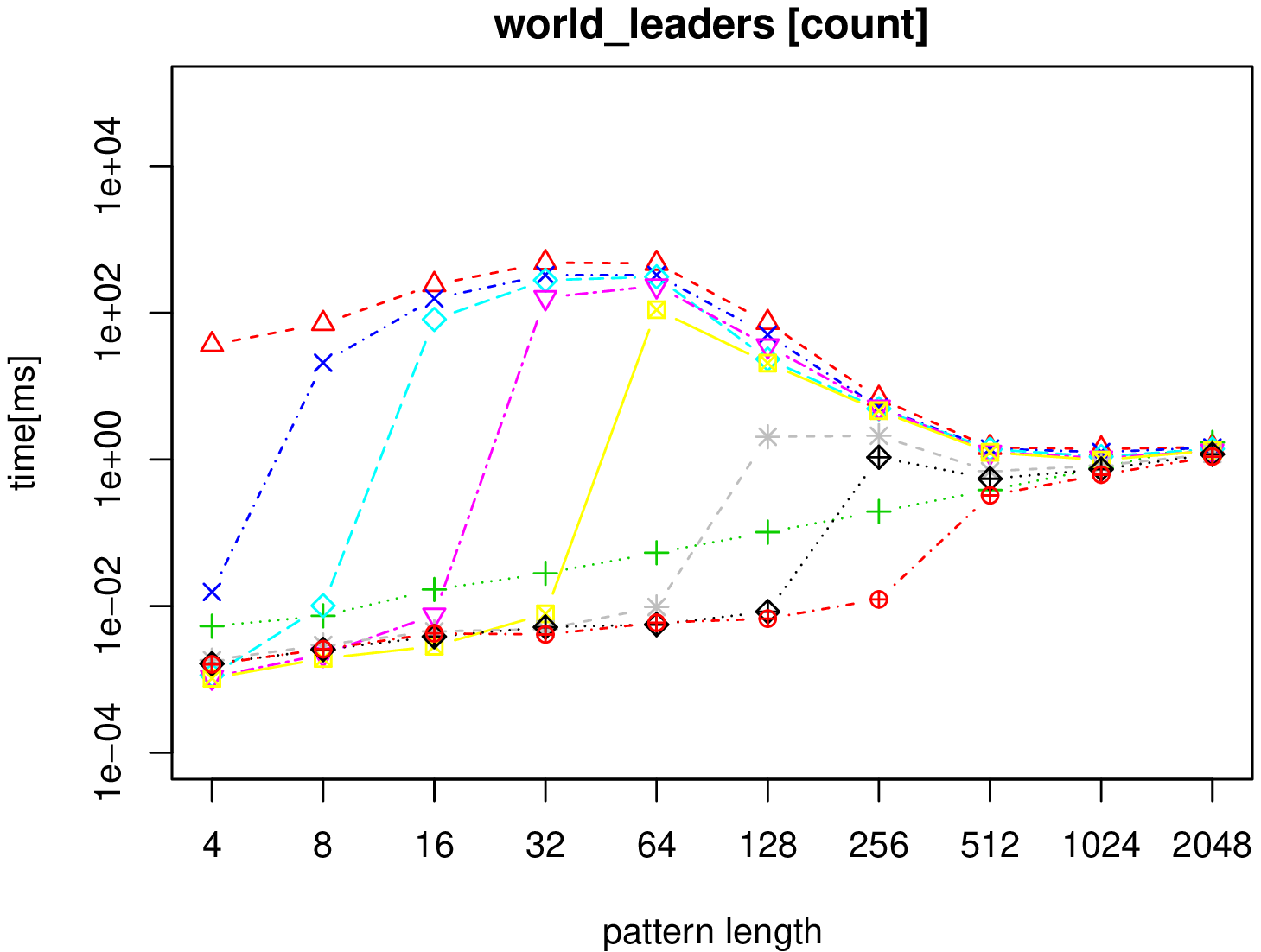}
}

	\caption{Times for count queries on each benchmark text.}
	\label{fig:exp_count_1}
\end{figure}
\begin{figure}[tbp]	
	\centerline{
		\includegraphics[scale=0.4]{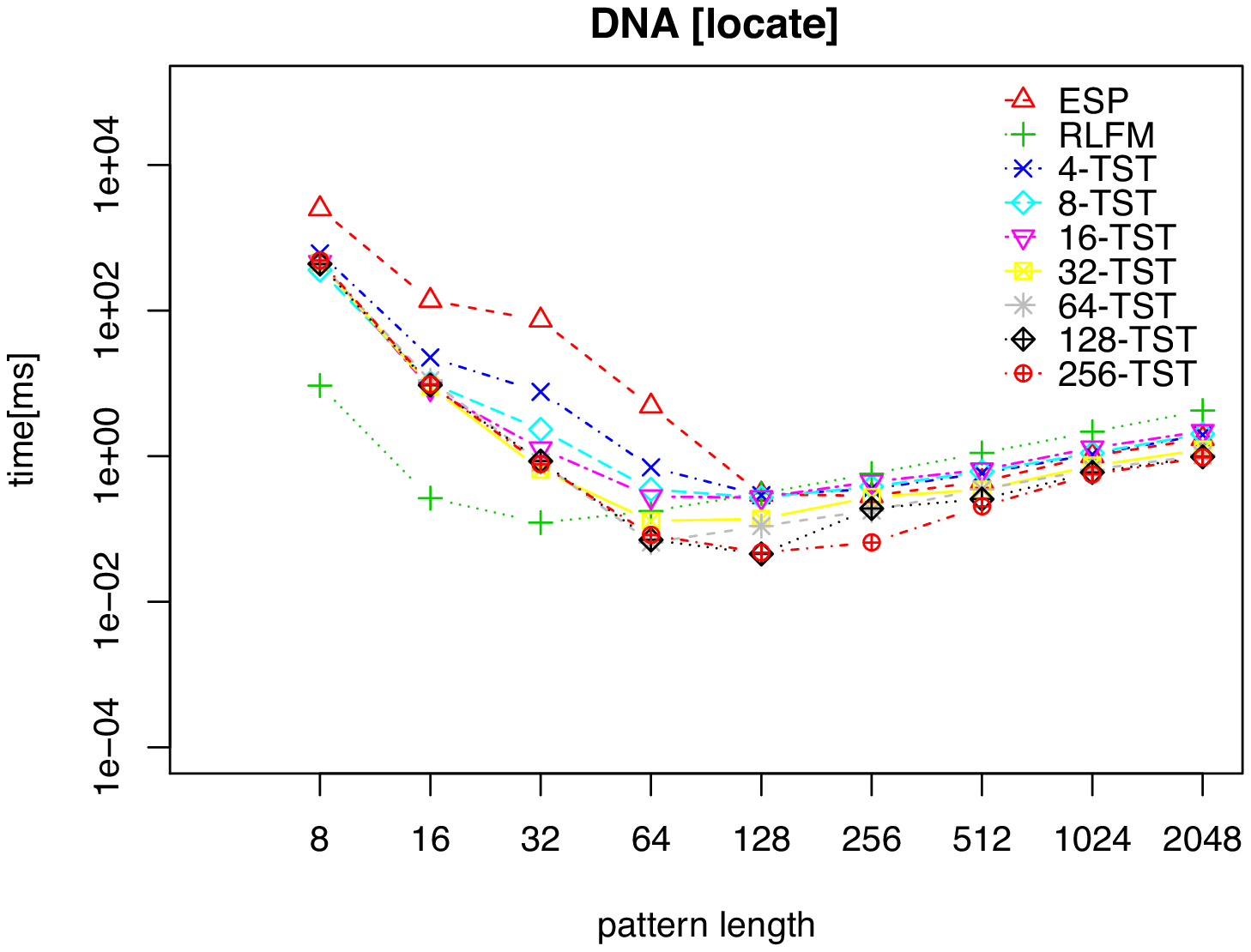}
		\includegraphics[scale=0.4]{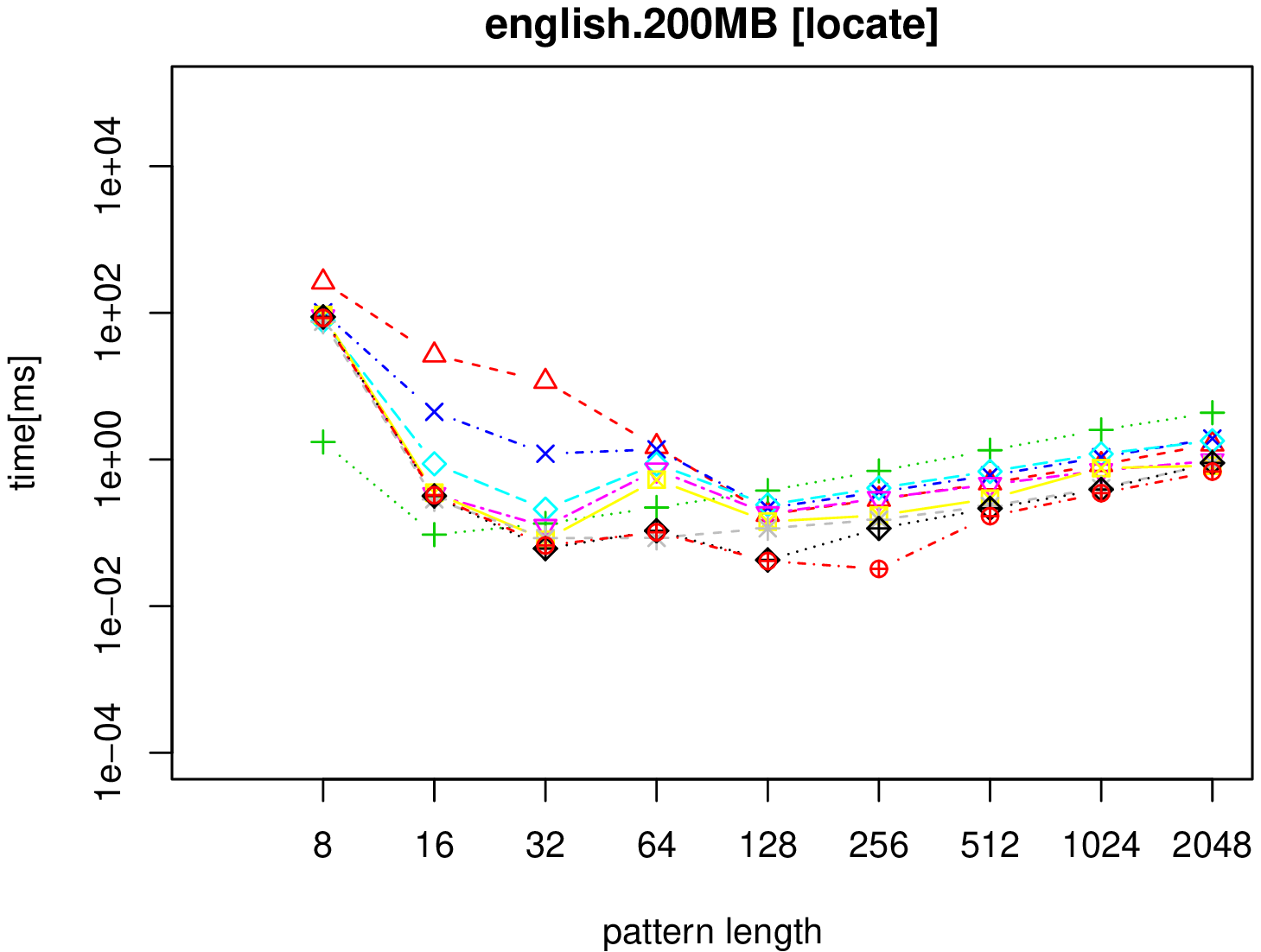}
	}
\centerline{
	\includegraphics[scale=0.4]{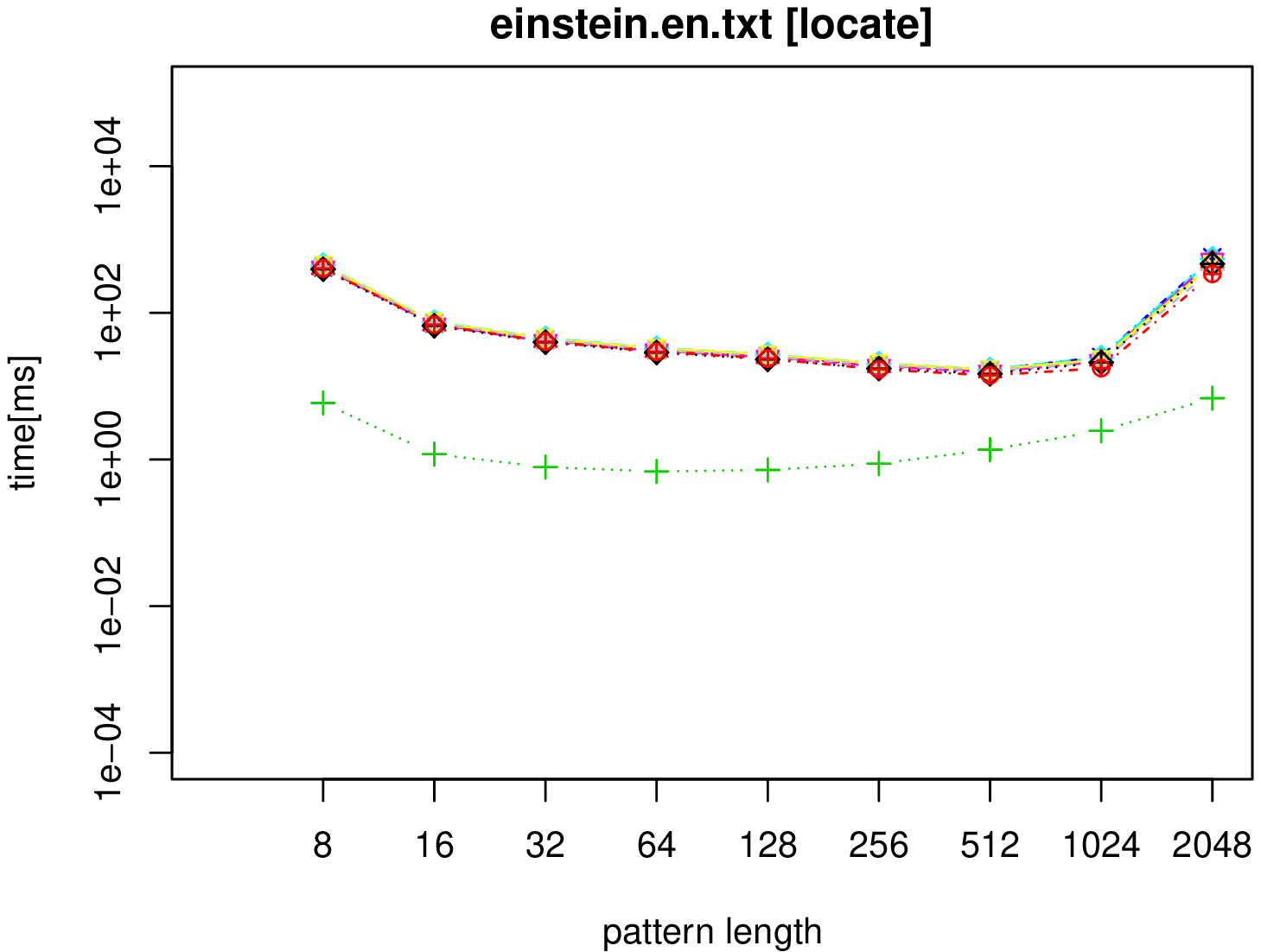}
	\includegraphics[scale=0.4]{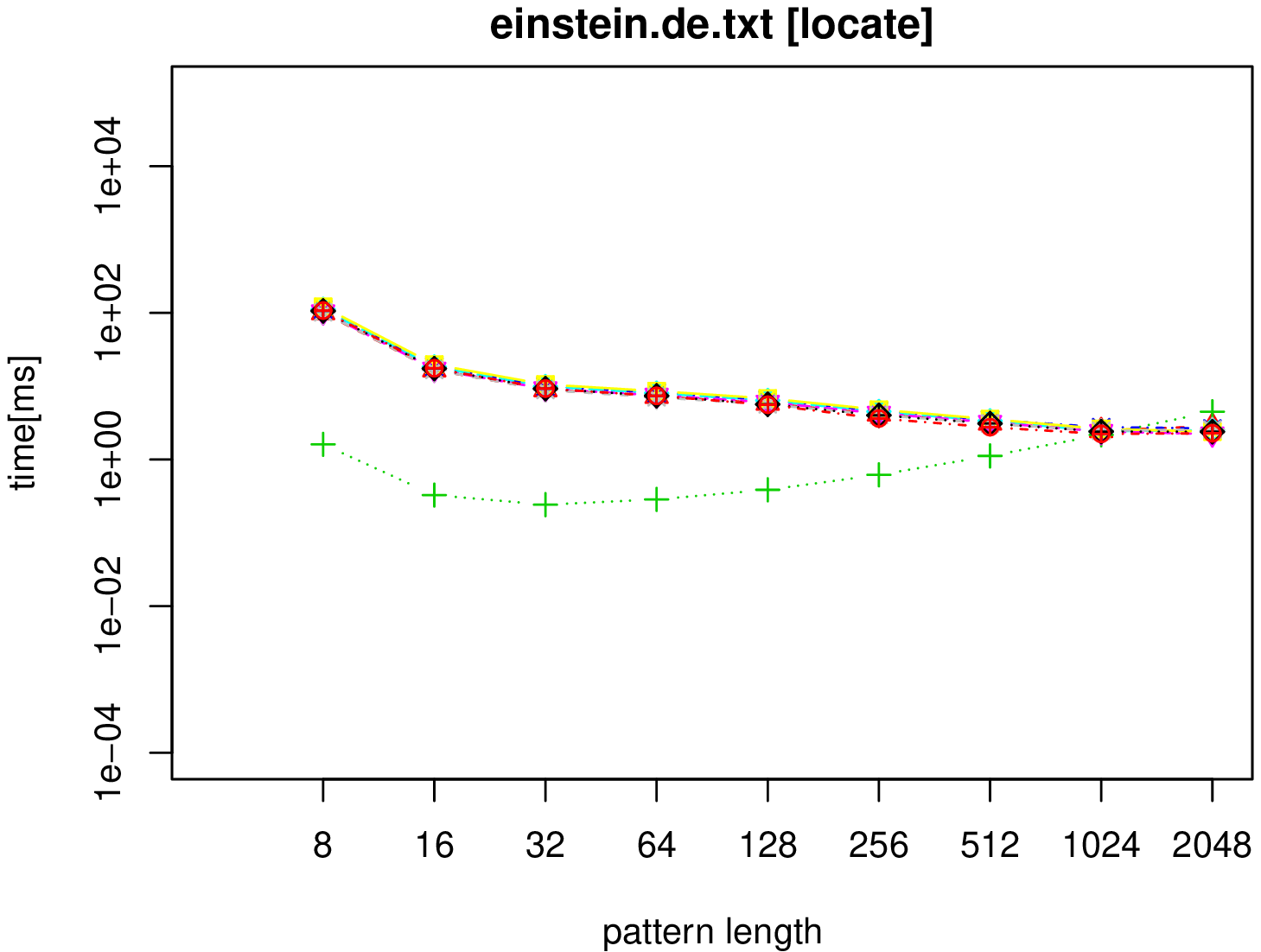}
}
	\centerline{
		\includegraphics[scale=0.4]{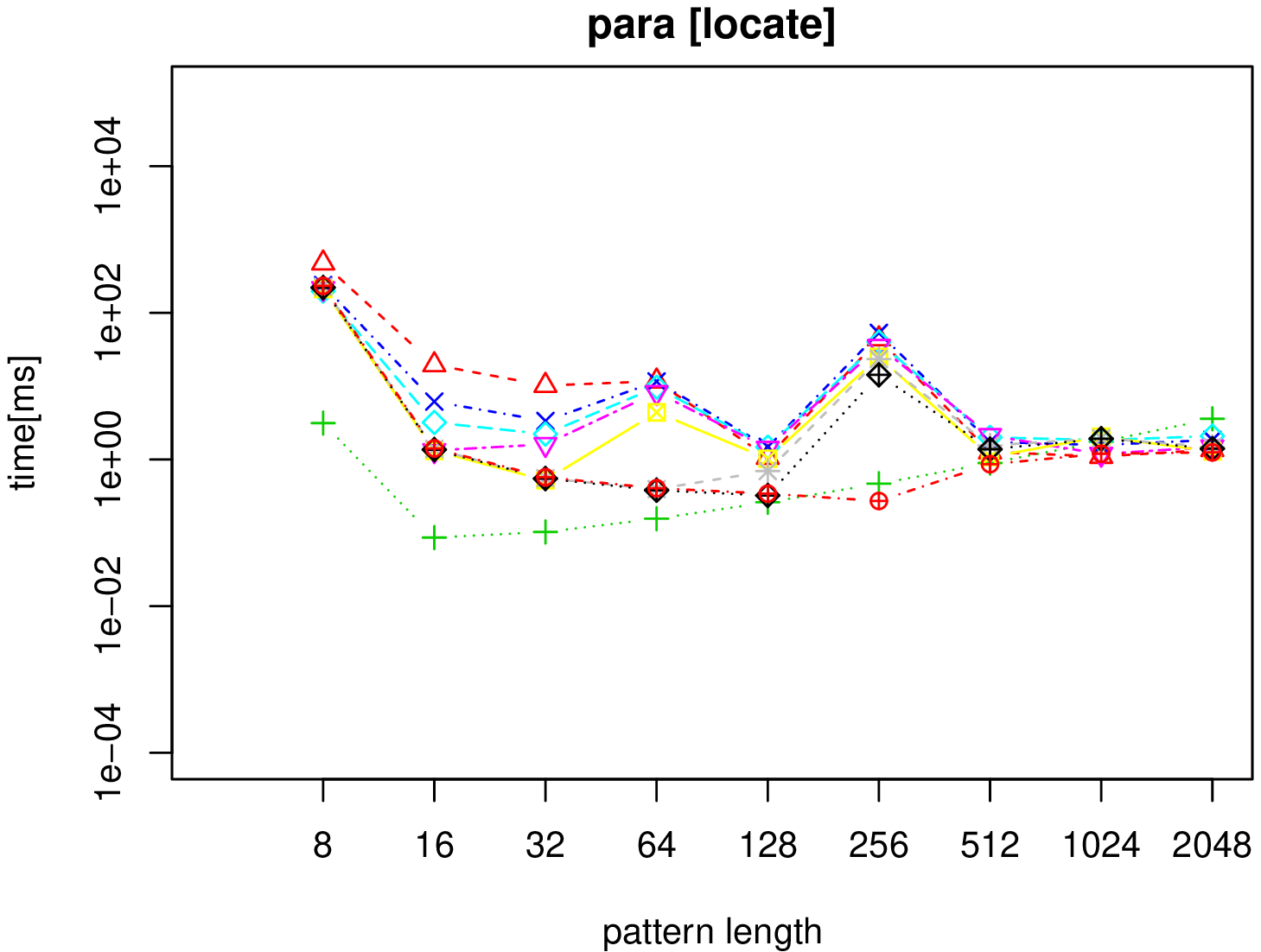}	
		\includegraphics[scale=0.4]{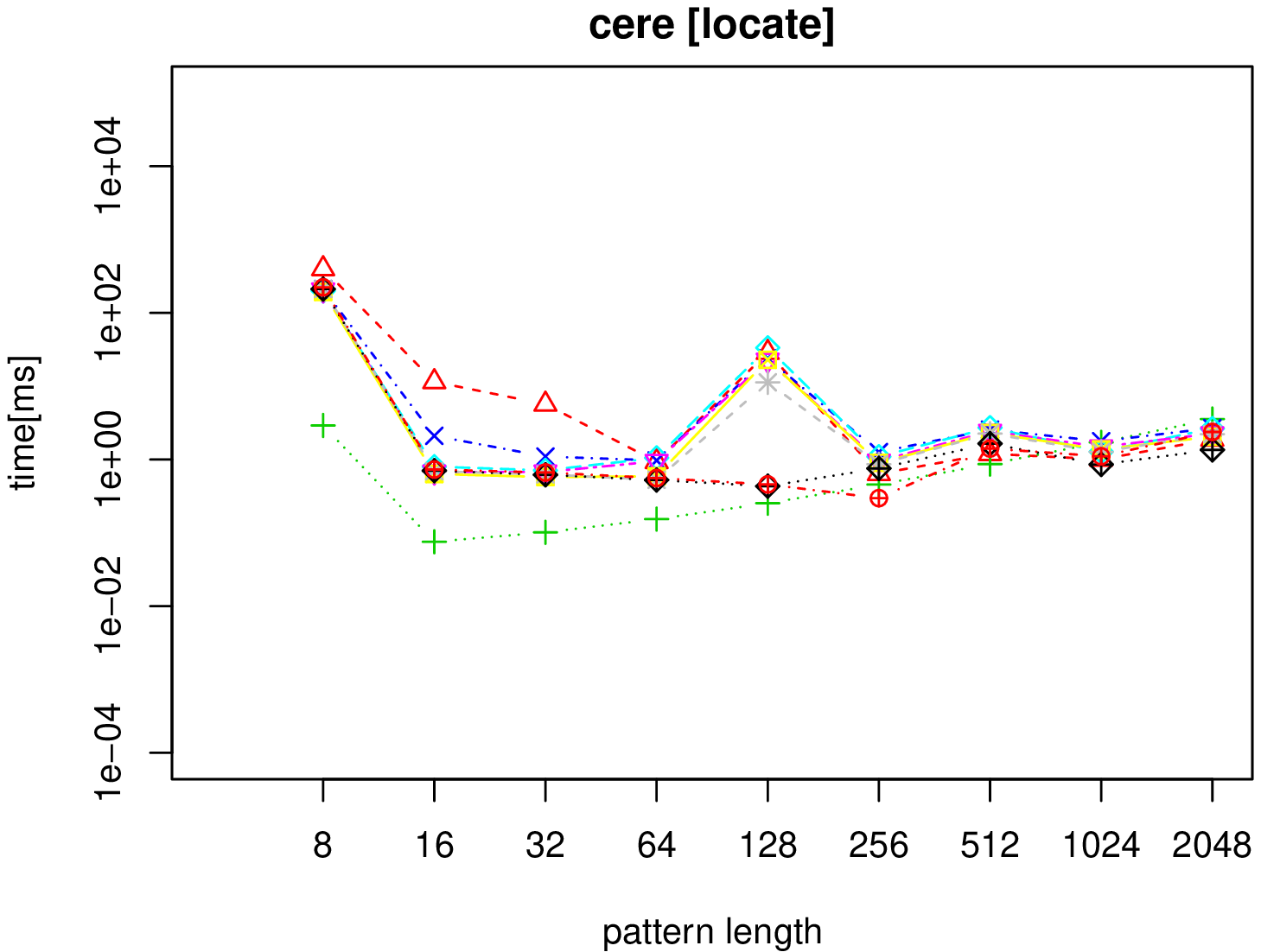}
	}
	\centerline{
		\includegraphics[scale=0.4]{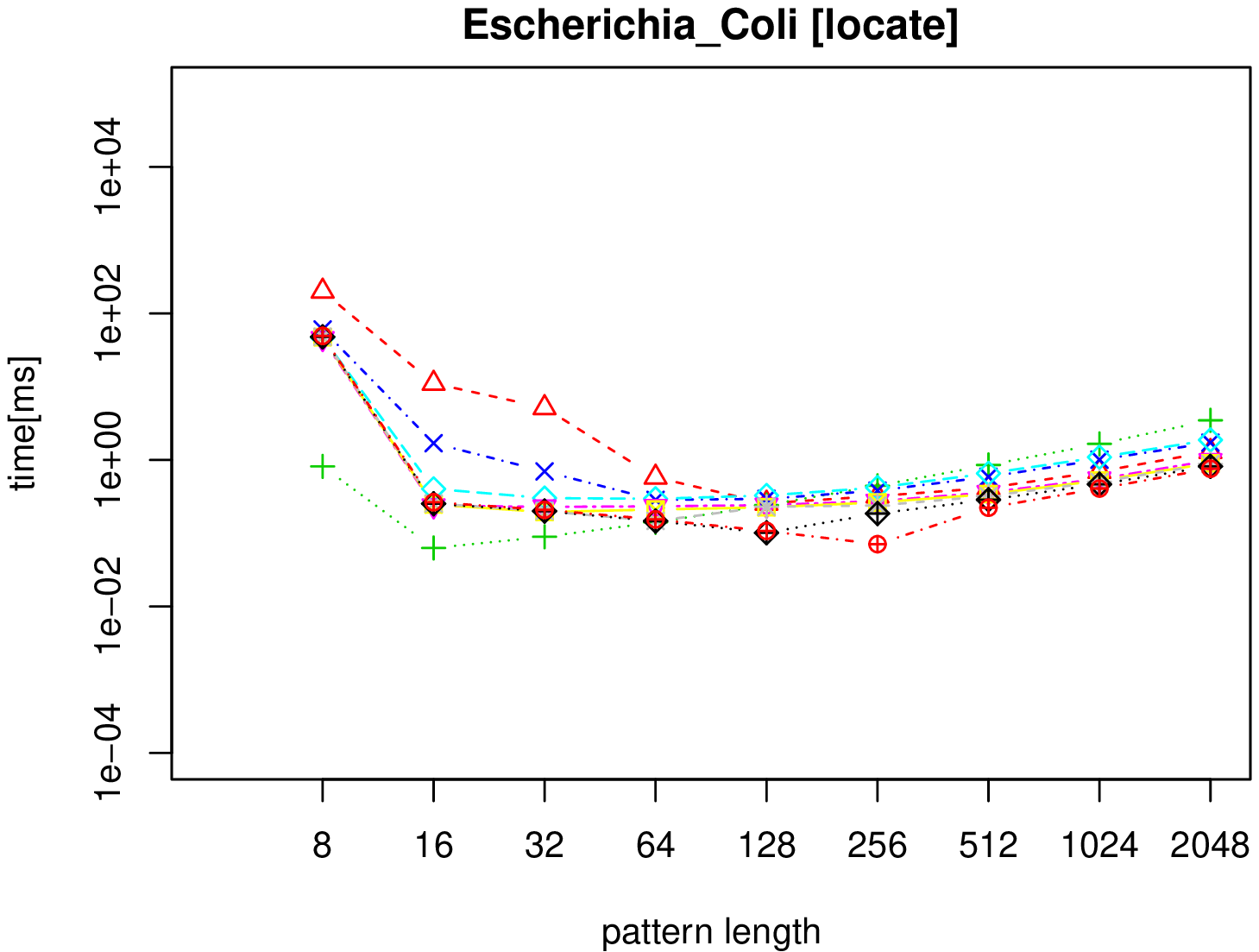}
		\includegraphics[scale=0.4]{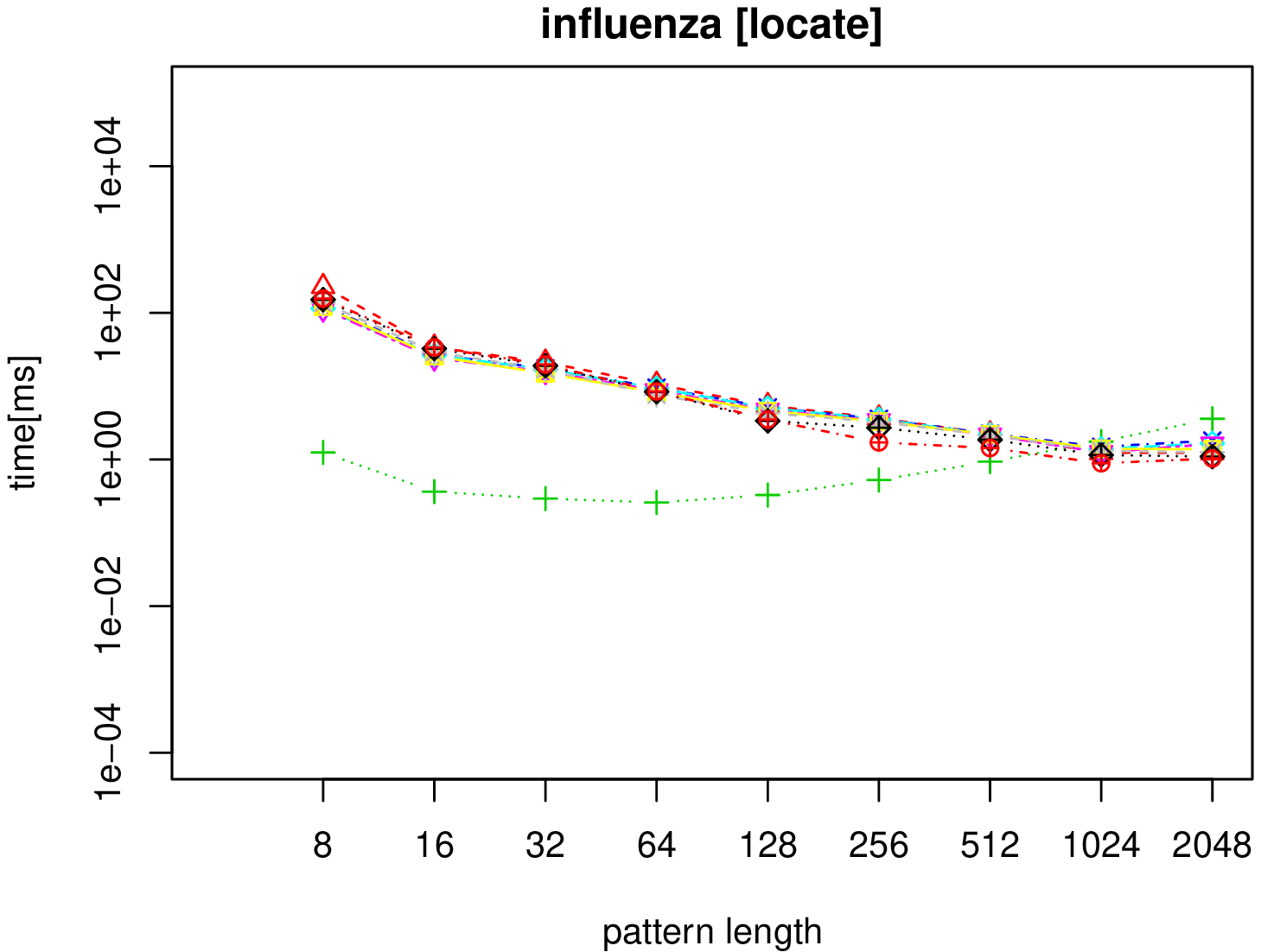}
	}
	\centerline{
		\includegraphics[scale=0.4]{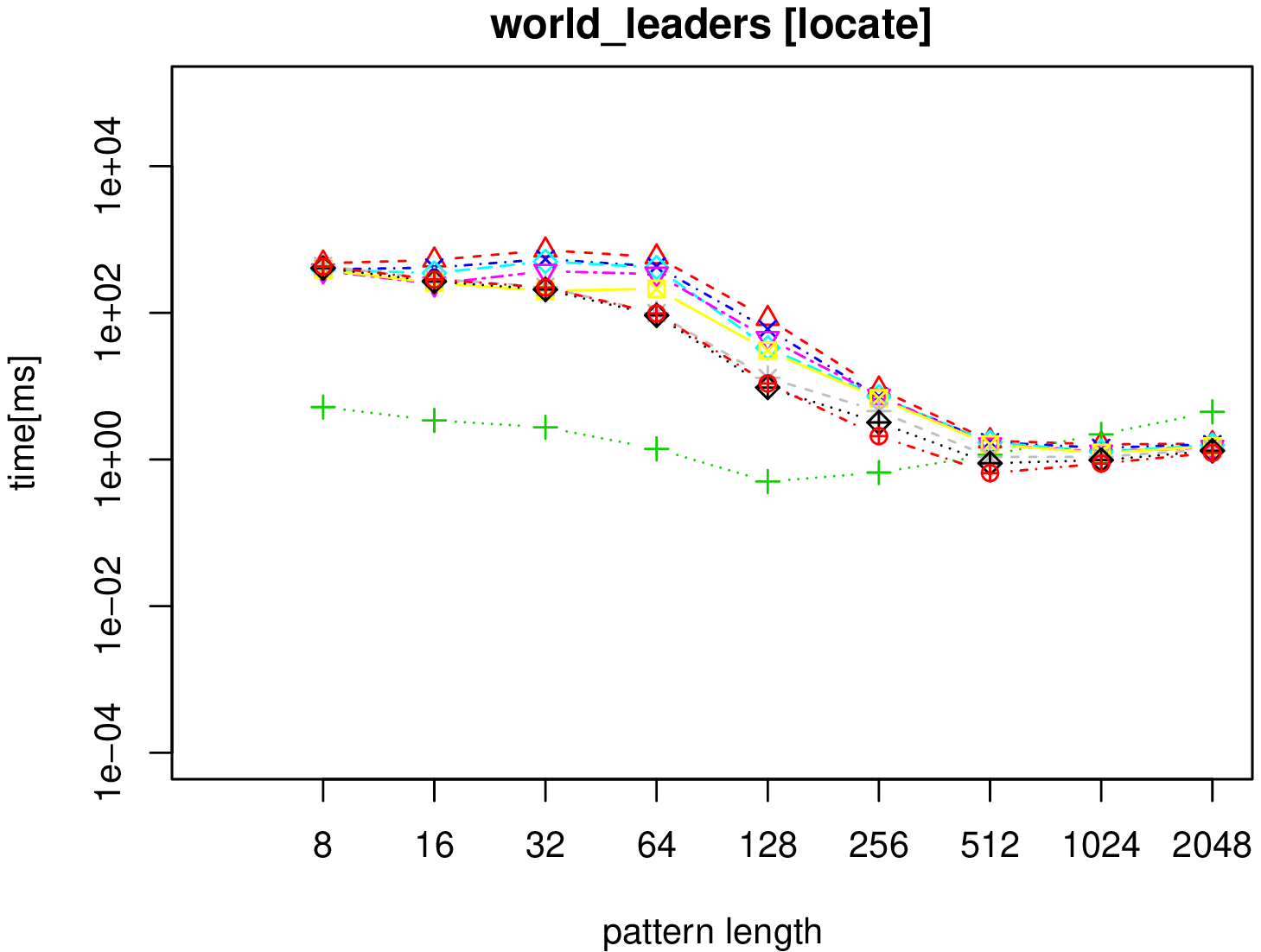}
	}
	\caption{Times for locate queries on each benchmark text.}
	\label{fig:exp_count_2}
\end{figure}

Figures~\ref{fig:exp_count_1} and~\ref{fig:exp_count_2} show the measured search times for count and locate queries for each method. In addition, Table~\ref{table:exp1} lists the index sizes obtained with each method. The results show that the size of the TST-index increased exponentially with the size of $q$. This is because the variety of $q$-grams also increased exponentially. From a practical standpoint, the size of the TST-index is small when $q$ is at most $8$.

The TST-index was much faster than the ESP-index, especially when searching for short patterns of length at most $64$, which demonstrates the effectiveness of the TST-index as a combination of a $q$-TST and the ESP-index. The TST-index was much more efficient for count queries than for locate queries. For the file DNA, in fact, the TST-index was 10,000 times faster than the ESP-index for count queries with patterns of length 8  but only 2 times faster than the ESP-index for locate queries with patterns of the same length. The improvement in locate queries for short patterns with the TST-index was small because the computation time of $\cOcc$ for the ESP-index was slow. The performance could be improved by modifying the ESP-index implementation. Although the size of the TST-index was at most three times larger than that of the ESP-index, it remained small for highly repetitive texts in practice, showing the practicality of the TST-index.

For short patterns, pattern search with the TST-index was competitive with respect to that with the RLFM-index. The size of the TST-index was smaller than that of the RLFM-index, especially for highly repetitive texts and small $q$. In addition to the smaller index size, the TST-index has a large advantage in that it supports dynamic updates, unlike the RLFM-index. 

\section{Conclusions}
We have presented a novel self-index, the TST-index, that supports fast pattern searches and dynamic update operations for highly repetitive text collections. Experimental results demonstrated that the search performance of the TST-index was significantly improved in comparison with other self-indexes on an LCPR, while the index size remained small. In addition, the search performance was competitive with that of the RLFM-index.

Note that a $q$-TST can be combined with any index. K{\"{a}}rkk{\"{a}}inen and
Sutinen~\cite{DBLP:journals/algorithmica/KarkkainenS98} proposed an index for $q$-
short patterns. By combining their index with a $q$-TST, we obtain the following result.
\begin{theorem}[\cite{DBLP:journals/algorithmica/KarkkainenS98} and \cite{VitaleMS15}]
	For a string $T$ and integer $q$, there exists an index using $O(|\Sigma_{T}^{q}| + \LZ(T_{q}))$ space while supporting locate queries in $O(m + occ)$ time for $q$-short patterns.
\end{theorem}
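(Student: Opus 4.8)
The plan is to instantiate the generic combination of Theorem~\ref{theo:TST_index} with the Kärkkäinen–Sutinen index playing the role of $\mathcal{I}(T_{q})$, in exact analogy to the way Theorem~\ref{theo:tstesp} was obtained by feeding the signature-encoding index of Lemma~\ref{lem:iesp} into Theorem~\ref{theo:TST_index}. First I would build the $q$-TST $\mathcal{X}$ of $T$ via \cite{VitaleMS15}, which occupies $O(|\Sigma_{T}^{q}|)$ space, and use it to form the transformed text $T_{q}$. Then I would build the Kärkkäinen–Sutinen index \cite{DBLP:journals/algorithmica/KarkkainenS98} on $T_{q}$. Since that index is an LZ-based index for $q$-short patterns, its space is governed by the LZ77 factorization of the string it indexes, i.e.\ $O(\LZ(T_{q}))$, and in particular it answers single-symbol occurrence queries $\cOcc(c, T_{q})$ in time linear in the number of reported positions. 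This matches the interface required by Theorem~\ref{theo:TST_index} with $\mathcal{I}(T_{q})$ of size $O(\LZ(T_{q}))$, so the combined structure occupies $O(|\Sigma_{T}^{q}| + \LZ(T_{q}))$ space.

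Next I would invoke the correctness identity of Lemma~\ref{lem:occ}: for a $q$-short pattern $P$ we have $\Occ(P,T) = \bigcup_{c \in \leave(P)} \cOcc(c, T_{q})$. The query algorithm first computes $\leave(P)$ by descending $\mathcal{X}$ in $O(m + |\leave(P)|)$ time, and then enumerates $\cOcc(c, T_{q})$ for each leaf $c \in \leave(P)$ using the Kärkkäinen–Sutinen index. Feeding an output-linear $\cOcc$ into the $q$-short branch of Theorem~\ref{theo:TST_index} yields a running time of $O(m + c_{1} \cdot |\leave(P)|)$ together with the aggregate enumeration cost.

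The one point that needs care is collapsing $|\leave(P)|$ into $occ$. In the static setting every leaf of $\mathcal{X}$ is an element of $\Sigma_{T}^{q}$ and therefore occurs at least once in $T_{q}$, so $|\cOcc(c,T_{q})| \geq 1$ for each $c \in \leave(P)$; hence $|\leave(P)| \leq \sum_{c \in \leave(P)} |\cOcc(c,T_{q})| = occ$. Charging the $O(1)$ per-leaf overhead against that leaf's nonempty occurrence set, the total query time telescopes to $O(m + occ)$, giving the claimed bound. The main obstacle I anticipate is the interface match rather than any new combinatorics: I would need to confirm that the Kärkkäinen–Sutinen index, which is stated for general $q$-short patterns, in fact exposes precisely the output-sensitive $\cOcc$ primitive on $T_{q}$ demanded by Theorem~\ref{theo:TST_index}, and that its space is genuinely $O(\LZ(T_{q}))$ rather than $O(|T_{q}|)$; this is the step where I would have to read the construction of \cite{DBLP:journals/algorithmica/KarkkainenS98} carefully.
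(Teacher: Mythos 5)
Your proposal is correct and is essentially the paper's own (implicit) argument: the paper states this theorem without proof, as the direct instantiation of Theorem~\ref{theo:TST_index} with the K\"arkk\"ainen--Sutinen structure built on $T_{q}$ serving as the inner index $\mathcal{I}(T_{q})$ and the $q$-TST of \cite{VitaleMS15} providing the $O(|\Sigma_{T}^{q}|)$-space front end, exactly as you describe. Your closing observation that $|\leave(P)| \leq \occ$, because every leaf of the $q$-TST is an element of $\Sigma_{T}^{q}$ and hence occurs at least once in $T_{q}$, is precisely the accounting step needed to collapse the generic bound $O(m + c_{1} \times |\leave(P)|)$ of Theorem~\ref{theo:TST_index} into $O(m + \occ)$, and it is the same step the paper tacitly uses for its other $O(m+\occ)$ bounds such as Theorem~\ref{theo:tstesp}.
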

The size of their index is $O(zq)$, because $|\Sigma_{T}^{q}|, |\LZ(T_{q})| = O(zq)$. Note that $|\Sigma_{T}^{q}|, |\LZ(T_{q})| = O(zq)$ still holds even if we replace $z$ with $\hat{z}$, where $\hat{z} \leq z$ is the number of factors in LZ77 with self- reference on $T$. Although their index is smaller than our static TST-index, it does not support extract queries.

\vspace*{1pc}
\noindent \textbf{Acknowledgments.} We thank comments of the reviewers.

\clearpage

\bibliographystyle{splncs03}
\bibliography{ref}

\end{document}